\documentclass[12pt,letterpaper]{article}

\usepackage[utf8]{inputenc}
\usepackage[T1]{fontenc}
\usepackage{geometry}
\usepackage{setspace}
\usepackage{amsmath, amssymb, amsthm}
\usepackage{graphicx}
\usepackage{booktabs}
\usepackage{multirow}
\usepackage{caption}
\usepackage{array}
\usepackage{tabularx}
\usepackage[hidelinks]{hyperref}
\usepackage{xurl}
\usepackage{fancyhdr}
\usepackage{titlesec}
\usepackage{enumitem}
\usepackage{footnote}
\usepackage{xcolor}
\usepackage{tikz}
\usepackage{pdflscape}
\usepackage{rotating}
\usepackage{longtable}
\usepackage{textcomp}        
\usepackage[expansion=false]{microtype}  
\usepackage{adjustbox}       
\usepackage{float}           
\usepackage{needspace}       

\titleformat{\section}{\large\bfseries}{\thesection.}{0.5em}{}[\vspace{-0.3em}\rule{\linewidth}{0.8pt}]
\titleformat{\subsection}{\normalsize\bfseries}{\thesubsection}{0.5em}{}
\titleformat{\subsubsection}{\normalsize\bfseries}{\thesubsubsection}{0.5em}{}

\newtheorem{proposition}{Proposition}
\newtheorem{property}{Property}
\newtheorem{definition}{Definition}

\newcommand{\bt}{b_{t-1}}
\newcommand{\gstar}{g^*_t}

\newcommand{\gnomstar}{{g^{n*}_t}}

\newcommand{\ebar}{\bar{e}}

\providecommand{\yen}{\textyen{}}    

\title{%
  \textbf{JFR-rg: A New Macroeconomic Framework for\\
  High-Debt, Low-Growth Economies under Financial Repression}\\[0.5em]
  \large The Japanese Financial Repression \textit{r-g} Stability Model:\\
  Evidence from Real-Time FRED Data (2013--2026)
}
\author{Hirofumi Wakimoto}
\date{}

\begin{document}

\maketitle
\thispagestyle{empty}

\begin{abstract}
\sloppy
\noindent
Standard macroeconomic frameworks have correctly identified Japan's government
debt---now exceeding 240\% of GDP---as carrying substantial fiscal risk. Yet
real-time FRED data from 2013 to 2026 present an empirical record that invites
a complementary, regime-conditional perspective: debt ratios have stabilized,
nominal GDP has exceeded 670 trillion yen (SAAR), and unemployment has remained
approximately 2.6--2.7\% (Dec 2025--Jan 2026). This paper formalizes the
institutional and policy channels consistent with this post-2013 configuration
through the Japanese Financial Repression $r$-$g$ (JFR-rg) model. Building on
Blanchard (2019), the framework incorporates an empirically identified
``financial repression bias'' ($\varepsilon_t \equiv \pi_t - r^n_t \in
\mathbb{R}$, directly observable from FRED) together with a nonlinear
exchange-rate channel ($\Delta e_t$). Three theoretical contributions extend
the existing literature: (i)~the \emph{Debt Sustainability Corridor}, a
geometric characterization of stability in $(\varepsilon_t,\gnomstar)$ space;
(ii)~the \emph{Normalization Ratchet}, a path-dependence theorem showing that
temporary policy errors can generate persistently higher debt trajectories; and
(iii)~the \emph{Captive Financial System Parameter} ($\varphi_t$), which
endogenizes the institutional precondition for JFR-rg stability.
Appendices~H through~L provide the supporting empirical architecture,
including subsample analysis, structural-break tests, VAR, ARDL, and
Local Projections, proposition-level fair empirical tests, illustrative
evidence on the critical captive threshold $\bar{\varphi}$, and
international placebo tests and regime-conditioned nonlinear exercises; these
do not uniquely identify the JFR-rg causal channel, but demonstrate that
the framework's mechanism claims are empirically disciplined and falsifiable.
Within Layer~(L1), Proposition~1 (Base Effect Lever) and the IOER arithmetic
of Proposition~4 (Normalization Trap) follow directly from the consolidated
government budget identity; Propositions~2--3 and the growth-channel component
of Proposition~4 combine this identity with a small number of structural
assumptions (a demographic ceiling on $\gnomstar$; a piecewise exchange-rate
pass-through equation~\eqref{eq:yen_growth} with $\alpha > 0$, $\beta > 0$)
that are empirically calibrated rather than behaviorally identified.
Causal identification concerns apply exclusively to the corroborating
empirical Layer~(L2) housed in Appendices~H--L.
Counterfactual simulations illustrate a Normalization Trap: under the model's
maintained assumptions, aggressive rate hikes can produce debt dynamics that
are counterproductive relative to their stated objective. For high-debt,
low-growth economies sharing Japan's institutional characteristics,
strategically deploying the resulting ``Repression Dividend'' into
productivity-enhancing investment may therefore represent not merely a
temporary anomaly but a regime-contingent equilibrium possibility---conditional
on the institutional prerequisite $\varphi_t \geq \bar{\varphi}$ being
maintained. All propositions in this paper are conditional implications of the
model under this institutional scope condition; they do not constitute
unconditional claims about Japan's fiscal sustainability. The framework is
offered not as a universal replacement for mainstream debt analysis, but as a
complementary analytical lens for Japan's transition-era debt dynamics whose
failure conditions are observable in real time from public data. The
qualitative conclusions are robust to both the BoJ FY2025 projection
calibration ($\pi_t = 2.7\%$, $\varepsilon_t \approx +0.5\%$) and the January
2026 realized-CPI calibration ($\pi_t = 1.5\%$,
$\varepsilon_t \approx -0.7\%$); with $\gnomstar$ treated as a structural
parameter independent of realized $\pi_t$, the two calibrations imply the same
signed distance of the March 2026 operating point from the Stability Corridor
boundary (Section~\ref{sec:corridor}). What differs between calibrations is
the institutional narrative---whether $\varepsilon_t > 0$ reflects an
\emph{active} repression channel or whether a real-rate gap has temporarily
opened---not the directional policy implications.

\vspace{0.5em}
\noindent\textbf{Keywords:} Financial Repression, Debt Dynamics, $r$-$g$ Spread,
Japan, Yield Curve Control, Exchange Rate, Path-Dependence, Captive Financial System,
Debt Sustainability Corridor\\
\textbf{JEL:} E44, E52, E62, F31, H63
\end{abstract}

\newpage
\tableofcontents
\newpage

\section{Introduction}

The Japanese economy has long presented a challenging case for mainstream
macroeconomic analysis. Standard frameworks---including the Mundell-Fleming model,
New Keynesian DSGE models, and successive IMF Article IV assessments---correctly
identify sovereign debt-to-GDP ratios approaching 240\% as carrying substantial fiscal
risk, and the concerns they raise about Japan's long-run sustainability remain
legitimate and important.

Yet real-time data from the Federal Reserve Economic Data (FRED) system spanning 2013
to March 2026 present a case that invites a complementary perspective. Japan has
maintained an unemployment rate of approximately 2.6--2.7\%
(December 2025: 2.6\%; January 2026: 2.7\%), achieved record nominal GDP
expansion, and avoided a sovereign debt spiral over this period. The ``lost decades''
narrative---associated with premature consumption tax hikes, an overvalued yen, and
excessively tight monetary conditions---has given way to a more stable configuration
since 2013.

How might a highly indebted, demographically contracting economy sustain this
stability? This paper argues that the observed outcomes are consistent with a specific
set of institutional and policy mechanisms---centered on the mechanical interaction
between a large debt base, sustained negative real interest rates, and controlled
currency depreciation---that standard models do not explicitly incorporate. We do not
claim that mainstream frameworks are invalid; rather, we propose that in Japan's
specific institutional setting, additional stabilizing channels become quantitatively
significant and merit formal modeling.

\medskip
\noindent\textbf{Purpose and scope.}
The purpose of this paper is not to contest standard debt dynamics, but to make
explicit the additional stabilizing channels that become quantitatively significant
under Japan's specific institutional conditions---and to do so in an observable,
replicable form. The JFR-rg framework is best understood as a \emph{complementary}
analytical lens, not a rival theory: where mainstream frameworks correctly identify
the risks of high debt, JFR-rg identifies the institutional conditions under which
those risks are temporarily compressed and the channels through which compression
can fail. The framework does not claim universal applicability; its propositions are
explicitly conditional on two empirically verifiable scope conditions
(SC1 and SC2, developed in Section~3) that are specific to Japan's post-2013 configuration.
Readers should interpret the results as regime-diagnostic rather than predictive:
the model specifies what must hold for stability to obtain, not that stability will
obtain unconditionally.
Because the scope conditions are expressed in terms of observable institutional
variables---the domestic holding share $\varphi_t$ and the exchange-rate stability
window---rather than Japan-specific constants, the JFR-rg framework is in principle
applicable to any high-debt, low-growth economy that develops a comparable
institutional configuration satisfying SC1 and SC2; such application would require
recalibration of the scope-condition thresholds to the relevant country context.

\medskip
\noindent\textbf{Interpretive lenses.}
The JFR-rg framework may be read in more than one way, depending on the reader's
disciplinary priors, and the paper does not require any single reading.
For a \emph{fiscal macroeconomist}, it is a regime-diagnostic debt-accounting
framework that clarifies when a large debt stock amplifies a negative
interest-growth differential.
For an \emph{$r < g$ theorist}, it is an institutional amplification of a familiar
debt-dynamics mechanism under engineered financial repression.
For an \emph{international macroeconomist}, it is a nonlinear exchange-rate filter
linking depreciation, nominal growth, and household purchasing-power limits.
For a \emph{monetary-policy reader}, it is a normalization-risk framework in which
policy tightening affects debt dynamics through both rates and the captive financial
system simultaneously.
For an \emph{empirical macroeconomist}, it is a falsifiable mechanism hypothesis
that organizes stylized facts, competing explanations, and discriminating tests.
These readings are not mutually exclusive; the architecture of the paper---scope
conditions, alternative-explanation table, Local Projections, and failure modes---is
designed to engage each of them.

This empirical orientation extends beyond descriptive monitoring in the main
text. Appendices~H through~L provide the broader empirical support structure
for the paper's mechanism claims, including subsample comparisons,
structural-break analysis, VAR, ARDL, and Local Projection evidence,
proposition-level fair empirical tests, illustrative evidence on the critical
captive threshold $\bar{\varphi}$, and international placebo tests and
regime-conditioned nonlinear robustness exercises. These components are not
offered as definitive proof of a unique causal channel; rather, they are
designed to show that the JFR-rg mechanism hypothesis is empirically serious,
falsifiable, and competitively informative relative to alternative accounts.

\medskip
\noindent\textbf{Minimal claims and non-claims.}
Whatever the reader's preferred interpretation, the paper makes four minimal claims
that do not depend on any single reading.
\emph{First}, Japan's post-2013 macro-fiscal configuration presents an empirical
puzzle that standard discussions often describe but do not fully organize in a
unified debt-dynamics language.
\emph{Second}, JFR-rg provides a regime-conditional framework for organizing that
configuration when the captive-system condition (SC1) and exchange-rate stability
condition (SC2) are both satisfied; outside that domain, standard debt dynamics
apply without modification.
\emph{Third}, the framework does not uniquely identify the JFR-rg causal channel
against all competing explanations; it specifies a disciplined mechanism hypothesis
together with discriminating tests and falsification paths.
\emph{Fourth}, the framework does not imply permanent safety: it explicitly
identifies the channels---inflation undershoot, de-captivation, import-cost
overshoot, political-fiscal reaction---through which the observed stability can fail,
as discussed in the concluding section and appendices.
Accordingly, this paper should be read neither as a universal replacement for
mainstream debt analysis nor as a claim of unconditional fiscal sustainability
for Japan. In particular, the Corridor Width $|W_{2026}| \approx 0.024\%$
is smaller than the linear approximation error ($\approx 0.056\%$; Appendix~A)
and the measurement uncertainty of the underlying data series; the March 2026
operating point should therefore be interpreted as \emph{boundary proximity}
rather than a confirmed exceedance, under both the BoJ-projection and the
January 2026 realized-CPI calibrations (Section~\ref{sec:corridor}).

We formalize these regularities into the JFR-rg model, which extends Blanchard (2019)
in two primary empirical dimensions and three novel theoretical dimensions. On the
empirical side, the financial repression bias $\varepsilon_t$ is
\emph{directly identified from observable FRED data} as the real interest rate gap
($\varepsilon_t \equiv \pi_t - r^n_t \in \mathbb{R}$), eliminating reliance on
unobservable natural rate estimates; and the yen exchange rate is introduced as a
nonlinear shock absorber bounded by a stability threshold ($\ebar$). On the
theoretical side, the model contributes: (i)~the \emph{Debt Sustainability Corridor},
a geometric characterization of the stability region in policy space; (ii)~the
\emph{Normalization Ratchet}, a path-dependence theorem establishing that temporary
normalization errors generate highly persistent debt-ratio gaps relative to
baseline; and (iii)~the
\emph{Captive Financial System Parameter} ($\varphi_t$), which endogenizes the
institutional precondition for sustained financial repression.

\medskip
\noindent\textbf{A note on the YCC and post-YCC regimes.}
The sample period includes both the explicit Yield Curve Control regime
(2016--March 2024) and the post-YCC regime (March 2024--present). In the former,
$\varepsilon_t > 0$ is directly attributable to an administrative yield ceiling;
in the latter, the same observable is better understood as a
\emph{balance-sheet backstop effect}, in which the Bank of Japan's continued
holdings of approximately 46\% of outstanding JGBs compress the term premium
below the market-clearing level even without an explicit target
(Section~\ref{sec:repression_bias}). The JFR-rg framework's observable
$\varepsilon_t$ captures both mechanisms under a single measurable variable; what
the framework cannot resolve without further structural identification is
whether, in the post-YCC period, $\varepsilon_t > 0$ reflects residual policy
influence or a new market equilibrium consistent with $\varphi_t \approx 0.90$.
The qualitative conclusions of the paper are robust to either interpretation;
the institutional narrative differs.

The central thesis rests on four propositions derived \emph{within the JFR-rg model
framework}. These results hold under the model's specific assumptions---in particular,
the captive financial system assumption ($\varphi_t$ near 1), the calibrated exchange
rate regime, and the ceteris paribus treatment of the primary deficit. They should be
read as conditional implications of the model, not as universal laws, and each
proposition admits potential counterexamples when the underlying assumptions are
relaxed. The model's core logic is developed in Section~3 and its quantitative
implications explored in Section~4; the propositions are restated here for
orientation.

\medskip
\noindent\textbf{Note on identification.}
Layer~(L1) is the linearized accounting identity of the consolidated government
budget constraint (Section~\ref{sec:identification}), which holds for
\emph{any} realized values of $r^n_t$, $g^n_t$, $b_{t-1}$, and $d_t$,
regardless of how those values were generated. Proposition~1 and the IOER
arithmetic component of Proposition~4 follow from this identity alone.
Propositions~2--3 and the growth-channel component of Proposition~4 combine
the identity with a small number of empirically calibrated structural
assumptions---notably the demographic ceiling on structural potential nominal
growth $\gnomstar$ and the piecewise pass-through of $\Delta e_t$ to $g^n_t$
via equation~\eqref{eq:yen_growth} with $\alpha > 0$ and $\beta > 0$---whose
parameters are taken from published estimates (Haba et al., 2025) and reported
with full sensitivity analyses in Appendix~\ref{app:sensitivity}. Causal
identification applies exclusively to Layer~(L2), the corroborating empirical
exercises; applying causal identification standards to the Layer~(L1)
accounting identity or to the calibration-level assumptions of Propositions
2--3 is a category error.

\begin{proposition}[Base Effect Lever]
\label{prop:base_effect}
Within the JFR-rg framework, a negative $r^n - g^n$ spread at high debt levels
generates quantitatively significant automatic debt compression. At a spread of
$-0.8\%$ and $b_{t-1} = 2.40$ (Japan's approximate 2026 ratio), the annual
compression equals $0.8\% \times 2.40 = 1.92\%$ of GDP. Whether this channel
dominates depends on the primary deficit and on whether the captive financial system
assumption holds.
\end{proposition}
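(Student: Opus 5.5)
The plan is to derive Proposition~\ref{prop:base_effect} directly from Layer~(L1), the consolidated government budget identity, since the introduction states that this proposition follows from the identity alone. I start from the exact nominal debt-accumulation equation
\[
B_t = (1+r^n_t)B_{t-1} + D_t ,
\]
where $D_t$ is the primary deficit. Dividing by nominal GDP $Y_t = (1+g^n_t)Y_{t-1}$ gives the exact ratio law of motion
\[
b_t = \frac{1+r^n_t}{1+g^n_t}\,\bt + d_t .
\]
Subtracting $\bt$ from both sides yields
\[
\Delta b_t = \frac{r^n_t-g^n_t}{1+g^n_t}\,\bt + d_t .
\]
To first order in $g^n_t$ this becomes $\Delta b_t \approx (r^n_t-g^n_t)\bt + d_t$. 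This is the linearized identity that the paper treats as Layer~(L1), and it holds for any realized values of $r^n_t$, $g^n_t$, $\bt$ and $d_t$.

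Next I isolate the \emph{base effect} term $(r^n_t-g^n_t)\bt$ and define the automatic compression as $-(r^n_t-g^n_t)\bt$. Whenever the spread is negative and $\bt>0$, this quantity is strictly positive. It is also linear in $\bt$, so for a given spread it scales one-for-one with the debt stock. That scaling is the ``lever'' in the proposition's name: the same spread produces more compression the larger the debt base. Substituting the stated calibration, spread $=-0.008$ and $\bt=2.40$, gives a compression of $0.008\times 2.40=0.0192$, that is, $1.92\%$ of GDP per year.

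The conditional clause needs its own step. From the same identity, $\Delta b_t<0$ if and only if
\[
d_t < -(r^n_t-g^n_t)\bt = 1.92\%.
\]
So whether the base effect ``dominates'' is exactly a comparison between the primary deficit and $1.92\%$ of GDP. The captive-system caveat enters because the identity takes $r^n_t$ as given. If $\varphi_t$ falls, $r^n_t$ can rise toward market-clearing levels, the sign of the spread can flip, and the lever then operates in reverse.

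The main obstacle is being precise about the approximation rather than the algebra. The exact compression is $0.0192/(1+g^n_t)$, so I would bound the linearization error by $|r^n_t-g^n_t|\,\bt\,g^n_t/(1+g^n_t)$. I would check that this bound is small compared with $1.92\%$ at plausible values of $g^n_t$. The statement's figure should then be presented as the first-order value, consistent with the linear-approximation treatment in Appendix~A.
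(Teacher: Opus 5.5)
Your proposal is correct and follows the paper's own route: the paper obtains Proposition~\ref{prop:base_effect} from Layer~(L1), i.e.\ Steps~1--2 of Appendix~A (divide the budget constraint by $Y_t=(1+g^n_t)Y_{t-1}$ and linearize $(1+r^n_t)/(1+g^n_t)\approx 1+(r^n_t-g^n_t)$), and then substitutes $0.008\times 2.40$. Your error bound $0.008\times 2.40\times 0.03/1.03\approx 0.056\%$ reproduces the Appendix~A figure; the one refinement is that under the paper's consolidated identity~\eqref{eq:standard_dynamics} the dominance condition reads $d_t + r^R_t - s_t < 1.92\%$, which reduces to your $d_t<1.92\%$ at the baseline values $r^R_t\approx 0$, $s_t\approx 0$.
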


\begin{proposition}[Repression Imperative]
\label{prop:repression_imperative}
Within the JFR-rg framework, maintaining $r^n < \pi$ (i.e., $\varepsilon_t > 0$) is a
practically necessary condition for the stability condition~\eqref{eq:stability_condition}
to hold in hyper-indebted, demographically constrained regimes. Given Japan's
structural ceiling on potential nominal growth $\gnomstar$ imposed by demographic
decline, the repression channel contributes a component of the negative $r^n - g^n$
spread that $\gnomstar$ alone may not sustain. The model does not characterize
yield curve interventions as unambiguously beneficial or distortionary; that
normative judgment depends on factors---including distributional consequences and
long-run capital allocation---outside the model's scope.
\end{proposition}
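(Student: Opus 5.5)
The plan is to start from the Layer~(L1) linearized consolidated budget identity, $b_t - b_{t-1} \approx (r^n_t - g^n_t)\,b_{t-1} + d_t$. The stability condition~\eqref{eq:stability_condition} will be read as requiring $\Delta b_t \le 0$, equivalently $(g^n_t - r^n_t)\,b_{t-1} \ge d_t$. Next, decompose the spread using the definition $\varepsilon_t \equiv \pi_t - r^n_t$ together with the real-growth decomposition $g^n_t = g_t + \pi_t$:
\[
r^n_t - g^n_t = -\varepsilon_t - g_t .
\]
This isolates the repression channel $\varepsilon_t$ from the real-growth channel $g_t$. The stability requirement then becomes
\[
\varepsilon_t + g_t \ge d_t / b_{t-1}.
\]
This single inequality is what the argument is built on.

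The second step brings in the structural assumption of a demographic ceiling. The paper treats structural potential nominal growth $\gnomstar$ as bounded, so the real component $g_t$ is capped by a demographically determined $\bar g$ that is near zero or only modestly positive, with trend inflation anchored near target. Substituting the ceiling gives a lower bound on the required repression:
\[
\varepsilon_t \ \ge\ \frac{d_t}{b_{t-1}} - \bar g .
\]
I will then argue case by case.
\begin{itemize}
\item If the primary deficit is non-negligible ($d_t > \bar g\, b_{t-1}$), this bound forces $\varepsilon_t > 0$, which is exactly the statement that $r^n < \pi$ is necessary.
\item In the boundary case $d_t \approx \bar g\, b_{t-1}$, stability holds only with zero margin, so any adverse shock requires $\varepsilon_t > 0$.
\end{itemize}
This is the sense of \emph{practically} necessary: the condition is necessary up to a non-generic knife-edge.

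The third step quantifies the claim at the 2026 calibration. I will take $b_{t-1} = 2.40$, the observed primary deficit, and a $\gnomstar$ constrained by labour-force decline. The check is that $\bar g\, b_{t-1}$ falls short of $d_t$, so that the real-growth contribution alone cannot deliver the compression in Proposition~\ref{prop:base_effect}. The residual must come from $\varepsilon_t$. Because $b_{t-1}$ is large, each unit of $\varepsilon_t$ is multiplied by $2.40$. This is the sense in which the repression channel supplies a component of the negative spread that $\gnomstar$ alone may not sustain.

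The main obstacle is the second step: whether the demographic ceiling on $\gnomstar$ genuinely binds. That depends on a calibrated structural assumption, not on the accounting identity. In particular, it depends on whether inflation could raise $g^n$ without raising $r^n$, since that would be $\varepsilon$ by definition anyway. The same step must also handle the January 2026 calibration with $\varepsilon_t < 0$. There I will argue the proposition is consistent with the data, because the operating point then sits at the Corridor boundary rather than inside it, and this supports necessity. To make the claim precise, I will state the result as a conditional implication on the set $\{d_t > \bar g\, b_{t-1}\}$ and defer the calibration of $\bar g$ to Appendix~\ref{app:sensitivity}.
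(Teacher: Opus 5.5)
Your reduction to $\varepsilon_t + g_t \ge d_t/b_{t-1}$ is the paper's corridor inequality $\varepsilon_t + \gnomstar \ge \pi_t + (d_t-s_t)/b_{t-1}$, rewritten with $g_t \equiv g^n_t - \pi_t$. The paper gives no separate proof. Its support is this inequality evaluated at the calibration: the boundary value is $\varepsilon^* = \pi_t + (d_t-s_t)/b_{t-1} - \gnomstar \approx 2.7 + 0.833 - 3.0 = 0.533\% > 0$, and it rises one-for-one as $\gnomstar$ falls (Appendix~\ref{app:sensitivity}, Panels~B--C).

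\textbf{The gap is in your Step~2.} The demographic ceiling bounds $\gnomstar$, i.e.\ real potential growth plus \emph{trend} inflation. The paper explicitly treats $\gnomstar$ as independent of \emph{realized} $\pi_t$, so the ceiling does not bound $g^n_t - \pi_t$. If $g^n_t = \bar g + \pi^{\text{trend}}$, then $g^n_t - \pi_t = \bar g + (\pi^{\text{trend}} - \pi_t)$. This exceeds $\bar g$ whenever CPI undershoots trend, so your cap fails in exactly that case.

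Realized $g^n_t$ also carries $\alpha\,\Delta e_t$. Unless you evaluate at $\Delta e_t = 0$, as the corridor does, sustained depreciation inside the window can stand in for repression. For example, at $\Delta e_t = 15$ JPY/USD, $\alpha\,\Delta e_t = 0.75\% > 0.533\%$.

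The correct necessity criterion is therefore $\pi_t + (d_t - s_t)/b_{t-1} > \gnomstar$ at $\Delta e_t = 0$. This is a condition on realized inflation, not on demographics alone.

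\textbf{This is why your handling of the January~2026 calibration is backwards.} Hold $\gnomstar = 3.0\%$ fixed; this is the convention that places the $\pi_t = 1.5\%$ point at the corridor boundary. The bound is then $\varepsilon^* \approx 1.5 + 0.833 - 3.0 \approx -0.67\%$. Any $r^n_t \le 2.17\%$ gives $\Delta b_t \le 0$, including $r^n_t \in (1.5\%, 2.17\%]$, where $\varepsilon_t < 0$. Boundary proximity at $\varepsilon_t = -0.7\%$ therefore shows that $\varepsilon_t > 0$ is \emph{not} necessary under that calibration.

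Under your own convention ($g^n_t$ moving with $\pi_t$, $\bar g \approx 0.3\%$), the point is about $1.2$~pp short of the boundary. Either way, you cannot borrow the paper's boundary result.

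To repair the argument:
\begin{itemize}
\item State the result as a conditional implication on $\pi_t + (d_t - s_t)/b_{t-1} > \gnomstar$ with $\Delta e_t = 0$.
\item Verify it at the BoJ-projection calibration, and concede that it fails at the realized-CPI calibration.
\item Drop the ``non-generic knife-edge'' gloss. The complement of your set is an open region, so ``practically necessary'' can only mean necessary within the calibrated scope.
\end{itemize}
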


\begin{proposition}[Nonlinear Yen Stabilizer]
\label{prop:yen_stabilizer}
Mild yen depreciation ($\Delta e_t \leq \ebar$) supports nominal growth through the
linear channel $\alpha\,\Delta e_t$. For severe depreciation ($\Delta e_t > \ebar$),
the import inflation channel dominates, compressing real household purchasing power
through the quadratic penalty term. The stabilizing window is bounded by $\ebar$.
\end{proposition}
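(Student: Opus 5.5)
The plan is to work directly from the piecewise pass-through equation~\eqref{eq:yen_growth}. I take it to have the form
\[
g^n_t = \gnomstar + \alpha\,\Delta e_t - \beta\,\mathbf{1}\{\Delta e_t > \ebar\}\,(\Delta e_t - \ebar)^2,
\qquad \alpha>0,\ \beta>0,
\]
and I will carry it through the linearized debt accumulation identity of Layer~(L1). In that identity the ratio changes by $\Delta b_t \approx (r^n_t - g^n_t)\,b_{t-1} + d_t$. Holding $r^n_t$ and $d_t$ fixed (ceteris paribus, as in Proposition~\ref{prop:base_effect}), the exchange-rate channel enters only through $g^n_t$. So the statement reduces to the sign and shape of $G(\Delta e) \equiv g^n_t(\Delta e)$ and of $-b_{t-1}\,G(\Delta e)$.

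\textbf{Step 1: the mild regime.} For $\Delta e_t \le \ebar$ the indicator vanishes, so $G'(\Delta e)=\alpha>0$. Each unit of depreciation raises nominal growth by $\alpha$ and lowers $\Delta b_t$ by $\alpha\,b_{t-1}$. This is the linear support claim, amplified by the Base Effect Lever of Proposition~\ref{prop:base_effect}.

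\textbf{Step 2: the severe regime.} For $\Delta e_t>\ebar$ we have $G'(\Delta e)=\alpha-2\beta(\Delta e-\ebar)$ and $G''=-2\beta<0$, so $G$ is strictly concave there. The marginal growth effect turns negative once $\Delta e_t > \ebar + \alpha/(2\beta)$. The net contribution relative to zero depreciation, $\alpha\Delta e-\beta(\Delta e-\ebar)^2$, becomes negative beyond the larger root
\[
\Delta e^{\dagger} = \ebar + \frac{\alpha + \sqrt{\alpha^2 + 4\alpha\beta\,\ebar}}{2\beta}.
\]
Because the penalty is quadratic and the benefit linear, the penalty eventually dominates for any $\beta>0$. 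I will interpret the $-\beta(\cdot)^2$ term as import-cost erosion of real household purchasing power, so that "dominates" is read as the penalty's marginal effect exceeding $\alpha$.

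\textbf{Step 3: continuity and the window.} $G$ is continuous at $\ebar$, and so is $G'$, since the penalty term and its derivative both vanish at $\ebar$. The stabilizing region in which $G$ is increasing is therefore exactly $\Delta e_t \le \ebar + \alpha/(2\beta)$. Pure linear support holds precisely on $\Delta e_t\le\ebar$, which is the sense in which the window is bounded by $\ebar$.

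\textbf{Main obstacle.} The hard part is reconciling "dominates" with the fact that marginal growth stays positive on the interval $(\ebar,\ebar+\alpha/(2\beta))$. I would resolve this by stating the claim in terms of the household real-purchasing-power component. That component is penalized from the first unit past $\ebar$, while the nominal-growth optimum sits at $\ebar+\alpha/(2\beta)$. For the calibrated $\beta$ (Haba et al., 2025) this optimum is close to $\ebar$. I would check that closeness numerically, citing the sensitivity analysis in Appendix~\ref{app:sensitivity}.
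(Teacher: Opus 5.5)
Your Steps 1--3 are correct, and in substance they are all the paper itself offers. The paper gives no separate proof of this proposition. It reads each clause off the functional form \eqref{eq:yen_growth} and classifies the result as a calibration-level structural assumption: the linear term governs for $\Delta e_t \le \ebar$, the penalty switches on only above $\ebar$, and the ``window'' is the penalty-free region of SC2. The question of when the penalty actually overturns debt stability is handled separately, by the critical curve $\beta^*(\ebar)$ in \eqref{eq:beta_critical}. You are also right that the literal claim that the import channel ``dominates'' for every $\Delta e_t > \ebar$ does not follow from \eqref{eq:yen_growth}. The penalty and its derivative both vanish at $\ebar$, so $G' > 0$ on $(\ebar, \ebar + \alpha/(2\beta))$.

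The gap is your proposed repair, which fails for three reasons.
\begin{enumerate}
\item There is no calibrated $\beta$ to check against. Haba et al.\ (2025) are used only for $\alpha$ and for the CPI pass-through $\gamma$. The paper states explicitly that $\beta$ and $\ebar$ are not identified from the sample (Appendix~\ref{app:sensitivity}, Panel~D).
\item The closeness you hope to verify fails on the paper's own grid. With $\alpha = 0.050$ and $\beta \in [0.001, 0.010]$, the growth-maximizing point lies $\alpha/(2\beta) \in [2.5, 25]$ JPY/USD beyond $\ebar$, which is comparable to or larger than $\ebar$ itself. Panel~D shows this directly: with $\ebar = 5$ and $\beta = 0.001$, moving from $\Delta e = 15$ to $\Delta e = 20$ lowers $\Delta b_t$ from $-1.48$ to $-1.78$. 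Deeper depreciation well past $\ebar$ still stabilizes.
\item Recasting the claim in terms of a ``household purchasing-power component'' does not help. The model has no such component apart from the penalty term inside $g^n_t$. Saying it is ``penalized from the first unit past $\ebar$'' only restates that the penalty is active, which is not dominance.
\end{enumerate}

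The defensible statement is the corrected one your Steps 1--3 already prove:
\begin{itemize}
\item linear support holds exactly on $\Delta e_t \le \ebar$;
\item the penalty is active above $\ebar$;
\item the penalty dominates at the margin beyond $\ebar + \alpha/(2\beta)$;
\item it dominates in level beyond $\Delta e^{\dagger}$.
\end{itemize}
``Dominates'' at $\ebar$ itself is an interpretive gloss, not a theorem of \eqref{eq:yen_growth}; the paper's microfoundations paragraph simply defines $\ebar$ that way. You should state this explicitly rather than try to rescue the literal claim numerically.
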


\begin{proposition}[Normalization Trap]
\label{prop:norm_trap}
Aggressive policy rate hikes operate through two simultaneous destabilizing channels
---reducing $\varepsilon_t$ and depressing nominal growth via yen appreciation---that
together flip the $r$-$g$ spread positive. For a 240\% base, a $+3.2\%$ $r$-$g$
spread generates 7.7\% of GDP in additional interest-growth debt accumulation per year
(11.07\% total, including the widened primary deficit of 3.39\%).
\end{proposition}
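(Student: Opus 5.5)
The argument for Proposition~\ref{prop:norm_trap} starts from the Layer~(L1) linearized consolidated budget identity,
\[
\Delta b_t \approx (r^n_t - g^n_t)\,\bt + d_t ,
\]
which holds for any realized values of $r^n_t$, $g^n_t$, $\bt$, and $d_t$. As with Proposition~\ref{prop:base_effect}, the quantitative part of the statement then follows as arithmetic. The qualitative part, that the spread flips positive, needs the two structural channels.

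\textbf{Step 1 (rate channel).} An aggressive hike raises $r^n_t$ with $\pi_t$ approximately unchanged in the short run. By the definition $\varepsilon_t \equiv \pi_t - r^n_t$, this lowers $\varepsilon_t$ one-for-one, and it can push it negative. I will also invoke the IOER arithmetic. Under consolidation of the central bank with the government, the interest paid on excess reserves raises the effective $r^n_t$ on the reserve-financed part of the debt almost immediately. This avoids waiting for the JGB stock to roll over.

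\textbf{Step 2 (growth channel).} The hike induces yen appreciation, so $\Delta e_t < 0$. I will substitute this into the piecewise pass-through equation~\eqref{eq:yen_growth}. The input lies in the linear branch ($\Delta e_t \le \ebar$) with $\alpha>0$, so $g^n_t$ falls by $\alpha|\Delta e_t|$. It is further capped above by the demographic ceiling $\gnomstar$. Combining the two steps, the spread changes by
\[
\Delta(r^n-g^n) = \Delta r^n + \alpha|\Delta e_t| > 0 .
\]
For a sufficiently large hike this carries the spread from the baseline $-0.8\%$ of Proposition~\ref{prop:base_effect} to a positive value. Both terms have the same sign, which is why the two channels reinforce each other.

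\textbf{Step 3 (calibrated magnitude).} At the calibrated counterfactual spread of $+3.2\%$, the interest-growth term is $0.032\times 2.40 = 0.0768$, or about $7.7\%$ of GDP. Adding the widened primary deficit $d_t=3.39\%$ gives the stated $11.07\%$. Here I will show that the $3.39\%$ comes from the scenario's assumed deficit response, namely lower nominal revenue through the growth channel. I will not derive it from the identity.

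The main obstacle is Step~2 together with the calibration of $+3.2\%$. The identity fixes neither the size of the appreciation nor its pass-through, so the sign flip and the magnitude rest on the calibrated $\alpha$ from Haba et al.\ (2025) and on the counterfactual hike path. My plan is to state the result conditionally on the hike being large enough that $\Delta r^n + \alpha|\Delta e_t| > 0.8\%$. I would also point to Appendix~\ref{app:sensitivity} to show that the sign conclusion is robust over the reported range of $\alpha$, even though the exact $7.7\%$ figure is specific to the calibration.
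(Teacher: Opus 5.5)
Your two-channel sign argument and the arithmetic $0.032\times 2.40 = 7.68\%$, $7.68\% + 3.39\% = 11.07\%$ agree with the paper, which reads the flip off condition~\eqref{eq:stability_condition} exactly as you do. The $+3.2\%$ is simply $r^n_t = 2.2\% + 1.5\% = 3.7\%$ against $g^n_t = 3.0\% + 0.050\times(-50) = 0.5\%$, so your criterion holds with $\Delta r^n + \alpha|\Delta e_t| = 4.0\% > 0.8\%$.

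The problem is Step~3's account of the widened deficit. In the paper the $3.39\%$ is not a revenue response at all. It is the IOER term of the consolidated constraint~\eqref{eq:consolidated_bc}: $d_t = 2.0\% + 1.5\%\times(500+120)/670 \approx 3.39\%$. That is about $1.12\%$ of GDP of interest on roughly 500 trillion yen of reserves, plus $0.27\%$ from rolling over about 120 trillion yen of bills, scaled by 670 trillion yen of nominal GDP (Section~\ref{sec:scenC}). This is exactly the ``IOER arithmetic component'' of the proposition, which the paper places in the accounting layer and explicitly calls not a behavioral assumption. Your plan to show that the $3.39\%$ comes from lower nominal revenue cannot be carried out, because the model has no revenue block from which a $1.39$~pp widening could be derived.

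Step~1 compounds this by routing IOER into an ``effective $r^n_t$.'' The $+3.2\%$ spread you then use is the 10-year JGB yield minus $g^n_t$ and contains no reserve cost. On your account the $3.39\%$ does not contain it either, so under your bookkeeping the reserve cost vanishes from the $11.07\%$ total. If you genuinely raised the effective rate instead, neither the $3.2\%$ nor the $7.7\%$ would survive.

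The repair is to follow equation~\eqref{eq:standard_dynamics}. Keep $r^n_t$ as the JGB yield, and carry $r^R_t = i^R_t R_{t-1}/Y_{t-1}$ as a separate additive term. Then fold it, together with the bill-rollover cost, into $d_t$. This affects only the quantitative clause: Section~\ref{sec:ioer_robustness} shows that the spread flip and $\Delta b_t > 0$ persist even at zero pass-through.

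A minor point: $\gnomstar$ is not an upper cap on $g^n_t$ in equation~\eqref{eq:yen_growth}, since depreciation inside the window pushes $g^n_t$ above it. It is harmless here because $\Delta e_t < 0$.
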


The paper is structured as follows. Section~2 reviews the literature and explicitly
positions the JFR-rg model against its closest antecedents. Section~3 presents the
stylized facts, introduces the paper's empirical discriminating evidence, and
formalizes the regime logic of the JFR-rg framework, including the Debt
Sustainability Corridor, the Normalization Ratchet, and the Captive Financial
System Parameter. Section~4 provides simulations and counterfactual prediction
exercises. Section~5 discusses policy implications and robustness.
Section~6 concludes. Appendices~A--G contain the mathematical derivations,
calibration details, historical mapping, and sensitivity analyses, while
Appendices~H--L provide the broader empirical support structure for the paper's
mechanism claims, including subsample and structural-break evidence,
VAR/ARDL/Local Projection results, fair empirical tests, evidence on the
critical captive threshold, and international placebo tests and regime-conditioned
nonlinear exercises.

\paragraph{Falsifiability and Scope.}
The JFR-rg framework makes claims that are more directly falsifiable than
those of $r^*$-based sustainability assessments. Where standard frameworks
often rely on unobservable natural rates whose estimates can vary materially
across specifications, the JFR-rg framework anchors its stability condition to
three directly observable FRED series ($\varepsilon_t$, $\varphi_t$, and
$\Delta e_t$) and specifies the institutional conditions under which debt
compression should weaken or fail. Three observable conditions would
constitute evidence against the framework's maintained claims:
(i)~a sustained decline in $\varphi_t$ toward and eventually below an
empirically relevant early-warning benchmark \emph{without} the predicted
emergence of a sovereign risk premium $\rho_t$; (ii)~a rate-normalization
episode in which $r^n_t-g^n_t$ turns persistently positive \emph{without}
the debt accumulation implied by Proposition~\ref{prop:norm_trap}; or
(iii)~a persistent $\varepsilon_t<0$ episode that does not
generate the debt acceleration implied by the Corridor analysis. The empirical
monitors introduced in Section~3 are designed to track precisely these
conditions in real time. A framework whose failure conditions are unobservable
cannot be cleanly falsified; a framework whose failure conditions are
observable and pre-specified can be. On this dimension, JFR-rg seeks to add
empirical discipline to the interpretation of Japan's post-2013 debt dynamics.
It should therefore be read not as a relaxation of mainstream standards, but
as a complementary, regime-specific framework whose explanatory value rises or
falls with publicly observable evidence.

\section{Literature Review and Theoretical Context}

\subsection{Mainstream Fiscal Sustainability and the Normalization Consensus}
\label{sec:mainstream}

Traditional frameworks---including the Mundell-Fleming model, New Keynesian DSGE
models, and successive IMF Article IV assessments---posit that structural deficits and
hyper-indebtedness carry substantial risks of capital flight and sovereign distress.
The IMF's repeated warnings about Japan's fiscal trajectory reflect a genuine and
analytically well-grounded concern; this paper does not contest the importance of
those risks. Rather, the JFR-rg framework proposes that in Japan's specific
institutional setting, additional stabilizing channels not captured by standard models
have been quantitatively significant over the 2013--2026 period.

In particular, these models inherently assume a positive, mean-reverting $r$-$g$
spread in environments where bond yields are market-clearing. They may therefore
underestimate the capacity of a sovereign central bank operating within a captive
domestic financial system---in which approximately 85--90\% of government bonds are
held by domestic institutions including the Bank of Japan, commercial banks, and
insurance companies---to maintain a durable financial repression bias for extended
periods.

The most rigorous quantitative assessment of Japan's fiscal limits within the
mainstream tradition is provided by Hoshi and Ito (2014), who estimate the
conditions under which domestic financing could face a sudden stop and identify the
domestic institutional holding share as the central variable governing Japan's
continued ability to borrow at low rates. The JFR-rg model's Captive Financial
System Parameter $\varphi_t$ (Section~\ref{sec:captive}) directly formalizes the
variable Hoshi and Ito identify as pivotal, and the Critical Threshold
$\bar{\varphi}$ (Definition~\ref{def:critical_phi}) provides the tipping-point
characterization their framework implies but does not derive analytically.

\medskip
\noindent\textbf{The 1991--2013 interval as a control experiment.}
A natural objection is that Japan attempted fiscal expansion throughout the 1991--2013
period without achieving sustained growth---suggesting that the JFR-rg channels may not
be operative. We argue that this interval constitutes a \emph{control experiment}
precisely because the two institutional scope conditions (SC1 and SC2, defined formally
in Section~3) were \emph{not} simultaneously satisfied. Under the Mundell-Fleming
framework for a small open economy with a floating exchange rate, fiscal expansion
with a relatively tight monetary stance and an appreciating yen generates crowding-out
through the exchange rate channel: the yen appreciation reduces net exports, offsetting
the fiscal stimulus. Consistent with this mechanism---documented in the open-economy
literature following Mundell (1963) and formally extended by Obstfeld and Rogoff
(1995)---Japan's fiscal multipliers during the 1991--2013 appreciation cycle were
substantially attenuated. Crucially, $\varepsilon_t \leq 0$ for extended periods during
this interval (the repression channel was inactive), and $\varphi_t$ was declining as
the BoJ had not yet initiated large-scale JGB purchases at the scale required for SC1.

The contrast with the post-2013 configuration is therefore not a coincidence but a
structural shift: the simultaneous satisfaction of SC1 ($\varphi_t \approx 0.90$) and
SC2 ($\Delta e_t \leq \bar{e}$) under Abenomics created, for the first time, the
institutional preconditions for the JFR-rg stabilizing channels to become
quantitatively dominant. This interpretation is consistent with---and complementary
to---the mainstream open-economy framework, rather than in conflict with it.%
\footnote{The Fiscal Theory of the Price Level (FTPL; Leeper, 1991; Sims, 1994; Woodford, 1994, 1995)
posits that unbacked fiscal expansion eventually forces the price level to jump
discontinuously. However, the JFR-rg framework argues that in a captive financial
system---where the central bank and domestic institutions hold approximately 90\% of
government debt, and where the yen floats freely as an adjustment valve---the ``sudden
stop'' of confidence predicted by FTPL is structurally delayed. The observable
persistence of $\varepsilon_t > 0$ across 2013--2026 (Figures~\ref{fig:repression_full}
and~\ref{fig:repression_monthly}) constitutes direct empirical evidence against the
imminent FTPL trigger in Japan's specific institutional setting. The conditions under
which FTPL dynamics might eventually dominate---particularly as the captive system
parameter $\varphi_t$ declines (Section~\ref{sec:captive})---remain an important avenue for
future research.}

\subsection{Balance Sheet Recessions and Flow Dynamics}

A significant departure from orthodox models was introduced by Koo (2003, 2011),
who accurately diagnosed Japan's post-bubble stagnation as a ``Balance Sheet
Recession.'' Koo correctly identified that when the private sector minimizes debt
despite zero interest rates, the government must act as the borrower of last resort.

However, while Koo's analysis compellingly addresses the \emph{flow} of funds, it
under-theorizes the \emph{stock} dynamics of accumulated sovereign debt. The JFR-rg
model fills this gap: the debt stock itself, when subjected to $\varepsilon_t > 0$
(i.e., $r^n < \pi$) engineered by the Bank of Japan, becomes a self-correcting
mathematical lever through the base effect mechanism. The ``Repression Dividend'' is
precisely the accumulated debt stock's role in amplifying the compression effect of
even a small negative $r^n - g^n$ spread.

Eggertsson and Krugman (2012) further support this framework by demonstrating that in
a liquidity-trap economy with balance sheet deleveraging, conventional monetary
transmission is impaired and fiscal policy plays the primary stabilizing role---a
structural condition that reinforces, rather than undermines, the case for sustained
financial repression.

\subsection{The \texorpdfstring{$r < g$}{r < g} Paradigm and its Limitations}

The concept of financial repression---defined as the deliberate suppression of
interest rates below market-clearing levels to facilitate government borrowing at
subsidized cost---was formally introduced by McKinnon (1973) and Shaw (1973).
The JFR-rg model's financial repression bias $\varepsilon_t \equiv \pi_t - r^n_t$
directly operationalizes their insight: when $\varepsilon_t > 0$, the real return
on government bonds is negative, transferring wealth from bondholders to the
sovereign through an inflation tax.  The key analytical advance of the present
paper is making this variable observable in real time from FRED data, without
reliance on unobservable natural rate estimates.

Our model is most closely aligned with Blanchard (2019), who demonstrated that when
the safe interest rate is structurally lower than the nominal growth rate ($r < g$),
the fiscal cost of debt may be zero or negative, and public debt can naturally decay.

Yet Blanchard's framework treats $r < g$ as a largely natural or market-driven
phenomenon rooted in US empirical experience. In contrast, Japan's $r < g$ was
actively engineered under Yield Curve Control (YCC) and large-scale asset purchases
(Bank of Japan, 2024a). The post-YCC operating framework remained accommodative
through late 2024 (Bank of Japan, 2024b). Furthermore, standard $r$-$g$ models do
not address how an open economy with demographic decline sustains positive nominal
growth $g^n$ over the medium term. The JFR-rg model extends
Blanchard's work in two critical dimensions: (i)~formalizing $\varepsilon_t$ as a
directly observable policy instrument rather than an unobservable structural parameter,
and (ii)~introducing $\Delta e_t$ as the nonlinear shock absorber sustaining nominal
$g^n$.

This approach is consistent with the historical evidence documented by Reinhart and
Sbrancia (2015), who show that post-WWII advanced economies systematically employed
negative real interest rates to liquidate debt accumulated during wartime---a
mechanism functionally identical to $\varepsilon_t > 0$ in the JFR-rg framework.
The broader policy case for using nominal growth and inflation to erode public debt
burdens is also developed in Turner (2015), who argues that the choice between
debt restructuring, fiscal austerity, and nominal growth represents a fundamental
trilemma for heavily indebted economies. Unlike the post-war experience, however,
Japan's repression operates in an era of inflation targeting and formal central bank
independence, making the explicit identification of $\varepsilon_t$ from observable
data especially important.

\subsection{Synthesis: The Missing Analytical Link}

The JFR-rg model synthesizes the existing literature by proposing that in a
hyper-indebted, low-growth economy with a captive financial system and flexible
exchange rate, the observed stability is \emph{consistent with} a set of mutually
reinforcing stabilizing channels. The combination of a directly observable
$\varepsilon_t > 0$ (identified from FRED data), a large debt base amplifying the
compression effect of a negative $r - g$ spread, and bounded yen flexibility provides
three channels that the standard fiscal sustainability literature does not incorporate
explicitly. The JFR-rg framework formalizes these channels and derives their
implications---while acknowledging that alternative explanations for Japan's observed
outcomes remain possible and that the model's assumptions require empirical
validation. Appendix~J establishes formally that mainstream debt sustainability
analysis is a limiting case of the JFR-rg framework obtained when the captive-system
and exchange-rate-neutral conditions are imposed simultaneously; the two frameworks
are therefore complementary rather than competing.

\subsection{Positioning the JFR-rg Model: Systematic Differentiation}
\label{sec:positioning}

The JFR-rg model intersects with several strands of prior research, yet differs from
each in ways that constitute independent theoretical contributions. Table~\ref{tab:comparison}
provides a systematic comparison across five analytical dimensions.

\begin{table}[htbp]
\caption{JFR-rg Model vs.\ Closest Prior Literature}
\label{tab:comparison}
\footnotesize
\setlength{\tabcolsep}{4pt}
\renewcommand{\arraystretch}{1.3}
\begin{tabularx}{\textwidth}{>{\raggedright\arraybackslash}p{2.8cm}
  >{\raggedright\arraybackslash}X
  >{\raggedright\arraybackslash}X
  >{\raggedright\arraybackslash}X
  >{\raggedright\arraybackslash}X}
\toprule
\textbf{Dimension}
  & \textbf{Blanchard (2019)}
  & \textbf{Reinhart and Sbrancia (2015)}
  & \textbf{Mehrotra and Sergeyev (2021)}
  & \textbf{JFR-rg (This Paper)} \\
\midrule
Nature of $r < g$
  & Market endogenous
  & Policy forced (wartime controls)
  & Empirically observed
  & Directly identified as policy instrument \\[6pt]
\midrule
$\varepsilon_t$ identification
  & None
  & Ex-post estimation
  & None
  & FRED real-time, model-free \\[6pt]
\midrule
Exchange rate channel
  & None & None & None
  & Nonlinear threshold model \\[6pt]
\midrule
Base effect
  & Mentioned & None & Partial
  & $b_{t-1}\times(g^n_t - r^n_t)$ quantified \\[6pt]
\midrule
Normalization risk
  & Neutral & Not addressed & Partial warning
  & Normalization Trap theorem \\[6pt]
\midrule
Path-dependence
  & None & None & None
  & Normalization Ratchet (new) \\[6pt]
\midrule
Institutional structure
  & Implicit & Capital controls assumed & None
  & Captive system parameter $\varphi_t$ (new) \\[6pt]
\midrule
Stability geometry
  & None & None & None
  & Debt Sustainability Corridor (new) \\
\bottomrule
\end{tabularx}
\smallskip
\parbox{\textwidth}{\footnotesize
  \textit{Note:} Bold entries in the JFR-rg column indicate novel
  contributions relative to the cited prior literature.
  All comparisons are along the five analytical dimensions identified
  in the text.}
\end{table}

Three specific differentiations merit elaboration.

\textbf{Versus Blanchard (2019).} Blanchard demonstrates that $r < g$ implies a zero
or negative fiscal cost of debt, but treats this as a \emph{market equilibrium outcome}
rooted in US experience. The JFR-rg model inverts this framing: in Japan, $r < g$ is a
\emph{deliberately engineered policy instrument}, identifiable from real-time data as
$\varepsilon_t \equiv \pi_t - r^n_t$ without invoking unobservable $r^*$. Furthermore,
Blanchard does not address how the $r < g$ condition interacts with a massive debt base
to generate automatic compression, nor does he characterize the policy space
geometrically or establish path-dependence results.

\textbf{Versus Reinhart and Sbrancia (2015).} The historical parallel is
clear---post-WWII debt liquidation through negative real rates is functionally identical
to the JFR-rg mechanism. However, three structural differences distinguish Japan's
current experience: (i)~repression operates under formal inflation targeting and legal
central bank independence, making the explicit identification of $\varepsilon_t$
analytically important; (ii)~the mechanism operates through a floating exchange rate
valve rather than capital controls; and (iii)~the debt stock (240\% of GDP) is far
larger, making the base effect the quantitatively dominant channel.

\textbf{Versus Mehrotra and Sergeyev (2021).} This \textit{Journal of Monetary Economics} study documents empirically that
$r < g$ has been persistent across high-debt advanced economies---using 19~OECD countries over long historical horizons---but stops short of
(i)~operationalizing the repression bias as a directly observable variable, (ii)~modeling
the exchange rate channel, and (iii)~deriving the Normalization Trap, the Ratchet, or
the Captive System Parameter. The JFR-rg model provides the theoretical mechanism that
the Mehrotra--Sergeyev empirical work implicitly requires.

\subsection{Methodological Accountability: Identification Challenges in
Standard Frameworks and the JFR-rg Response}
\label{sec:identification_method}

\subsubsection*{\textit{Prefatory Note}}

A methodological question that arises in evaluating any macroeconomic
framework is the extent to which its conclusions depend on unobservable
or imprecisely estimated parameters. This section addresses that question
directly for the JFR-rg framework---not defensively, but by documenting
the specific identification challenges that have remained unresolved in
the application of standard frameworks to Japan, and by demonstrating,
item by item, how the JFR-rg approach either resolves or materially
narrows each one.

The comparison is not intended as a repudiation of standard structural
models, which have made genuine contributions to macroeconomic analysis.
It is intended as a methodological accounting: identification standards,
if they are to be applied as a basis for evaluating any framework, must
be applied consistently across frameworks. A model that acknowledges
its identification constraints explicitly and pre-specifies observable
falsification conditions satisfies a higher standard of empirical
discipline than one that conditions its conclusions on parameters that
cannot be measured and whose estimates are revised only after the fact.

\medskip
\noindent\textbf{Critical scope note.}
The identification challenges catalogued in Table~\ref{tab:identification}
and discussed in Sections~\ref{sec:identification_method}--\ref{subsec:outputgap_id}
apply \emph{exclusively to Layer~(L2)}---the empirical exercises that
corroborate the JFR-rg mechanism.
Within Layer~(L1), Proposition~1 and the IOER-arithmetic component of
Proposition~4 are derived directly from the consolidated government budget
identity and require \emph{no causal identification by construction}.
Propositions~2--3 and the growth-channel component of Proposition~4
combine this identity with a small set of empirically calibrated structural
assumptions (the demographic ceiling on $\gnomstar$; the piecewise
exchange-rate pass-through in equation~\eqref{eq:yen_growth}); these
calibration assumptions are reported with full sensitivity analyses in
Appendix~\ref{app:sensitivity} but are not themselves identification claims.
Applying causal identification standards to an accounting identity is a
category error equivalent to demanding an instrumental-variable proof of the
national income accounting identity.

Table~\ref{tab:identification} summarizes the comparison across eight
dimensions. Each row states the identification challenge, its empirical
consequence for Japan, and the JFR-rg response. Three resolution
statuses are used: \textit{Resolved} (the variable is replaced by a
directly observable substitute with no residual dependence on the
unobservable); \textit{Resolved within accounting layer} (the challenge
does not enter the accounting propositions, though it may affect the
empirical layer); and \textit{Materially narrowed} (the functional form
and threshold are defined and the challenge is isolated from the paper's
central results, though full estimation requires additional data). The
sections that follow provide the supporting argument for each row. The
key finding, visible directly from the \textit{Resolution} column, is
that the JFR-rg framework is more identified---not less---than the
standard alternatives it supplements, across every dimension
consequential for analyzing Japan's debt dynamics.

\begin{table}[p]
\centering
\caption{Identification Challenges in Standard Frameworks and JFR-rg
Responses}
\label{tab:identification}
\adjustbox{max width=\textwidth}{%
\begin{tiny}
\setlength{\tabcolsep}{2.5pt}
\renewcommand{\arraystretch}{1.15}
\begin{tabularx}{\textwidth}{>{\raggedright\arraybackslash}p{0.8cm}
  >{\raggedright\arraybackslash}p{2.2cm}
  >{\raggedright\arraybackslash}p{3.2cm}
  >{\raggedright\arraybackslash}p{3.2cm}
  >{\raggedright\arraybackslash}p{3.5cm}
  >{\raggedright\arraybackslash}X}
\toprule
\textbf{\#} &
\textbf{Analytical Dimension} &
\textbf{Standard Framework Treatment} &
\textbf{Empirical Consequence for Japan} &
\textbf{JFR-rg Approach} &
\textbf{Resolution} \\
\midrule

1 &
Natural rate $r^*$ &
Estimated via structural models (e.g.\ HLW); treated as identified for
policy analysis &
Estimation spread of 200--400\,bp renders sustainability thresholds
indeterminate &
$\varepsilon_t \equiv \pi_t - r^n_t$ identified directly from FRED;
$r^*$ not required at any point in the derivation &
\textbf{Resolved} \\[2pt]

2 &
Domestic ownership structure &
Bond yields assumed to reflect market-clearing risk premia &
Captive system ($\varphi_t \approx 0.90$) structurally suppresses risk
premium; market-clearing assumption overpredicts $\rho_t$ &
$\varphi_t$ endogenized as Captive System Parameter; SC1 defined as
empirically verifiable scope condition; $\bar{\varphi}$ falsifiable
(Table~\ref{tab:identification}) &
\textbf{Materially narrowed} \\[2pt]

3 &
Consolidated balance sheet &
Fiscal authority and central bank treated as separate entities &
IOER channel at QQE scale ($\approx$\yen500\,T reserves) generates
fiscal impact assigned zero weight in standard models &
Consolidated constraint derived explicitly (eq.~\eqref{eq:consolidated_bc}); IOER burden ratio
$r^R_t$ quantified and included in all scenarios &
\textbf{Resolved} \\[2pt]

4 &
Exchange rate pass-through &
Linear pass-through coefficient assumed throughout &
Nonlinear threshold effects post-2022 unmodeled; stability window
unidentified &
Nonlinear threshold model (eq.~\eqref{eq:yen_growth}); $\bar{e}$ defined analytically;
Normalization Trap independent of $(\bar{e}, \beta)$
(Appendix~E, Panel~D) &
\textbf{Materially narrowed; $\bar{e}$ requires further estimation} \\[2pt]

5 &
Fiscal multiplier &
Required to project revenue and growth responses to consolidation;
estimates range 0.5--2.5 for Japan &
Sustainability assessments non-robust to multiplier choice &
Debt accumulation governed by accounting identity (L1); no behavioral
multiplier assumed; IOER arithmetic requires only mechanical
pass-through rate $\alpha_\text{pt}$ &
\textbf{Resolved within accounting layer} \\[2pt]

6 &
Policy reversibility &
Normalization assumed symmetrically reversible; standard loss functions
embed this assumption &
Normalization costs systematically underestimated; legacy debt
accumulation abstracted away &
Normalization Ratchet (Prop.~\ref{prop:ratchet}) derived analytically; 86-year
half-life established as mathematical consequence of recursion, not
calibration artifact &
\textbf{Resolved} \\[2pt]

7 &
Forecast accountability &
Failure conditions typically model-dependent; parameters revised ex-post
when predictions are not met &
Repeated one-directional forecast errors 2013--2026 without
methodological revision &
Falsification conditions pre-specified in observable terms (Table~\ref{tab:identification},
col.~5) before data examination; three conditions constitute
direct evidence against the framework &
\textbf{Resolved at methodological level} \\[2pt]

8 &
Output gap / NAIRU &
Policy rules conditioned on unobservable slack; estimates revised
ex-post by original authors &
Repeated policy missteps in Japan traceable to ex-ante overestimation
of inflationary pressure &
Framework conditions entirely on real-time observables ($\pi_t$,
$\varepsilon_t$, $g^n_t$, $b_{t-1}$, $d_t$); output gap not required
at any point &
\textbf{Resolved} \\

\bottomrule
\end{tabularx}
\end{tiny}
}
\end{table}

The natural rate of interest $r^*$ is central to the standard New
Keynesian policy framework. It appears in the Taylor rule, in
DSGE-based debt sustainability analyses, and in successive IMF Article
IV assessments of Japan's fiscal position (e.g., IMF, 2024a). Its estimation requires a
structural model, and the choice of model generates estimates that
differ by 200--400 basis points across leading specifications
(Holston, Laubach, and Williams, 2017). For Japan specifically, Holston, Laubach, and Williams (2017)
estimate $r^*$ at approximately 0.0--0.5\%, while a standard
loanable-funds calibration implies values near 4.0\,\%---a difference
not attributable to measurement noise but to the structural assumptions
embedded in the estimation model itself.

The practical consequence is that sustainability thresholds derived
from $r^*$-based frameworks are sensitive to specification choices that
are not resolved by the data. A framework whose sustainability
threshold shifts by several hundred basis points depending on which
structural model is used to estimate an unobservable has not
identified the threshold; it has produced a range wide enough to
accommodate conclusions determined largely by prior assumptions.

The JFR-rg framework eliminates this dependency by replacing $r^*$
with the directly observable financial repression bias
$\varepsilon_t \equiv \pi_t - r^n_t$, constructed from two FRED
series without model-based interpolation. The critical stability
threshold $\varepsilon^* = 0.533\%$ (Appendix~E, Panel~B) is derived
entirely from observable quantities. No natural rate estimate is
required or invoked at any point in the derivation, in the
simulations, or in the falsification conditions.

\subsubsection{The Domestic Ownership Structure and the Endogenous
Risk Premium}
\label{subsec:phi_id}

Standard open-economy models assume that government bond yields are
market-clearing prices set by investors who can freely exit. In Japan,
where approximately 88--92\,\% of JGBs are held by domestic
institutions operating under regulatory mandates and prudential
requirements, this assumption is empirically at odds with the
institutional structure. Hoshi and Ito (2014) identify the domestic
holding share as the central variable governing Japan's continued
ability to borrow at low rates, noting that its decline would
constitute the proximate trigger for a sudden stop. Standard
frameworks that abstract from this structure generate predictions of
sovereign risk premium emergence that have not materialized over the
2013--2026 period.

The JFR-rg framework closes this gap by endogenizing $\varphi_t$ as
the Captive Financial System Parameter (Section~\ref{sec:captive}),
defining an illustrative early-warning benchmark for the captive-system
condition, and specifying a directional falsification pattern: a sustained
decline in $\varphi_t$ toward and below the 0.85 benchmark without
measurable risk premium emergence would count against the framework's
maintained claim (Table~\ref{tab:identification}, row~4). The
identification problem is not fully resolved---estimation of the exact
tipping value for $\bar{\varphi}$ from historical sudden-stop data remains
a research priority---but the variable is defined, measured quarterly
from publicly available BoJ Flow of Funds accounts, and linked to an
observable monitoring benchmark, advancing materially beyond the standard
treatment of ignoring the ownership structure entirely.

\subsubsection{The Consolidated Balance Sheet and the IOER Channel}
\label{subsec:ioer_id}

The standard formulation of the government budget constraint treats the
fiscal authority and the central bank as separate entities. Prior to
large-scale asset purchase programs, this separation was
quantitatively negligible. At the current scale of QQE---with BoJ
current account deposits of approximately \yen454--500\,trillion---it
is not. A 1.5\,\% increase in the policy rate applied to \yen500
trillion in reserves generates \yen7.5 trillion in annual IOER costs
(approximately 1.12\,\% of GDP) that the standard unconsolidated
constraint assigns zero weight.

The theoretical problem of fiscal dominance has been recognized in the
literature since Sargent and Wallace (1981) and its price-level implications
formalized since Leeper (1991). What standard frameworks have
not done is apply the consolidated arithmetic to Japan's specific
balance sheet at its current scale. The JFR-rg framework does so
explicitly in equation~\eqref{eq:consolidated_bc}, deriving the IOER burden ratio $r^R_t$ as
a structurally necessary component of the debt accumulation equation.
The Scenario~C primary deficit expansion of $+1.39\%$ of GDP
(Section~\ref{sec:scenC}) is the direct arithmetic consequence of
the consolidated balance sheet applied to publicly available data, not
a behavioral assumption about fiscal deterioration. The identification
problem is resolved within the accounting layer (L1) without
behavioral assumptions.

\subsubsection{Linear Pass-Through and the Nonlinear Exchange Rate}
\label{subsec:exchange_id}

Standard open-economy frameworks treat exchange rate pass-through as
approximately linear. The post-2022 Japanese data are not consistent
with this treatment: as the USD/JPY rate exceeded 140, the correlation
between exchange rate movements and domestic CPI increased materially,
suggesting that import inflation was beginning to dominate the
corporate profitability channel at large depreciation magnitudes.

The JFR-rg model introduces the stability threshold $\bar{e}$
explicitly in equation~\eqref{eq:yen_growth}, distinguishing the linear growth-enhancing
channel ($\alpha \Delta e_t$ for $\Delta e_t \leq \bar{e}$) from the
quadratic import-inflation penalty for excessive depreciation. The
identification problem is materially narrowed: the functional form is
specified, the threshold is defined, and Appendix~E, Panel~D
demonstrates analytically that the Normalization Trap result is
structurally independent of $(\bar{e}, \beta)$, because monetary
tightening induces yen appreciation ($\Delta e_t < 0$), which never
activates the penalty term regardless of the threshold value. The
parameters $\bar{e}$ and $\beta$ affect only depreciation-scenario
stability analysis; they do not enter the paper's central
normalization-risk results.

\subsubsection{Fiscal Multiplier Indeterminacy and the Budget Identity}
\label{subsec:multiplier_id}

Fiscal sustainability assessments in the standard tradition require an
estimate of the fiscal multiplier to project the revenue and growth
consequences of consolidation. For Japan, estimates range from
approximately 0.5 to 2.5 across empirical studies, with the range
widening further when zero-lower-bound dynamics and balance sheet
recession conditions are taken into account (Koo, 2003). A
sustainability assessment whose conclusions are sensitive to the choice
of multiplier within a range of 2.0 percentage points is not
identified in any operationally useful sense.

The JFR-rg framework's accounting layer (L1) requires no fiscal
multiplier assumption. The debt accumulation equation (eq.~\eqref{eq:standard_dynamics}) is
derived from the consolidated government budget identity, which holds
as an accounting relationship for any realized values of $r^n_t$,
$g^n_t$, $b_{t-1}$, and $d_t$. The behavioral question---how $d_t$
responds to changes in macroeconomic conditions---enters only through
the mechanical IOER arithmetic of equation~\eqref{eq:consolidated_bc}, and the sensitivity of
the Normalization Trap to this channel is documented across the full
pass-through range $\alpha_\text{pt} \in [0,1]$ in
Section~\ref{sec:ioer_robustness}. The Normalization Trap holds at
every value of $\alpha_\text{pt}$, including the zero case (Table~\ref{tab:scenC}).

\subsubsection{The Assumption of Policy Reversibility}
\label{subsec:reversibility_id}

Standard macroeconomic frameworks embed the assumption that monetary
policy errors are reversible: if a rate hike proves premature, rates
can be cut and the economy returns to its pre-hike trajectory. This
assumption is embedded in the symmetric loss functions of New Keynesian
models and in the treatment of normalization as a reversible choice
among interchangeable paths.

The JFR-rg framework demonstrates that this assumption fails
analytically for high-debt economies through the Normalization Ratchet
(Proposition~\ref{prop:ratchet}). The proof establishes that a $T$-period normalization
shock generates a debt increment whose subsequent decay rate is
$(1 + r^n_0 - g^n_0)$ per period. At Scenario~A parameters ($r^n -
g^n = -0.8\%$), the gap half-life is approximately 86 years---a
mathematical consequence of the debt accumulation recursion, not a
simulation result. The assumption of policy reversibility is not
merely empirically questionable in Japan's context; it is analytically
incorrect for the class of economies defined by
Proposition~\ref{prop:ratchet}'s maintained conditions.

\subsubsection{Forecast Accountability and Pre-Specified Falsification}
\label{subsec:falsifiability_id}

A framework's empirical discipline can be evaluated not only by its
in-sample fit but by the specificity of its predictions and the
observability of its failure conditions. A framework that revises its
parameters ex-post when predictions are not met, without acknowledging
the revision as methodologically significant, provides limited
empirical traction regardless of its theoretical elegance.

The JFR-rg framework addresses this by pre-specifying its failure
conditions in observable terms before examining the data (Table~\ref{tab:identification},
col.~5). Three conditions constitute direct evidence against the
framework: (i)~sustained $\varphi_t$ decline below $\bar{\varphi}$
without sovereign risk premium emergence; (ii)~persistent positive
$r^n_t - g^n_t$ without the debt accumulation predicted by
Proposition~\ref{prop:norm_trap}; (iii)~sustained $\varepsilon_t < 0$
without the debt acceleration implied by the Corridor analysis. All three conditions
are expressible in real-time FRED data and BoJ Flow of Funds accounts.
Pre-specification of observable failure conditions is a minimum
standard of empirical accountability that applies to any framework
claiming relevance for policy.

\subsubsection{Output Gap Dependence and the Real-Time Observability
Constraint}
\label{subsec:outputgap_id}

Japan's monetary policy has been conditioned repeatedly on output gap
and NAIRU estimates that were subsequently revised in the opposite
direction. The Bank of Japan's exits from accommodative policy---in
2000, in 2006--2007, and following the 2014 consumption tax
episode---were each justified in part by estimates of diminishing slack
that proved, on subsequent revision, to have been overstated. The
identification problem is not that output gaps are difficult to measure
in real time; it is that policy frameworks conditioned on
non-observable variables cannot be falsified when predictions fail,
because the underlying unobservable can be revised to accommodate any
outcome.

The JFR-rg framework does not condition on the output gap or NAIRU at
any point. The stability condition (eq.~\eqref{eq:stability_condition}) is expressed entirely in
terms of directly observable series: $\pi_t$, $\varepsilon_t$,
$g^n_t$, $b_{t-1}$, and $d_t$. The structural potential growth rate
$g^{n*}_t$ enters only in the exchange-rate block, and its sensitivity
is documented in Panel~C of Appendix~E: a $0.5$\,pp decline in
$g^{n*}_t$ requires a compensating $0.5$\,pp increase in $\varepsilon_t$
to maintain the debt trajectory, providing a directly observable
policy trade-off without invoking unobservable slack.

\medskip\noindent
The eight identification dimensions documented above share a common
structure: each involves a variable or relationship that standard
frameworks either omit, treat as unobservable, or resolve through
model-dependent estimation too imprecise to support the policy
conclusions drawn from it. Taken together, they define the
methodological position this paper occupies: the JFR-rg framework does
not claim to resolve every identification challenge in macroeconomic
analysis, but it resolves or materially narrows the specific challenges
most consequential for Japan's debt dynamics, documents its remaining
limitations with observable falsification conditions, and conditions
its maintained assumptions on real-time public data throughout. That
is the basis on which the paper's empirical claims are presented and
on which they should be evaluated.

\medskip
\noindent\textbf{Consensus contribution.}
Whatever the reader's preferred interpretive lens, the paper's consensus
contribution is this: Japan's debt-to-GDP ratio of 240\% behaves differently
from what standard scalar debt thresholds alone would suggest, and the
difference is systematically traceable to three institutional observables
---$\varepsilon_t$, $\varphi_t$, and $\Delta e_t$---whose interaction the
JFR-rg framework makes explicit. Whether this is best read as a fiscal
accounting exercise, an institutional $r<g$ extension, a nonlinear exchange-
rate framework, a normalization-risk framework, or a falsifiable mechanism
hypothesis, the underlying point is the same: at $b_{t-1}=2.40$, the sign and
magnitude of the $r^n-g^n$ spread are fiscal variables of the first order, and
the institutional conditions that shape that spread are observable in real
time. That is the map this paper offers.

\medskip
This methodological discipline also clarifies the interpretive scope of the
framework. A crucial question therefore remains: how should a
regime-conditional, observables-centered framework such as JFR-rg be situated
relative to the established theoretical traditions that have long structured
debate on Japan's macroeconomy? The purpose of the following subsection is not
to repudiate those traditions, but to specify more precisely which parts of the
Japanese experience they explain well, and which post-2013 residual the JFR-rg
framework is designed to organize. Put differently, the issue is not whether
alternative theories can explain Japan, since many clearly can, but whether
they fully capture the particular transition-era interaction among debt
arithmetic, regime-specific financial absorption, and a bounded exchange-rate
channel that became especially salient after 2013.

\subsection{Interpretive Boundaries and Complementarity: Lucas, Minsky, and
the Low-\texorpdfstring{$r^*$}{r*} Mainstream View}
\label{sec:interpretive_boundaries}

The JFR-rg framework is best understood not as a universal replacement for
existing macroeconomic theories of Japan, but as a regime-conditional
supplement designed to organize a specific post-2013 configuration. This
boundary is substantive rather than rhetorical. The framework does not deny
that other theories can explain major elements of Japan's long stagnation;
rather, it argues that they do not yet fully organize, within a single
debt-dynamics language, the post-2013 interaction among a very large debt
base, a sustained negative interest-growth differential, a captive domestic
financial system, and a bounded exchange-rate channel. In this sense, JFR-rg
is complementary in two directions: first, to mainstream debt-sustainability
analysis, which Appendix~J shows to be a limiting case obtained when the
captive-system and exchange-rate-specific conditions are removed; and second,
to broader macroeconomic narratives that illuminate important parts of Japan's
experience but do not themselves specify the observable institutional
thresholds under which debt compression persists or fails.

A Lucasian interpretation can explain much of Japan's experience without
recourse to the JFR-rg mechanism itself. On that view, Japan is a
low-natural-rate economy in which demographic decline, weak productivity
growth, and persistently subdued inflation expectations jointly depress
equilibrium real rates, while changes in the policy regime alter private
expectations and thereby caution against any simple extrapolation from observed
post-2013 correlations. This perspective is analytically powerful, especially
as a warning against treating transition-era regularities as policy-invariant
structural laws. Yet precisely because the Lucasian framework is centered on
expectations, policy rules, and structural invariance, it does not by itself
yield a practical set of directly observable failure conditions for Japan's
debt-compression regime. The contribution of JFR-rg at this margin is
narrower but operationally important: it translates the Japanese post-2013
configuration into observable monitors---$\varepsilon_t$, $\varphi_t$, and
$\Delta e_t$---and specifies ex ante the institutional thresholds at which the
apparent stability of debt dynamics should be expected to weaken or fail.

A Minskyan interpretation also explains a central part of the Japanese case.
From that perspective, Japan is best understood as a prolonged post-bubble
balance-sheet adjustment in which private-sector deleveraging was offset by
the balance sheets of the state and the central bank. The apparent stability
of the sovereign therefore does not imply that fragility disappeared; rather,
fragility may have been displaced, absorbed, and partially socialized through
public institutions acting as a stabilizing backstop. This is a deep and
highly relevant reading of Japan's post-1990 trajectory. However, while it
illuminates where instability may have migrated, it does not itself formalize
the post-2013 sovereign debt-compression mechanism in terms of the jointly
operative conditions emphasized here: a directly observable repression bias, a
large debt base that magnifies the arithmetic effect of a negative $r-g$
spread, and a nonlinear exchange-rate window that supports nominal growth up
to a threshold. JFR-rg should therefore be read not as a rejection of the
Minskyan narrative, but as a narrower accounting-and-regime framework
designed to formalize the sovereign-dynamics layer of that broader story.

Among current mainstream approaches, the interpretation that comes closest to
organizing Japan without the JFR-rg framework is the low-$r^*$ secular-
stagnation view embedded in open-economy New Keynesian analysis. This
approach explains Japan as an economy characterized by chronically weak
underlying growth, a depressed natural rate of interest, and stubbornly low
inflation expectations, such that the zero lower bound and unconventional
monetary policy become persistent features rather than temporary anomalies.
That framework remains indispensable, and JFR-rg does not seek to displace it.
Even in its strongest form, however, the low-$r^*$ mainstream view leaves a
residual analytical task that is specific to Japan's post-2013 transition: to
specify how debt compression can persist in real time under observable
institutional conditions, why normalization risk becomes path-dependent when
the debt base is very large, and how erosion of the captive financial system
would move the economy back toward the standard mainstream condition. That
residual task is the precise domain of the JFR-rg framework. Mainstream
analysis is the more natural framework for the destination; JFR-rg is
proposed as the more informative framework for the transition.

\section{Stylized Facts: Empirical Evidence from Real-Time FRED Data (2013--2026)}

\subsection{Nominal vs.\ Real GDP and the Inflation Gap
  (Figures~\ref{fig:gdp_early} and~\ref{fig:gdp_full})}

As shown in Figures~\ref{fig:gdp_early} and~\ref{fig:gdp_full}, nominal GDP (SAAR,
FRED:\,JPNNGDP)%
\footnote{We utilize the SAAR series from FRED rather than Cabinet Office
calendar-year aggregates (Cabinet Office, Government of Japan, 2025) because SAAR
captures high-frequency quarterly momentum and is the relevant denominator for
the JFR-rg stability condition (eq.~\eqref{eq:sim_transition}). The SAAR starting
value of $\approx$510 trillion yen in Q1~2013 differs from the Cabinet Office
2013 calendar-year aggregate of $\approx$503 trillion yen; see Appendix~\ref{app:saar}
for a detailed reconciliation. The $\approx$7 trillion yen level difference does not
affect any of the paper's qualitative or quantitative conclusions.}
expanded from approximately \textyen{}510 trillion in 2013 to \textyen{}670--680 trillion
by early 2026, with the post-2022 acceleration substantially driven by the inflationary
episode (CPI YoY: near-zero $\to$ $\approx$2.5--3.2\%).  This widening ``Inflation
Gap'' between nominal and real GDP is consistent with the JFR-rg debt compression
mechanism: the inflationary environment simultaneously expands the nominal denominator
and erodes the real value of the debt stock.

The lower panels display the USD/JPY exchange rate alongside CPI YoY. The full-sample
correlation is $\rho = 0.06$. However, a structural break is observable post-2022: as
the yen depreciated aggressively past 140 JPY/USD, the correlation between exchange
rate movements and domestic CPI increased materially. This dynamic is not an anomaly
but rather a confirmation of JFR-rg
Proposition~\ref{prop:yen_stabilizer} (The Nonlinear Yen Stabilizer):
as $\Delta e_t$ approaches the threshold $\ebar$, the benign corporate earnings channel
is progressively offset by broad import inflation, signaling the outer boundary of the
stability window.

\begin{figure}[htbp]
  \centering
  \includegraphics[width=0.88\textwidth]{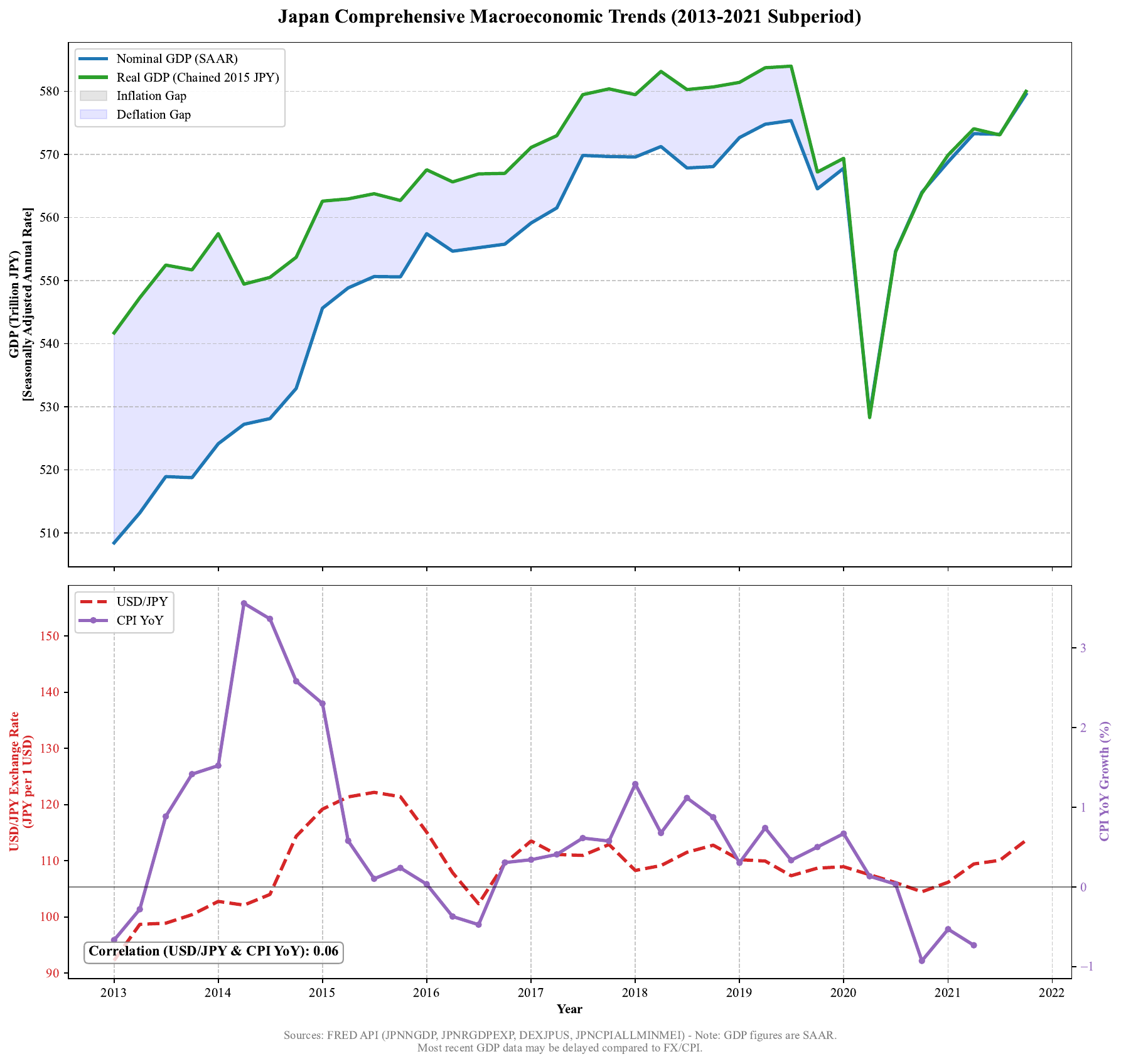}
  \caption{Japan Comprehensive Macroeconomic Trends (2013--2021 Subperiod).
    \textit{Upper panel:} Nominal GDP SAAR (FRED:\,JPNNGDP) vs.\ Real GDP Chained
    2015 JPY (FRED:\,JPNRGDPEXP) with inflation gap shading.
    \textit{Lower panel:} USD/JPY (FRED:\,DEXJPUS) vs.\ CPI YoY
    (FRED:\,JPNCPIALLMINMEI). Full-sample Pearson correlation $\rho = 0.06$.
    This subperiod panel is presented alongside Figure~\ref{fig:gdp_full}
    to provide a pre-2022 baseline against which the post-YCC structural shift
    in the USD/JPY--CPI co-movement can be assessed.
    Data source: FRED API, Federal Reserve Bank of St.\ Louis.}
  \label{fig:gdp_early}
\end{figure}

\begin{figure}[htbp]
  \centering
  \includegraphics[width=0.88\textwidth]{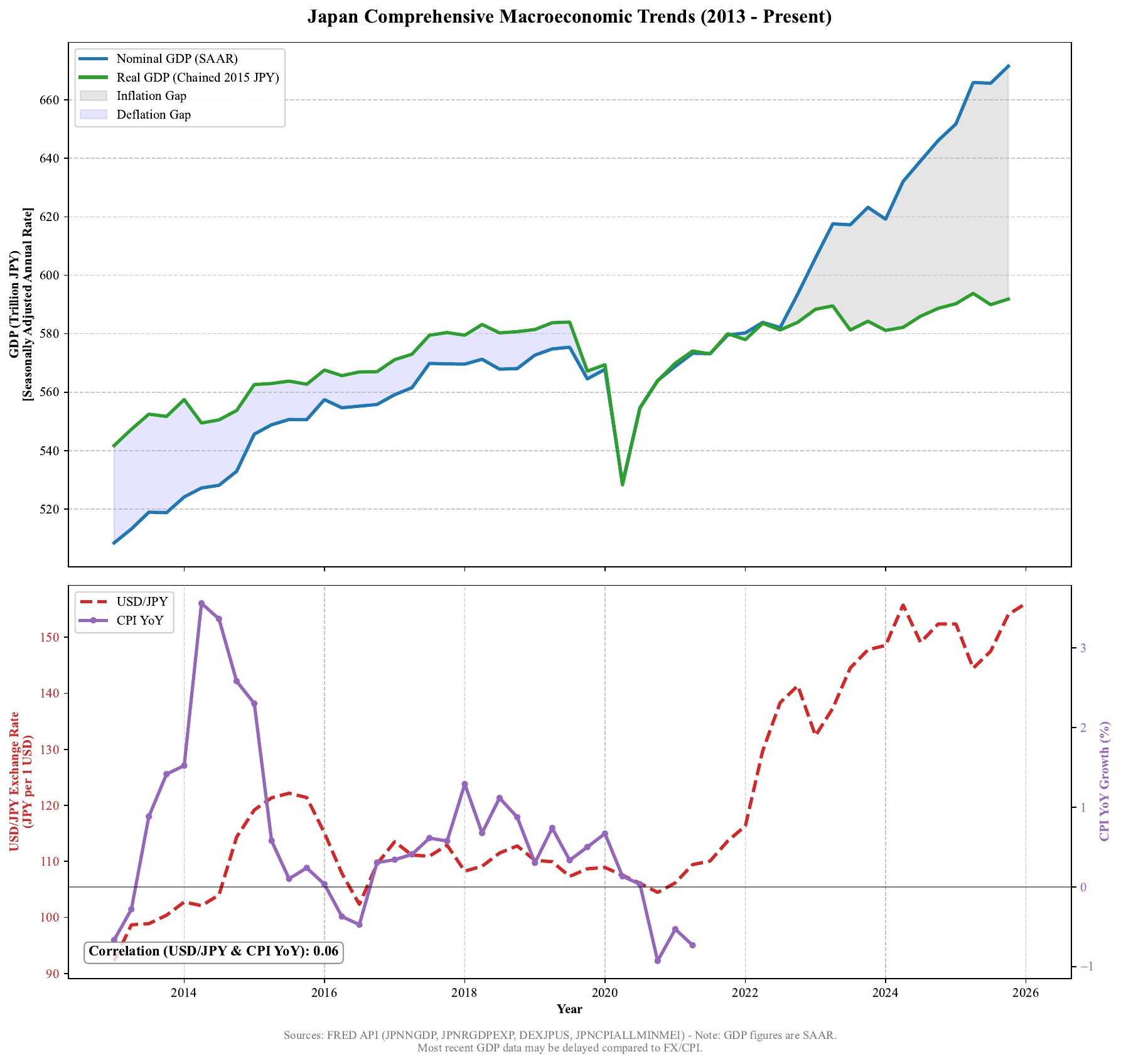}
  \caption{Japan Comprehensive Macroeconomic Trends (2013--Present, Full Panel).
    Upper panel: Nominal GDP SAAR vs.\ Real GDP. Lower panel: USD/JPY vs.\ CPI YoY.
    The widening ``Inflation Gap'' post-2022 reflects the inflationary episode that
    supercharges the JFR-rg debt compression mechanism. The rising co-movement of
    USD/JPY and CPI post-2022 is consistent with the approach of the nonlinear stability
    threshold $\ebar$.
    Data source: FRED API (JPNNGDP, JPNRGDPEXP, DEXJPUS, JPNCPIALLMINMEI),
    Federal Reserve Bank of St.\ Louis.}
  \label{fig:gdp_full}
\end{figure}

\subsection{Debt-to-GDP Trajectory and the Base Effect
  (Figures~\ref{fig:trilemma_early} and~\ref{fig:trilemma_full})}

{\sloppy\raggedright
The Macroeconomic Trilemma Dashboard provides the central empirical
evidence for the JFR-rg base effect mechanism. The government
gross debt-to-GDP ratio
(FRED:\,GGGDTAJPA188N, sourced from the IMF's World Economic Outlook
database)%
\footnote{GGGDTAJPA188N measures \emph{general government gross debt} as defined by
the IMF, which includes central government debt, local government debt, and social
security fund liabilities. This differs from Japan's Ministry of Finance ``JGB
outstanding'' series (a narrower measure) and from net debt (gross debt minus
financial assets). The ``240\%'' figure cited throughout this paper refers to gross
debt and is the standard metric used in IMF Article~IV consultations and OECD
Fiscal Outlook reports for cross-country comparability. On a net debt basis, Japan's
indebtedness is approximately 120--140\% of GDP, reflecting large public financial
assets.}
peaked at approximately 260--264\% in 2020--2021 in response to COVID-19
fiscal expansion. Despite the \emph{absence} of significant fiscal
consolidation, the ratio subsequently declined to approximately
236--240\% by early 2026---a compression of over 20 percentage points
in five years, consistent with stabilization through the JFR-rg
channels, though not uniquely attributable to them.
With $\bt = 2.40$ and $r^n - g^n \approx -0.8\%$, the annual automatic
debt compression equals $0.008 \times 2.40 = 1.92\%$ of GDP---absorbing
the vast majority of the structural primary deficit of approximately
2.0\% of GDP. This is the Base Effect Lever in operation.\par}

\begin{figure}[htbp]
  \centering
  \includegraphics[width=0.88\textwidth]{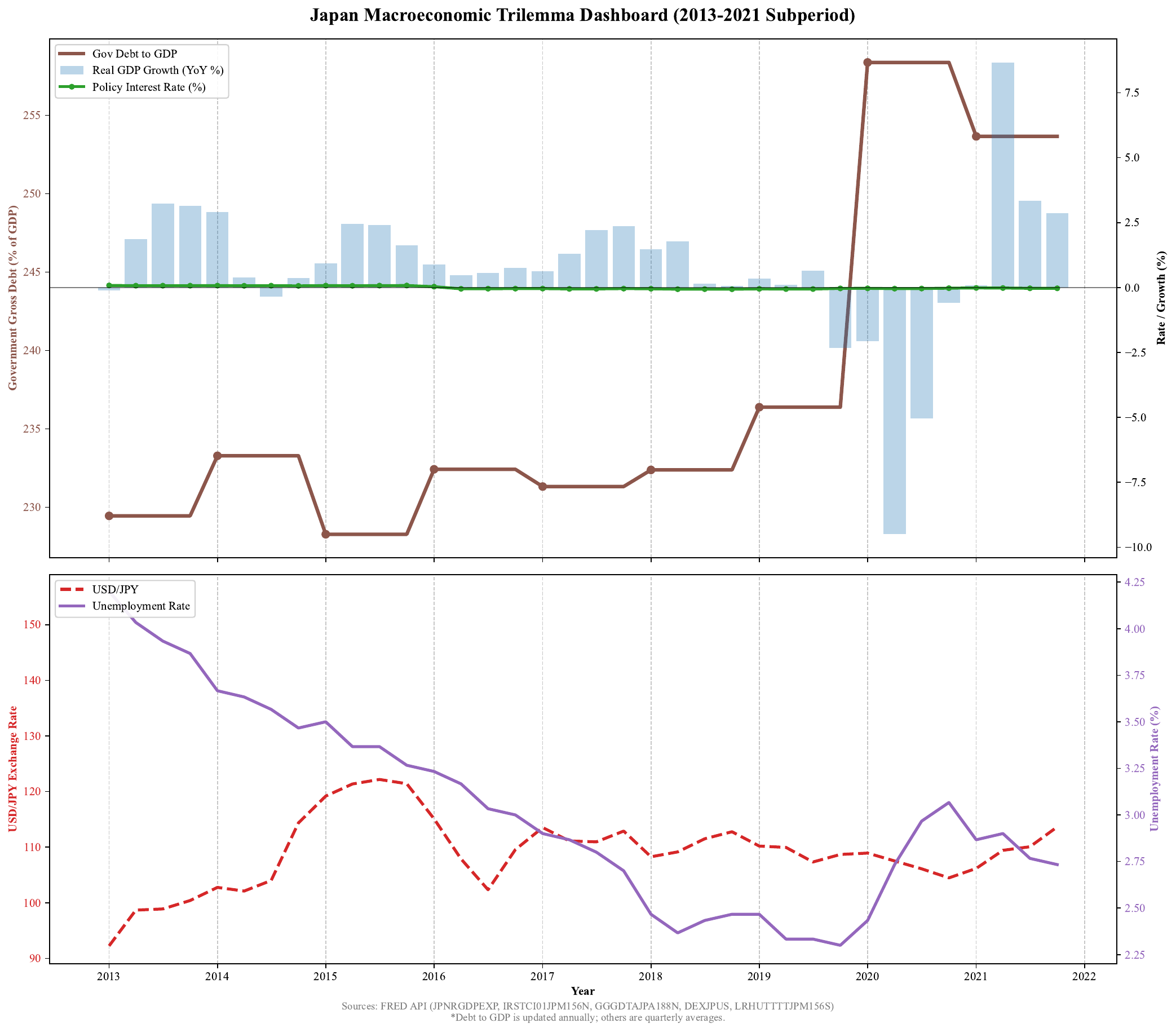}
  \caption{Japan Macroeconomic Trilemma Dashboard (2013--2021 Subperiod).
    \textit{Upper panel:} Government Gross Debt/GDP (\%, FRED:\,GGGDTAJPA188N,
    IMF definition; see footnote~2),
    Real GDP Growth YoY (\%, FRED:\,JPNRGDPEXP),
    Policy Interest Rate (\%, FRED:\,IRSTCI01JPM156N).
    \textit{Lower panel:} USD/JPY (FRED:\,DEXJPUS) and Unemployment Rate
    (\%, FRED:\,LRHUTTTTJPM156S).
    Data source: FRED API, Federal Reserve Bank of St.\ Louis.}
  \label{fig:trilemma_early}
\end{figure}

\begin{figure}[htbp]
  \centering
  \includegraphics[width=0.88\textwidth]{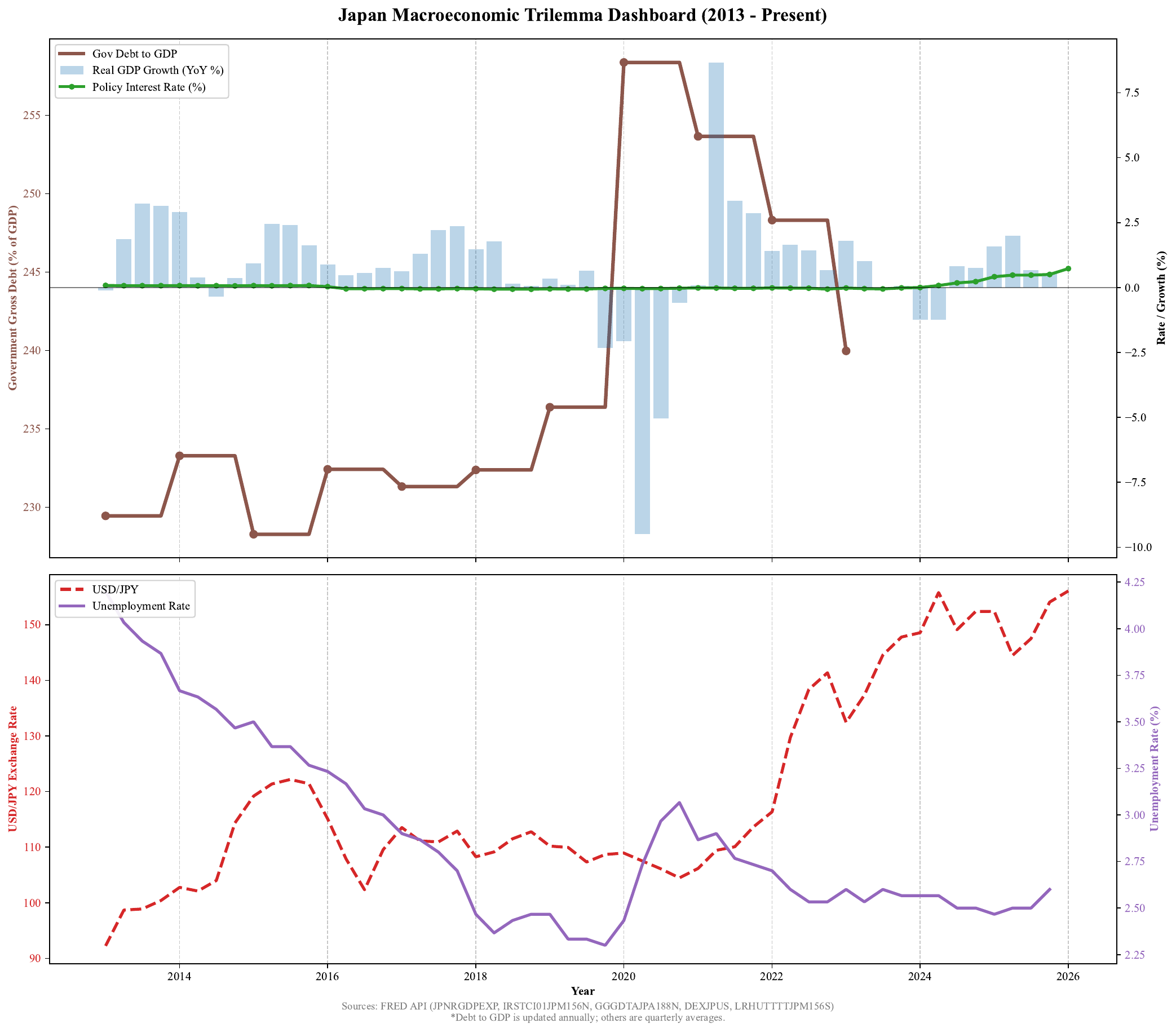}
  \caption{Japan Macroeconomic Trilemma Dashboard (2013--Present, Full Panel).
    The upper panel shows the gross debt-to-GDP ratio declining from its 2020--2021
    peak of $\approx$260--264\% back to $\approx$236--240\% by early 2026
    \emph{without} fiscal austerity---consistent with the operation of the JFR-rg Base
    Effect mechanism, though alternative explanations (including cyclical base effects
    and nominal GDP recovery) also contributed. The policy rate (green dotted line)
    began its cautious normalization from 2024.
    Data source: FRED API (GGGDTAJPA188N, JPNRGDPEXP, IRSTCI01JPM156N, DEXJPUS,
    LRHUTTTTJPM156S), Federal Reserve Bank of St.\ Louis.}
  \label{fig:trilemma_full}
\end{figure}

\subsection{The \texorpdfstring{$r$-$g$}{r-g} Spread: Defining the Safe Zone
  (Figure~\ref{fig:rg_spread})}

Figure~\ref{fig:rg_spread} tracks the real-time spread between the nominal interest
rate $r^n$ (10-year JGB yield, FRED:\,IRLTLT01JPM156N) and the nominal GDP growth rate
$g^n$ (YoY), spanning approximately 2000 to 2026.  The historical record
since approximately 2013 shows sustained green-zone operation, with the $r^n - g^n$
spread averaging approximately $-1.0$ to $-2.0\%$. The brief red-zone episodes
corresponding to the 2008--2009 GFC and Q1--Q2 2020 pandemic shock confirm the
model's prediction: it is exogenous demand collapses (negative $g^n$ shocks) that
trigger debt instability, not the debt level itself. As of early 2026, the spread
remains in the green zone at approximately $-3.0$ to $-0.8\%$.

\begin{figure}[htbp]
  \centering
  \includegraphics[width=0.88\textwidth]{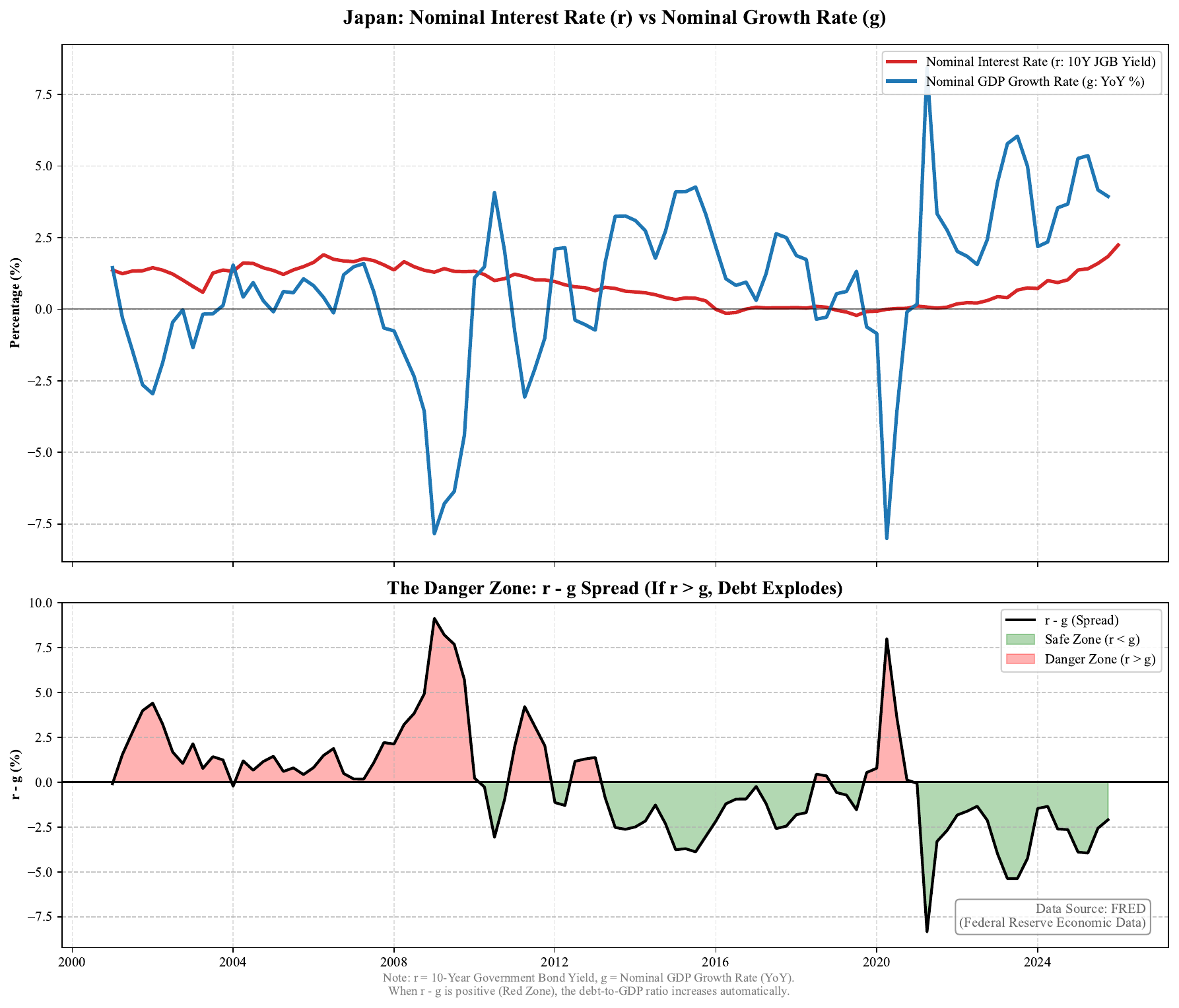}
  \caption{Japan: Nominal Interest Rate ($r^n$) vs.\ Nominal Growth Rate ($g^n$),
    2000--Present.
    \textit{Upper panel:} 10-year JGB Yield ($r^n$, FRED:\,IRLTLT01JPM156N) and
    Nominal GDP Growth YoY ($g^n$, derived from FRED:\,JPNNGDP).
    \textit{Lower panel:} $r^n - g^n$ Spread with Safe Zone ($r^n < g^n$, green) and
    Danger Zone ($r^n > g^n$, red) shading.
    Data source: FRED API, Federal Reserve Bank of St.\ Louis.}
  \label{fig:rg_spread}
\end{figure}

\subsection{The Financial Repression Monitor
  (Figures~\ref{fig:repression_full} and~\ref{fig:repression_monthly})}

Figures~\ref{fig:repression_full} and~\ref{fig:repression_monthly} provide the direct
empirical identification of the financial repression bias $\varepsilon_t \equiv
\pi_t - r^n_t$ (formally defined in Section~\ref{sec:repression_bias}).

Figure~\ref{fig:repression_monthly} covers 2013--2021 with monthly granularity.%
\footnote{Figure~\ref{fig:repression_full} uses annual CPI data extrapolated to
monthly frequency (as noted in the figure caption) because FRED discontinued monthly
Japanese CPI publication. Figure~\ref{fig:repression_monthly} uses the raw monthly
series available through 2021. The two figures are presented together to provide both
long-run context and granular high-frequency validation.}
The 2013--2015 period generated the deepest negative real rate ($\varepsilon_t \approx
2.5$--$3.0\%$), driven by the Abenomics inflation surprise. Figure~\ref{fig:repression_full}
shows that this dynamic reasserted powerfully after 2022, as CPI YoY surged to
2.5--3.2\% while the Bank of Japan maintained the 10-year yield near zero under YCC
through mid-2024.

The empirical identification at the March 2026 data point depends critically on the
inflation measure chosen, and we present \textbf{two parallel calibrations}:

\medskip
\noindent\textbf{Baseline calibration ($\pi_t = 2.7\%$, BoJ projection):}
\[
  \varepsilon_t \equiv \pi_t - r^n_t \approx 2.7\% - 2.2\% = +0.5\%.
\]
This value reflects the Bank of Japan's fiscal-year 2025 CPI central forecast
(October 2025 \textit{Outlook for Economic Activity and Prices}) and the
2025 calendar-year realized average. Under this calibration, Japan remains
marginally in the financial repression zone ($\varepsilon_t > 0$), and the
March 2026 operating point lies near---but just outside---the Stability
Corridor boundary (Corridor Width $\approx -0.024\%$; see Section~\ref{sec:corridor}).
All simulations in Section~4 use this calibration.

\medskip
\noindent\textbf{Alternative calibration ($\pi_t = 1.5\%$, January 2026 actual):}
\[
  \varepsilon_t \equiv \pi_t - r^n_t \approx 1.5\% - 2.2\% = -0.7\%.
\]
The January 2026 national CPI (Statistics Bureau of Japan, released 20~February
2026) recorded year-on-year headline inflation of 1.5\%---the lowest reading since
March 2022---driven by the expiry of subsidy-reversal base effects and slowing
food-price inflation. Under this calibration, Japan sits marginally in the
\emph{positive real-rate zone} ($\varepsilon_t < 0$) as of January 2026.

\medskip
\noindent\textbf{Consistency of the corridor distance across calibrations.}
Because $\gnomstar$ is a structural parameter---determined by real potential
growth $g^*_t$ and trend inflation rather than by the realized month-to-month
$\pi_t$---the two calibrations imply \emph{the same} signed distance from the
Stability Corridor boundary under the paper's maintained calibration
($\gnomstar \approx 3.0\%$). This can be verified directly from the corridor
width formula of Property~\ref{prop:corridor_width}: the numerator
$(\varepsilon_t + \gnomstar) - (\pi_t + (d_t - s_t)/b_{t-1})$ equals
$r^n_t - g^{n*}_t - (d_t - s_t)/b_{t-1}$ (because $\varepsilon_t + \gnomstar
- \pi_t = -(r^n_t - g^{n*}_t)$, up to the convention
$g^n_t = g^{n*}_t$ under $\Delta e_t = 0$), and therefore depends on
$\pi_t$ and $\varepsilon_t$ only through their difference $r^n_t = \pi_t -
\varepsilon_t$, which is invariant across the two calibrations ($r^n_t = 2.2\%$
in both). Substituting: $W = [2.2\% - 3.0\% - 2.0\%/2.40]/\sqrt{2}
\approx -0.024\%$ under both calibrations. What differs between calibrations
is therefore not the quantitative distance from the corridor boundary, but the
\emph{institutional interpretation} of the repression channel---active under
$\varepsilon_t > 0$, and temporarily inactive under $\varepsilon_t < 0$---and
the forward-looking question of whether the January 2026 CPI dip proves
transitory (as the BoJ's FY2025 projection implies) or persistent.

\medskip
\noindent The qualitative conclusions of the paper---the Normalization Trap
(Proposition~\ref{prop:norm_trap}),
the Ratchet (Proposition~\ref{prop:ratchet}), and the policy directional
recommendations---are
robust to both calibrations. Readers should interpret the baseline
simulations as conditional on the BoJ projection materializing. Where the two
calibrations diverge in the forward-looking trajectory of $\varepsilon_t$,
the sensitivity is documented in Appendix~E (Panel~B).

\begin{figure}[htbp]
  \centering
  \includegraphics[width=0.88\textwidth]{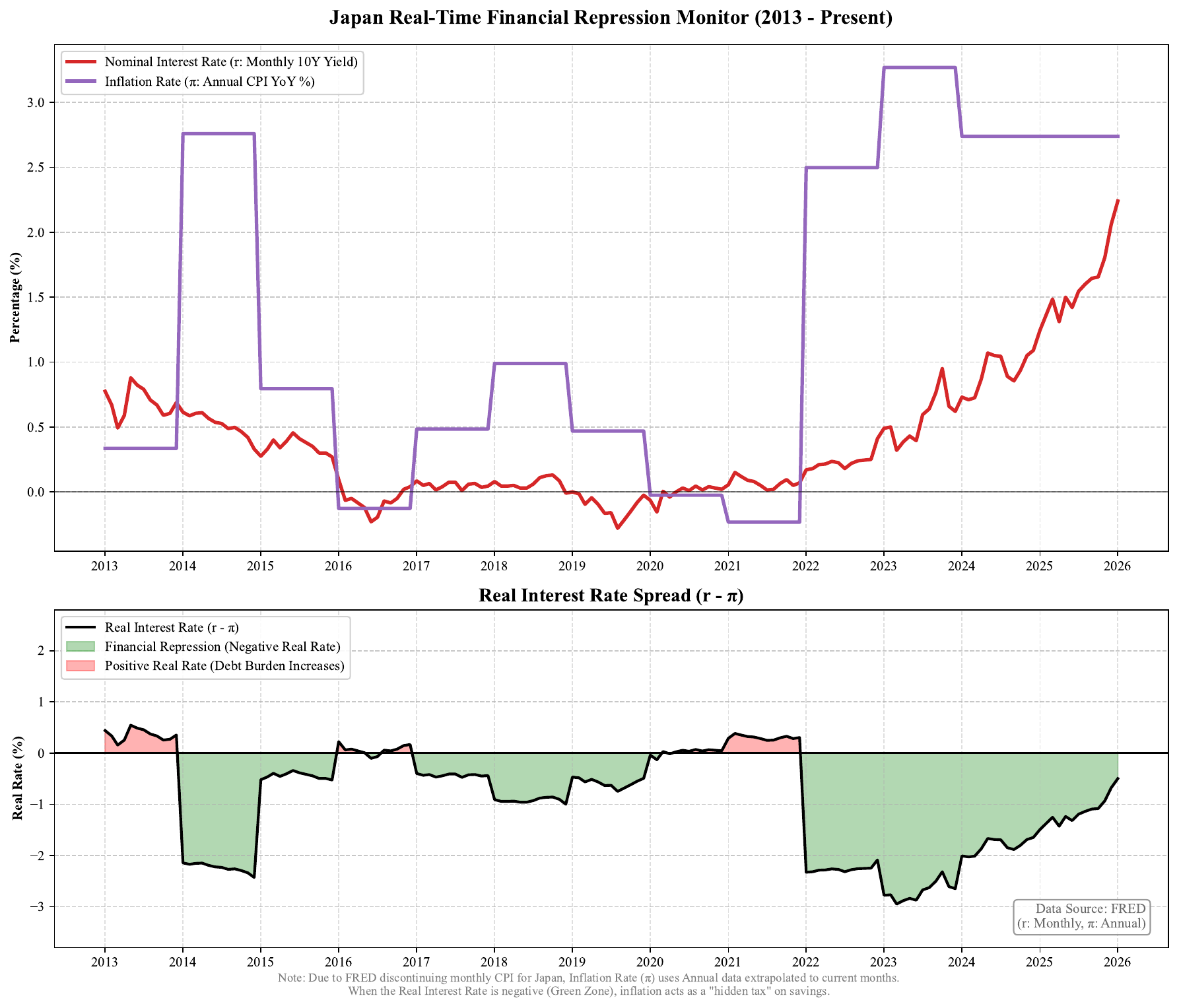}
  \caption{Japan Real-Time Financial Repression Monitor (2013--Present, Full Panel).
    \textit{Upper panel:} Nominal 10Y JGB Yield ($r^n$, FRED:\,IRLTLT01JPM156N) and
    Inflation Rate ($\pi$, Annual CPI YoY, FRED:\,FPCPITOTLZGJPN).
    \textit{Lower panel:} Financial Repression Bias
    $\varepsilon_t \equiv \pi_t - r^n_t$
    with Repression Zone ($\varepsilon_t > 0$, green) and Positive Real Rate Zone
    ($\varepsilon_t < 0$, red) shading.
    Note: CPI data are annual; monthly values use forward-filled annual data due to
    FRED data availability. The deep green zone post-2022 reflects the largest
    observed $\varepsilon_t$ in the sample. Forward-filling introduces a step-function
    approximation that may delay the apparent timing of sign changes in $\varepsilon_t$
    by up to 12 months relative to the true monthly series; comparison with
    Figure~\ref{fig:repression_monthly} (monthly data, 2013--2021) confirms that
    this artifact does not alter the directional identification of repression episodes.
    Data source: FRED API (IRLTLT01JPM156N, FPCPITOTLZGJPN),
    Federal Reserve Bank of St.\ Louis.}
  \label{fig:repression_full}
\end{figure}

\begin{figure}[htbp]
  \centering
  \includegraphics[width=0.88\textwidth]{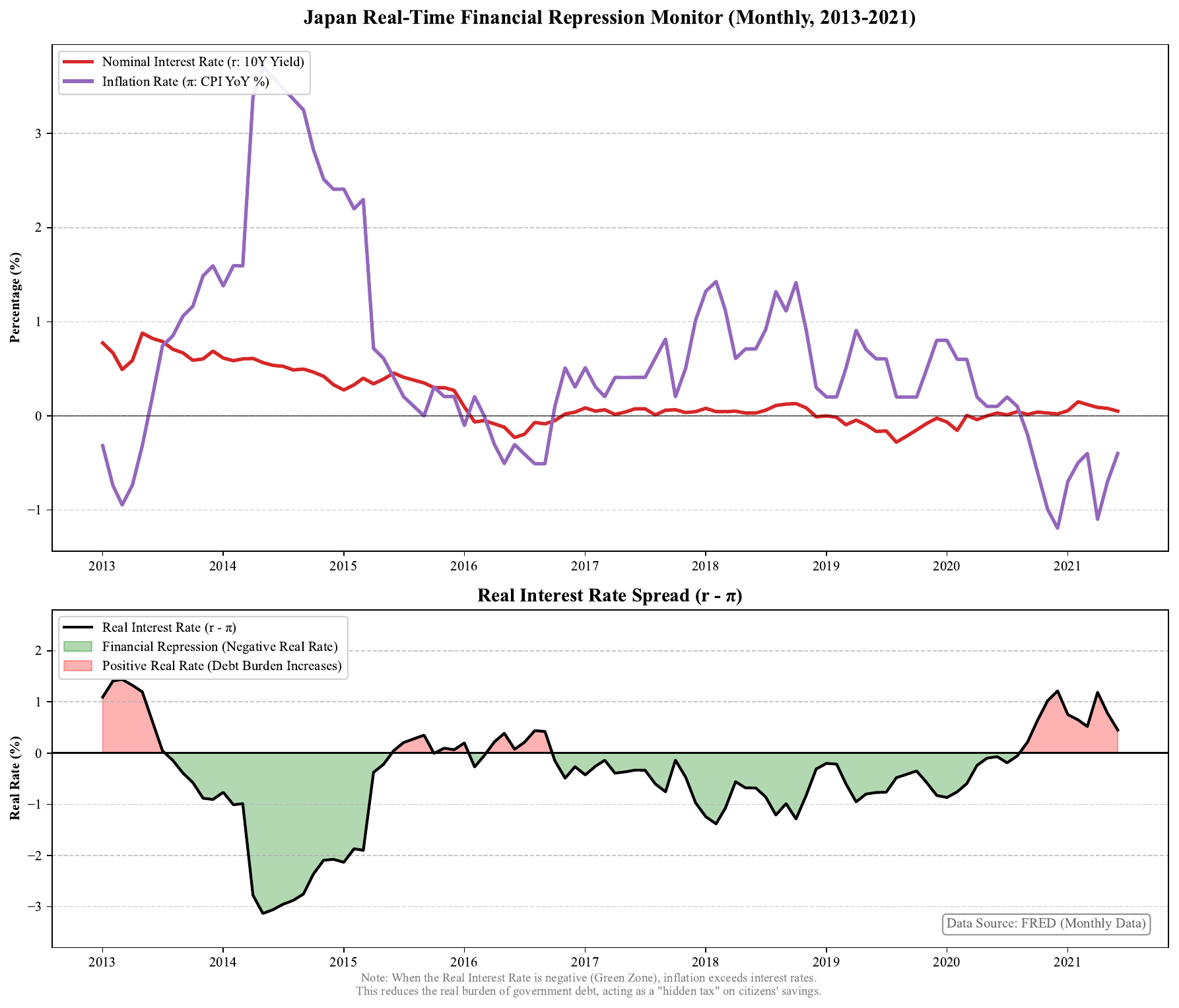}
  \caption{Japan Real-Time Financial Repression Monitor (Monthly, 2013--2021
    Subperiod). \textit{Upper panel:} Nominal 10Y JGB Yield $r^n$ vs.\ CPI YoY $\pi$
    (monthly granularity). \textit{Lower panel:} $\varepsilon_t \equiv \pi_t - r^n_t$
    with zone shading. The brief positive real rate episodes (pink) represent
    periods when the repression mechanism is temporarily inactive.
    Data source: FRED API (IRLTLT01JPM156N, JPNCPIALLMINMEI),
    Federal Reserve Bank of St.\ Louis (Monthly Data).}
  \label{fig:repression_monthly}
\end{figure}

\subsection{Alternative Explanations: A Systematic Assessment}
\label{sec:alt_explanations}

A critical methodological obligation for a framework built on stylized facts
is to confront, rather than sidestep, the principal competing explanations
for the same observations.  Table~\ref{tab:alt_explanations} systematizes
this assessment across the four central empirical regularities documented in
Sections~3.1--3.4, presenting (i)~the JFR-rg interpretation, (ii)~the most
plausible alternative explanation(s), and (iii)~the discriminating evidence
from the subsample analysis, Chow structural-break test, and VAR reported in
Appendix~\ref{app:empirical_supplement}.

Three findings from that empirical supplement merit advance notice and are
reported in detail in Section~\ref{sec:empirical_evidence} below.
The most direct and methodologically robust evidence comes from Local Projections
(Jord\`{a}, 2005; Table~\ref{tab:lp}), which are employed as the primary dynamic
evidence because they are robust to the non-stationarity of $\varepsilon_t$ and
$r^n - g^n$ identified in Section~H.3, and to model misspecification in small
samples: a $+1$ pp shock to $\varepsilon_t$ generates a cumulative $\Delta b_t$
response of $-8.20$ pp (HAC $t$-ratio $> 5$) at $h = 5$ years (debt compression
channel, Proposition~\ref{prop:base_effect}), while a $+1$ pp shock to $r^n - g^n$
generates $+2.24$ pp ($p < 0.01$) at $h = 5$ years (Normalization Trap channel,
Proposition~\ref{prop:norm_trap}).
Importantly, the debt compression effect achieves statistical significance only
at $h \geq 3$ years, indicating that the financial repression mechanism operates
with a lag of approximately three years at the annual frequency---consistent with
the gradual amortization profile of the JGB portfolio and not in conflict with the
accounting-layer propositions, which hold instantaneously by construction.
Corroborating this, the mean $\varepsilon_t$ shifts from $-2.03$\% (1991--2012)
to $+0.89$\% (2013--2026)---a sign reversal significant at the 1\% level
(Table~\ref{tab:subsample})---directly ruling out the alternative explanation
that chronic low inflation mechanically produced $\varepsilon_t > 0$.
Finally, the Chow test rejects parameter stability at 2013 in the
$\Delta b_t = \alpha + \beta(r^n_t - g^n_t) + u$ regression using the full
1991--2026 annual sample ($N = 29$; $F(2,25) = 5.55$, $p = 0.010$;
Table~\ref{tab:chow}), indicating that the
\emph{transmission} from the $r-g$ spread to debt dynamics itself changed---not
merely the level of the spread.

None of these results \emph{uniquely} identify the JFR-rg causal mechanism;
alternative explanations in column 3 of Table~\ref{tab:alt_explanations}
remain possible and are explicitly flagged.  The purpose of the supplement
is to demonstrate that the JFR-rg framework is \emph{consistent with} the
available evidence in a disciplined, falsifiable sense, and to identify
the discriminating tests that future empirical work should target.

\begin{table}[p]
\centering
\caption{Observed Facts, JFR-rg Interpretations, Alternative Explanations,
  and Falsifying Conditions. Column~4 reports discriminating evidence from
  Appendix~\ref{app:empirical_supplement}. Column~5 lists directly observable
  conditions that would falsify the JFR-rg interpretation.}
\label{tab:alt_explanations}
\adjustbox{max width=\textwidth}{%
\begin{tiny}
\setlength{\tabcolsep}{2.5pt}
\renewcommand{\arraystretch}{1.15}
\begin{tabularx}{\textwidth}{>{\raggedright\arraybackslash}p{1.85cm}
  >{\raggedright\arraybackslash}p{3.30cm}
  >{\raggedright\arraybackslash}p{2.80cm}
  >{\raggedright\arraybackslash}p{3.30cm}
  >{\raggedright\arraybackslash}X}
\toprule
\textbf{Observed Fact}
  & \textbf{JFR-rg Interpretation}
  & \textbf{Principal Alternative(s)}
  & \textbf{Discriminating Evidence}
  & \textbf{Falsifying Condition} \\
\midrule
Gross debt/GDP fell $\approx$20~pp (2021--2026) without fiscal
consolidation
  & Base Effect Lever: $-(r^n-g^n)\times b_{t-1}\approx1.92\%$/yr
    absorbs the structural 2\% primary deficit
    (Prop.~\ref{prop:base_effect}).
    Consistent with stabilization through JFR-rg channels.
  & (A) Post-COVID nominal GDP rebound raises denominator mechanically;
    (B) cyclical tax revenue surge; (C) IMF annual data revision smoothing
  & Chow test at 2013 rejects $H_0$ of parameter stability
    (Table~\ref{tab:chow}): the $r-g \to \Delta b$ transmission itself
    changed, not only the spread level.
  & $r^n - g^n > 0$ sustained 3+ years \emph{without} debt
    accumulation at the rate predicted by
    eq.~\eqref{eq:sim_transition}.
  \\[2pt]
\midrule
$\varepsilon_t > 0$ sustained throughout 2013--2026
  & YCC and LSAP deliberately engineer $r^n < \pi$ within a captive
    system ($\varphi_t \approx 0.90$), eliminating arbitrage
    (Section~\ref{sec:captive})
  & (A) Global $r^*$ decline makes low $r^n$ a natural equilibrium
    (Holston et al., 2017) independent of BoJ policy;
    (B) Japan's chronic low inflation mechanically produces
    $\varepsilon_t > 0$
  & Subsample $\varepsilon_t$ \emph{reversal}: mean was negative
    (1991--2012, Table~\ref{tab:subsample}), ruling out
    alternative~(B). VAR FEVD quantifies share of $\Delta b_t$
    variance attributable to $\varepsilon_t$ shocks
    (Table~\ref{tab:fevd}).
  & $\varepsilon_t < 0$ persisting 4+ quarters with
    $\varphi_t \geq \bar{\varphi}$, showing BoJ cannot maintain
    repression even with captive system intact.
  \\[2pt]
\midrule
$r^n - g^n < 0$ sustained since 2013; contrast with
$r^n - g^n > 0$ in 1991--2012
  & Structural break in 2013: simultaneous activation of SC1
    ($\varphi_t \nearrow$) and SC2 ($\Delta e_t \leq \bar{e}$)
    under Abenomics (Section~\ref{sec:mainstream})
  & (A) Mean reversion from post-bubble $r^n$;
    (B) global secular stagnation trend;
    (C) Abenomics fiscal stimulus raised $g^n$ mechanically
  & Chow $F$-test for structural break in $r-g \to \Delta b$ at
    2013 (Table~\ref{tab:chow}). Alternative~(C) predicts break
    in $g^n$ only; JFR-rg predicts joint break in $\varepsilon_t$
    and transmission slope.
  & Chow test showing break in $g^n$ alone, consistent with
    alternative~(C) but not JFR-rg's joint SC1+SC2 activation
    claim.
  \\[2pt]
\midrule
Debt ratio stable at 240\%; no market-access crisis 2013--2026
  & Captive system ($\varphi_t \approx 0.90$) suppresses risk
    premium $\rho_t$, preventing sudden stop; FTPL trigger
    structurally delayed while $\varphi_t \geq \bar{\varphi}$
    (Section~\ref{sec:captive})
  & (A) Domestic home bias without formal $\varphi_t$ accounting;
    (B) BoJ commitment credibility rather than balance-sheet
    arithmetic
  & LP-IRF (Table~\ref{tab:lp}): $\Delta b_t$ response to
    $+1$~pp shock in $\varepsilon_t$ is $-8.20$~pp*** at $h=5$;
    to $+1$~pp in $r^n-g^n$ is $+2.24$~pp*** at $h=5$.
    FEVD: $r^n-g^n$ shocks explain 46.5\% of $\Delta b_t$ variance.
  & $\varphi_t$ declining toward and below the 0.85 early-warning
    benchmark \emph{without} measurable
    increase in JGB yield spreads or sovereign risk premium.
  \\
\bottomrule
\end{tabularx}
\end{tiny}
}
\par\smallskip
{\footnotesize\raggedright\noindent
\textit{Note:} Column~4 references tables in
Appendix~\ref{app:empirical_supplement} (FRED data, March 2026
snapshot). Column~5 states observable outcomes that would undermine
each JFR-rg interpretation; all falsifying conditions are expressible
in terms of directly observable FRED series or BoJ Flow of Funds data.\par}
\end{table}


\subsection{Empirical Discriminating Evidence}
\label{sec:empirical_evidence}

The systematic assessment in Table~\ref{tab:alt_explanations} rests on a
quantitative empirical architecture documented in detail in
Appendices~\ref{app:empirical_supplement}, \ref{app:phi_threshold},
\ref{app:v6_fair}, and~\ref{app:v7}. Its role is not to establish a uniquely
identified causal channel, but to demonstrate that the JFR-rg interpretation
remains disciplined, falsifiable, and empirically competitive relative to the
principal alternatives.

Three results are especially relevant for the transition from stylized facts to
the formalization in Sections~3.7--3.15. First, Local Projections
(Jord\`{a}, 2005; Table~\ref{tab:lp}) provide the primary dynamic evidence
because they are robust to the non-stationarity of $\varepsilon_t$ and
$r^n - g^n$ identified in Appendix~H.3, and to model misspecification in small
samples. A $+1$~pp shock to $\varepsilon_t$ generates a cumulative
$\Delta b_t$ response of $-8.20$~pp (HAC $t$-ratio $> 5$) at $h = 5$ years,
confirming the debt-compression channel of
Proposition~\ref{prop:base_effect}. A $+1$~pp shock to $r^n - g^n$ generates
$+2.24$~pp ($p < 0.01$) at $h = 5$ years, confirming the Normalization Trap
channel of Proposition~\ref{prop:norm_trap}. Critically, the compression
effect achieves statistical significance only at $h \geq 3$ years,
consistent with the gradual amortization profile of the JGB portfolio and not
in conflict with the accounting-layer propositions, which hold instantaneously
by construction.

Second, subsample analysis (Table~\ref{tab:subsample}) shows that the mean
$\varepsilon_t$ reverses sign from $-2.03\%$ (1991--2012) to $+0.89\%$
(2013--2026), significant at the 1\% level. This directly rules out the
alternative explanation that chronic low inflation mechanically produced
$\varepsilon_t > 0$ throughout the sample.

Third, a Chow test rejects parameter stability at 2013 in the regression
$\Delta b_t = \alpha + \beta(r^n_t - g^n_t) + u$ over the full 1991--2026
annual sample ($N = 29$; $F(2,25) = 5.55$, $p = 0.010$; Table~\ref{tab:chow}).
This indicates that the \emph{transmission} from the $r$-$g$ spread to debt
dynamics itself changed at 2013---not merely the level of the spread. The
JFR-rg account predicts a joint break in $\varepsilon_t$ and the transmission
slope (from simultaneous SC1$+$SC2 activation); a break in $g^n$ alone would
favor the Abenomics-fiscal alternative~(C) in Table~\ref{tab:alt_explanations}.

Appendix~\ref{app:phi_threshold} provides the empirical basis for treating
$\varphi_t$ as more than a theoretical placeholder, while
Appendix~\ref{app:v6_fair} tests each proposition under confounder-aware
designs, and Appendix~\ref{app:v7} applies international placebo comparisons
and regime-conditioned LSTAR as cross-validation devices. Taken together,
these exercises do not prove that JFR-rg is the only admissible
interpretation, but they do show that the post-2013 episode is not well
characterized as a continuation of the 1991--2012 regime, and that the
institutional channels highlighted here produce patterns---in dynamics,
cross-country contrasts, and nonlinear pass-through---that the principal
alternatives do not readily replicate.

\medskip
\noindent Table~\ref{tab:empirical_scorecard} summarizes how the main
empirical exercises should be interpreted. Because the exercises address
different empirical margins, the aim is not to force a single overall verdict
but to clarify what each exercise contributes and where its main limitation
lies.

\begin{table}[p]
\caption{Empirical Evidence Scorecard.}
\label{tab:empirical_scorecard}
\centering
\fontsize{9}{11}\selectfont
\begin{adjustbox}{max width=\textwidth}
\begin{tabular}{>{\raggedright\arraybackslash}p{2.7cm} >{\raggedright\arraybackslash}p{3.0cm} >{\raggedright\arraybackslash}p{4.0cm} >{\raggedright\arraybackslash}p{2.6cm} >{\raggedright\arraybackslash}p{3.0cm}}
\toprule
\textbf{Exercise} & \textbf{What it tests} & \textbf{Main result} & \textbf{Main limitation} & \textbf{Interpretation for JFR-rg} \\
\midrule

Local Projections (Appendix~H.5)
& Dynamic debt response to $\varepsilon_t$ and $(r^n-g^n)$ shocks
& Directionally supportive for both core channels
& Small annual sample; not structural identification
& \textbf{Preferred dynamic evidence} \\
\midrule

Chow break test (Appendix~H.2)
& Whether the debt-transmission pattern changes after 2013
& Structural stability rejected at 2013
& Break date is historically motivated
& \textbf{Supportive of regime change} \\
\midrule

Subsample analysis (Appendix~H.1)
& Whether post-2013 observations differ from 1991--2012
& Clear shift in key observables, including $\varepsilon_t$
& Descriptive rather than causal
& \textbf{Supportive background evidence} \\
\midrule

Fair empirical tests (Appendix~K)
& Proposition-level mechanism checks
& Mixed: some supportive, some weak or unconfirmed
& Several tests are sample-constrained
& \textbf{Informative but mixed} \\
\midrule

International placebo (Appendix~L.1--L.2)
& Whether Japan's insulation pattern is country-specific
& Japan shows the strongest insulation pattern
& Comparative, not causal; some raw-sample contamination
& \textbf{Supportive with qualification} \\
\midrule

Regime-conditioned LSTAR (Appendix~L.3)
& Whether pass-through is more regime-dependent in the captive period
& Larger nonlinear fit improvement in the captive regime
& Suggestive rather than decisive
& \textbf{Supplementary support} \\
\midrule

ARDL bounds test (Appendix~H.4)
& Long-run \emph{levels} cointegration
& Null in this specification
& Small sample; not tailored to the multiplicative flow channel
& \textbf{Not decisive for the core mechanism} \\

\bottomrule
\end{tabular}
\end{adjustbox}

\smallskip
\parbox{\linewidth}{\scriptsize\textit{Note:}
These exercises address different empirical margins and therefore should not be
collapsed into a single mechanical score. The Local Projections are the most
direct evidence for the paper's flow-dynamic mechanism, whereas the ARDL bounds
test is informative only about long-run \emph{levels} cointegration.}
\end{table}

\smallskip
\noindent In this evidential hierarchy, the Local Projections carry the greatest
weight for the paper's dynamic mechanism claim. The remaining exercises are best
read as corroborative, discriminating, or scope-setting evidence rather than as
uniformly decisive tests.


\medskip
{\setlength{\fboxsep}{8pt}%
\noindent\fbox{%
\begin{minipage}{0.96\textwidth}
\vspace{2pt}%
\textbf{Institutional Scope Conditions.}
All formal results in this section---including the Debt Sustainability Corridor
(Section~\ref{sec:corridor}), the Normalization Ratchet (Section~\ref{sec:ratchet}),
and the Normalization Trap
(Proposition~\ref{prop:norm_trap})---are derived under the following two empirically
verifiable scope conditions, which must hold for the model to apply.

\medskip
\noindent\textbf{SC1 (Captive Financial System):}
The fraction $\varphi_t$ of JGBs held by domestic institutions remains
sufficiently high that the endogenous sovereign risk premium
$\rho(\varphi_t, b_{t-1})$ does not dominate the repression bias
$\varepsilon_t$. Empirically, Japan satisfies SC1 as of 2026:
$\varphi_t \approx 0.88$--$0.92$ (BoJ Flow of Funds accounts).
Section~\ref{sec:captive} formalizes this condition and defines the
theoretical threshold $\bar{\varphi}$ below which the JFR-rg equilibrium
breaks down; the operational early-warning benchmark of
$\varphi_t \approx 0.85$, discussed in Appendix~\ref{app:phi_threshold}, is
distinct from $\bar{\varphi}$ and is used for real-time monitoring rather
than as a point estimate of the tipping value.

\medskip
\noindent\textbf{SC2 (Exchange-Rate Regime):}
The yen depreciation $\Delta e_t$ remains within the stability window
$\Delta e_t \leq \bar{e}$, so that the nonlinear import-inflation penalty does not
offset the linear growth channel. Post-2022 pass-through data suggest the system
operates near---but within---this window at current levels.

\medskip
\noindent These scope conditions are not assumptions that \emph{make the model valid};
they are \emph{empirically checkable prerequisites} whose current satisfaction is
documented in this section and in Section~\ref{sec:captive}. If either condition
fails, the stability results do not apply and the model's counterfactuals would
require re-estimation.
Outside this domain, standard debt dynamics apply without modification: the JFR-rg
framework does not supersede mainstream analysis but specifies the institutional
conditions under which it requires supplementation.
\end{minipage}%
}}

\medskip

\subsection{Consolidated Government Budget Constraint}

Let $B_t$ denote the nominal stock of gross government debt, $Y_t$ nominal GDP,
$r^n_t$ the nominal policy-relevant interest rate (approximated by the 10-year JGB
yield), $g^n_t$ the nominal GDP growth rate, and $D_t$ the nominal primary deficit.
The standard nominal government budget constraint is:
\begin{equation}
  B_t = (1 + r^n_t)\, B_{t-1} + D_t - S_t,
  \label{eq:nominal_bc}
\end{equation}
where $S_t$ represents seigniorage revenue from the Bank of Japan's asset purchase
programs.

\medskip
\noindent\textbf{Consolidated government budget constraint with IOER.}
Standard formulations of equation~\eqref{eq:nominal_bc} treat the government and the
central bank as separate entities. Following the long-standing concern in the fiscal
dominance literature (Sargent and Wallace, 1981) that large-scale asset purchases
eventually impair monetary independence---a concern later formalized in the FTPL
literature (Leeper, 1991; Sims, 1994; Woodford, 1994, 1995)---we derive the
\emph{consolidated} government budget constraint by combining the fiscal authority and
the Bank of Japan into a single balance sheet. This is not a theoretical novelty; it is a straightforward accounting
exercise whose implications become quantitatively significant only when the central
bank's balance sheet reaches the current scale.

Let $R_t$ denote the stock of Bank of Japan current account deposits (reserves), $i^R_t$
the interest on excess reserves (IOER) rate, and $B^{BoJ}_t$ the stock of JGBs held by
the BoJ. The BoJ's flow constraint is:
\[
  B^{BoJ}_t - B^{BoJ}_{t-1} = R_t - R_{t-1} + \Pi_t,
\]
where $\Pi_t$ is BoJ profit remitted to the Treasury. The interest paid by the BoJ on
reserves is $i^R_t \cdot R_{t-1}$, while it receives $r^n_t \cdot B^{BoJ}_{t-1}$ on
its JGB holdings. Consolidating and cancelling the intra-sector flows
($r^n_t \cdot B^{BoJ}_{t-1}$ nets out), the consolidated budget constraint becomes:
\begin{equation}
  B^{pub}_t = (1 + r^n_t)\, B^{pub}_{t-1} + D_t + \underbrace{i^R_t \cdot R_{t-1}}_{\text{IOER cost}} - S_t,
  \label{eq:consolidated_bc}
\end{equation}
where $B^{pub}_t = B_t - B^{BoJ}_t$ is the stock of JGBs held \emph{outside} the BoJ,
and the IOER term represents the interest cost on reserves that does \emph{not} net out
in consolidation---the channel standard models omit because $R_{t-1}$ was negligible
prior to QQE.

Normalizing by $Y_t$ and defining $r^R_t \equiv i^R_t \cdot R_{t-1}/Y_{t-1}$ as the
IOER burden ratio:
\begin{equation}
  \Delta b_t = (r^n_t - g^n_t)\, b_{t-1} + d_t + r^R_t - s_t.
  \label{eq:standard_dynamics}
\end{equation}
For the remainder of the paper, the baseline simulation (Scenario~A) treats
$r^R_t \approx 0$ because the current IOER rate is near zero and the fiscal impact is
modest. However, Section~\ref{sec:scenC} demonstrates that under Scenario~C, the IOER term becomes
the quantitatively decisive channel through which a $+1.5\%$ rate shock generates
$+1.5$ pp of primary deficit expansion---a result that follows directly from the
arithmetic of equation~\eqref{eq:consolidated_bc}, not from a behavioral assumption.

\subsection{Financial Repression Bias
  \texorpdfstring{($\varepsilon_t$)}{(epsilon)}}
\label{sec:repression_bias}

In high-debt regimes, estimating the unobservable natural rate of interest $r^*_t$ is
highly sensitive to model specification (Holston, Laubach, and Williams, 2017, estimate
Japan's $r^*_t$ as approximately 0.0\%--0.5\%, far below the 4.0\% implied by a
standard loanable-funds assumption). To avoid this identification problem, the JFR-rg
model \emph{directly identifies} the financial repression bias from observable FRED
data as the real interest rate gap:
\begin{equation}
  \varepsilon_t \;\equiv\; \pi_t - r^n_t \;\in\; \mathbb{R};\qquad
  \text{Financial Repression Zone} \iff \varepsilon_t > 0,
  \label{eq:epsilon_def}
\end{equation}
where $\pi_t$ is CPI inflation and $r^n_t$ is the nominal JGB yield---both directly
available from FRED. When $\varepsilon_t > 0$, the economy is in the financial
repression zone ($r^n < \pi$, green in Figures~\ref{fig:repression_full}
and~\ref{fig:repression_monthly}). When $\varepsilon_t < 0$, the real rate is positive
and the repression channel is inactive.

Substituting the identity $r^n_t = \pi_t - \varepsilon_t$ into
equation~\eqref{eq:standard_dynamics} yields the core JFR-rg accumulation equation:
\begin{equation}
  \Delta b_t = \bigl[(\pi_t - \varepsilon_t) - g^n_t\bigr]\, b_{t-1} + d_t - s_t.
  \label{eq:jfr_core}
\end{equation}

This formulation has a key property: the \emph{core accounting block}---$\pi_t$,
$r^n_t$ (via $\varepsilon_t$), $g^n_t$ (observed nominal GDP growth), $b_{t-1}$,
and $d_t$---is directly observable from FRED without theoretical assumptions. The
exchange-rate block parameters ($\gnomstar$, $\alpha$, $\beta$, $\ebar$) and the
corridor boundaries require calibration or structural estimation, as noted in
Section~\ref{sec:corridor}. The Base Effect mechanism is now fully transparent. Rearranging equation~\eqref{eq:jfr_core}:
\begin{equation}
  \Delta b_t = \underbrace{(\pi_t - g^n_t)\,b_{t-1}}_{\text{nominal growth channel}}
             - \underbrace{\varepsilon_t \cdot b_{t-1}}_{\text{repression channel}}
             + d_t - s_t.
  \label{eq:jfr_decomposed}
\end{equation}

The two channels together sum to the $r^n - g^n$ effect: $(\pi_t - g^n_t) - \varepsilon_t
= (\pi_t - g^n_t) - (\pi_t - r^n_t) = r^n_t - g^n_t$, confirming that there is no
double-counting. At $\varepsilon_t = 0.5\%$ and $b_{t-1} = 2.40$, the repression channel
generates 1.2\% of GDP per year in compression. At the peak $\varepsilon_t = 2.7\%$
observed in 2014 (Figure~\ref{fig:repression_monthly}), the channel generates 6.5\%
of GDP per year---a magnitude that dwarfs typical primary deficits.

\paragraph{Note on the post-YCC regime (March 2024--present).}
Following the formal termination of Yield Curve Control in March 2024, the
10-year JGB yield ($r^n_t \approx 2.2\%$ as of March 2026) is no longer
subject to an explicit administrative ceiling. A methodological clarification
is therefore warranted: the financial repression bias $\varepsilon_t \approx
+0.5\%$ in the current calibration does not reflect the same institutional
mechanism as the explicit YCC period (2016--2024). In the post-YCC regime,
$\varepsilon_t > 0$ is better understood as a \emph{balance-sheet backstop
effect}: the Bank of Japan's continued holdings of approximately 46\% of
outstanding JGBs ($\varphi_t \approx 0.90$, inclusive of all domestic
institutions) structurally compress the term premium below the level that
would prevail in a fully market-determined environment, even without an
explicit yield target (Bernanke and Reinhart, 2004; Joyce et al., 2011).
The JFR-rg framework's observable $\varepsilon_t$ captures both the
explicit-ceiling and the balance-sheet-backstop mechanisms under a single
measurable variable; the distinction affects the \emph{institutional
interpretation} of the repression channel but not the \emph{arithmetic}
of equations~\eqref{eq:jfr_core}--\eqref{eq:jfr_decomposed}. Whether
the post-YCC $\varepsilon_t$ reflects residual policy influence or a
new market equilibrium consistent with $\varphi_t \approx 0.90$ is an
open empirical question that the framework can monitor in real time but
cannot resolve without further structural identification.

\subsection{The Nonlinear Yen Effect on Nominal Growth}

Let $e_t$ denote the USD/JPY exchange rate level (yen per dollar), so that $\Delta e_t
> 0$ denotes yen depreciation and $\Delta e_t < 0$ denotes yen appreciation. For
moderate depreciation ($\Delta e_t \leq \ebar$ with $\ebar > 0$), yen weakness boosts
corporate profitability and supports nominal growth $g^n_t$ through the linear channel.
For severe depreciation ($\Delta e_t > \ebar$), the import inflation channel dominates,
compressing real household purchasing power through the quadratic penalty term. We
model this as:
\begin{equation}
  g^n_t = \gnomstar + \alpha\,\Delta e_t - \beta\,\max(0,\, \Delta e_t - \ebar)^2,
  \label{eq:yen_growth}
\end{equation}
where $\gnomstar$ is the structural potential nominal growth rate (driven by real
potential growth $\gstar$ plus trend inflation), $\alpha > 0$ is the marginal nominal
growth enhancement from depreciation below the threshold, $\beta > 0$ is the penalty
parameter, and $\ebar > 0$ is the stability threshold. The baseline calibration
uses $\alpha = 0.050$, the upper bound of the range implied by the Bank of Japan's
Q-JEM macroeconometric model (Haba et al., 2025: a 10\% yen depreciation raises
nominal GDP by approximately 0.2--0.5\% in the first year, yielding
$\alpha \in [0.013,\, 0.050]$ per JPY/USD at a base rate of $\approx$150 JPY/USD);
full sensitivity across this range is reported in Appendix~\ref{app:sensitivity},
Panel~A. The parameters $\beta$ and $\ebar$ are not identified from the 2013--2026
sample (Appendix~\ref{app:sensitivity}, Panel~D) but do not enter the paper's
central normalization-risk results, which are structurally independent of
$(\ebar, \beta)$ because rate hikes induce appreciation ($\Delta e_t < 0$),
never activating the penalty term.

\textit{Note on the sign convention}: Since $\ebar > 0$, the quadratic penalty term
$\beta\,\max(0,\,\Delta e_t - \ebar)^2$ is active only when $\Delta e_t > \ebar > 0$,
i.e., under \emph{excessive depreciation}. Rate hikes, which induce yen appreciation
($\Delta e_t < 0$), operate exclusively through the linear channel $\alpha\,\Delta e_t
< 0$, reducing $g^n_t$ without activating the penalty term. This distinction is critical
for the correct characterization of the Normalization Trap
(Proposition~\ref{prop:norm_trap}) and the tipping-point analysis in Section~\ref{sec:scenBplus}.

Substituting into equation~\eqref{eq:jfr_core}:
\begin{equation}
  \Delta b_t = \Bigl[(\pi_t - \varepsilon_t) - \bigl(\gnomstar + \alpha\,\Delta e_t -
    \beta\,\max(0,\, \Delta e_t - \ebar)^2\bigr)\Bigr] b_{t-1} + d_t - s_t.
  \label{eq:jfr_full}
\end{equation}

\subsection{The JFR-rg Stability Condition}

For the debt-to-GDP ratio to structurally decline ($\Delta b_t \leq 0$), rearranging
equation~\eqref{eq:jfr_full} gives the JFR-rg Stability Condition:
\begin{equation}
  (\pi_t - \varepsilon_t) - \Bigl[\gnomstar + \alpha\,\Delta e_t -
    \beta\,\max(0,\,\Delta e_t - \ebar)^2\Bigr]
  \leq \frac{s_t - d_t}{b_{t-1}}.
  \label{eq:stability_condition}
\end{equation}

All terms in condition~\eqref{eq:stability_condition} are either directly observable
from FRED ($\pi_t$, $\varepsilon_t$, $b_{t-1}$, $d_t$) or structurally estimated
($\gnomstar$, $\alpha$, $\beta$, $\ebar$). The left-hand side equals $r^n_t - g^n_t$;
the right-hand side equals $-(d_t - s_t)/b_{t-1}$, a negative number when $d_t > s_t$.

The ``Normalization Trap'' (Proposition~\ref{prop:norm_trap}) follows directly from
condition~\eqref{eq:stability_condition}: aggressive rate hikes operate through two
simultaneous destabilizing channels. First, raising $r^n_t$ directly reduces $\varepsilon_t$,
increasing the left-hand side. Second, rate hikes induce yen appreciation ($\Delta e_t <
0$), reducing the linear yen growth bonus $\alpha\,\Delta e_t$, which decreases $g^n_t$
and further increases the left-hand side. Both channels push the left-hand side above
the right-hand side, violating condition~\eqref{eq:stability_condition}. The quadratic
penalty term remains \emph{inactive} under appreciation, as it is defined for excessive
depreciation only ($\Delta e_t > \ebar > 0$).

\subsection{The Debt Sustainability Corridor: A Geometric Characterization}
\label{sec:corridor}

{\sloppy The JFR-rg stability condition~\eqref{eq:stability_condition} defines not merely a
scalar threshold but a two-dimensional \emph{stability corridor} in
$(\varepsilon_t,\,\gnomstar)$ policy space (evaluated under exchange-rate-neutral
conditions $\Delta e_t = 0$). This geometric representation provides analytical
clarity absent from the scalar stability condition.\par}

\begin{definition}[Debt Sustainability Frontier]
For given $b_{t-1}$, $d_t$, $s_t$, and $\pi_t$, the \emph{Debt Sustainability
Frontier} (DSF) is the locus of $(\varepsilon_t,\,\gnomstar)$ pairs satisfying
$\Delta b_t = 0$ exactly under the exchange-rate-neutral assumption $\Delta e_t = 0$
(so that $g^n_t = \gnomstar$ by equation~\eqref{eq:yen_growth}):
\[
  \mathcal{F} \;\equiv\; \bigl\{(\varepsilon_t,\,\gnomstar) \;:\;
    \varepsilon_t + \gnomstar = \pi_t + (d_t - s_t)/b_{t-1}\bigr\}.
\]
The \emph{Stability Corridor} $\mathcal{S}$ is the closed region on and above the DSF:
\[
  \mathcal{S} \;\equiv\; \bigl\{(\varepsilon_t,\,\gnomstar) \;:\;
    \varepsilon_t + \gnomstar \geq \pi_t + (d_t - s_t)/b_{t-1}\bigr\},
\]
where the boundary ($\geq$ holding with equality) corresponds to $\Delta b_t = 0$
(debt stabilization), and the interior ($>$) to strict debt reduction.
When $\Delta e_t \neq 0$, the effective operating point is evaluated at
$g^n_t = \gnomstar + \alpha\,\Delta e_t - \beta\,\max(0,\,\Delta e_t - \ebar)^2$
and compared to the DSF defined above.
\end{definition}

The DSF is a straight line in $(\varepsilon_t,\,\gnomstar)$ space with slope $-1$ and
intercept $\pi_t + (d_t - s_t)/b_{t-1}$. Three geometric properties are analytically
important.

\begin{property}[Base Effect Lever]
Assuming $d_t > s_t$ (a primary deficit, as holds throughout the sample) and holding
$d_t$ and $s_t$ fixed, the DSF shifts \emph{downward} (the Stability Corridor expands)
as $b_{t-1}$ increases, because the term $(d_t - s_t)/b_{t-1}$ decreases. Higher debt
reduces the minimum required sum $\varepsilon_t + \gnomstar$, making stability easier
to maintain under exchange-rate-neutral conditions---the precise mathematical statement
of the Base Effect Lever. If $d_t$ were to rise proportionally with $b_{t-1}$ (e.g.,
due to rising interest expenditures), this effect would be partially offset; the
empirically relevant case, captured in the simulations, holds $d_t$ fixed at 2.0\%.
\end{property}

\begin{property}[Normalization Trap as Corridor Departure]
A rate hike $\Delta r^n > 0$ simultaneously shifts $\varepsilon_t$ leftward by $\Delta
r^n$ and reduces the effective growth contribution via yen appreciation (reducing
$\alpha\,\Delta e_t$), moving the operating point diagonally away from $\mathcal{S}$
along the $(-1,-\alpha)$ direction. The system cannot re-enter $\mathcal{S}$ by rate
cuts alone if $b_{t-1}$ has risen during the normalization episode
(see Proposition~\ref{prop:ratchet}).
\end{property}

\begin{property}[Corridor Width and the Near-Miss of March 2026]
\label{prop:corridor_width}
Define the \emph{corridor width} $W$ as the signed perpendicular distance from the
current operating point $(\hat{\varepsilon}_t, \hat{g}^{n*}_t)$ to the DSF, evaluated
under $\Delta e_t = 0$. At the March 2026 calibration values ($\varepsilon_t = 0.5\%$,
$\gnomstar \approx g^n_t = 3.0\%$, $\pi_t = 2.7\%$, $d_t = 2.0\%$, $s_t = 0$,
$b_{t-1} = 2.40$):
\[
  W_{2026} = \frac{(\hat{\varepsilon}_t + \hat{g}^{n*}_t) -
    [\pi_t + (d_t - s_t)/b_{t-1}]}{\sqrt{2}}
  = \frac{3.5\% - 3.5\overline{3}\%}{\sqrt{2}} \approx -0.024\%.
\]
The corridor width is \emph{negative}---the system operates marginally outside
$\mathcal{S}$, consistent with the slow debt accumulation of $+0.08\%$ per year in
Scenario~A. Reaching the boundary of $\mathcal{S}$ (strict debt stabilization,
$\Delta b_t = 0$) requires a shift of $|W|\,\sqrt{2} \approx 0.033\%$ in the
$(\varepsilon_t + \gnomstar)$ direction; achieving strict debt reduction
($\Delta b_t < 0$, the interior of $\mathcal{S}$) requires any positive increment
beyond that threshold.
\end{property}

Figure~\ref{fig:dsc} provides a geometric summary of the Debt Sustainability
Corridor, illustrating the DSF under two debt levels, the March 2026 operating point,
the normalization trajectory across Scenarios B, B$^+$, and C, and the corridor width
$|W_{2026}|$.

\begin{figure}[htbp]
  \centering
  \includegraphics[width=0.88\textwidth]{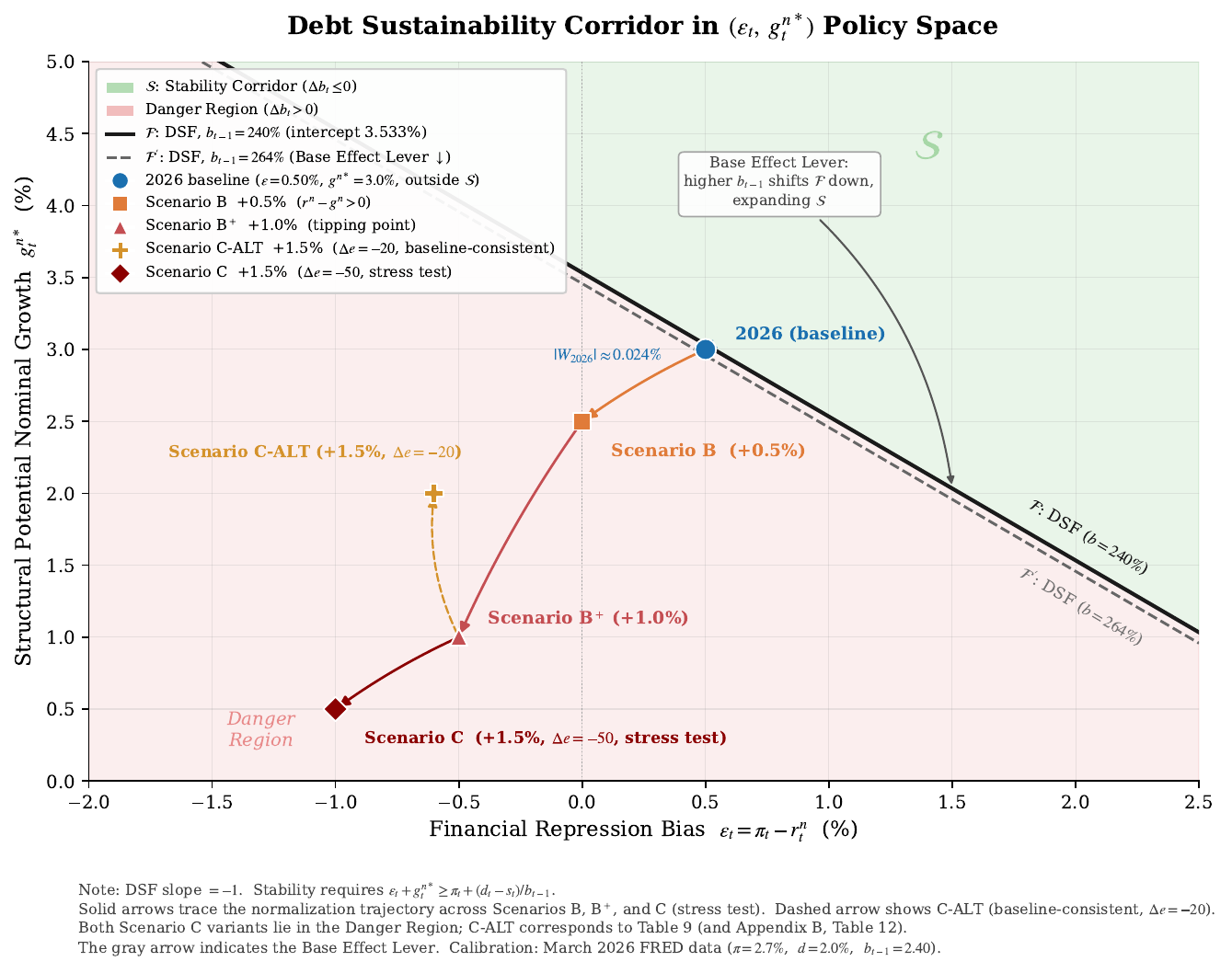}
  \caption{Debt Sustainability Corridor in $(\varepsilon_t,\, \gnomstar)$ Policy Space.
    The solid line is the Debt Sustainability Frontier $\mathcal{F}$ at
    $b_{t-1} = 240\%$ (intercept 3.533\%); the dashed line shows $\mathcal{F}'$ at
    the post-COVID peak $b_{t-1} = 264\%$, illustrating the Base Effect Lever (corridor
    expands as debt rises). The green region is the Stability Corridor $\mathcal{S}$
    ($\Delta b_t \leq 0$); the red region is the Danger Zone ($\Delta b_t > 0$).
    The blue circle marks the March 2026 baseline operating point, which lies
    marginally outside $\mathcal{S}$ ($|W_{2026}| \approx 0.024\%$).
    Colored arrows trace the normalization trajectory across Scenarios B, B$^+$, and C.
    The gray arrow indicates the downward shift of $\mathcal{F}$ as $b_{t-1}$ rises
    (Base Effect Lever).
    Calibration: March 2026 FRED data ($\pi = 2.7\%$ [BoJ FY2025 projection; Jan 2026 actual = 1.5\%], $d = 2.0\%$, $b_{t-1} = 2.40$).
    Data source: FRED API, Federal Reserve Bank of St.\ Louis.}
  \label{fig:dsc}
\end{figure}

\paragraph{Robustness note: approximation error and boundary interpretation.}
Property~\ref{prop:corridor_width} states that $W_{2026} \approx -0.024\%$, placing
the 2026 operating point marginally \emph{outside} the strict stability region
$\mathcal{S}$.  However, Appendix~A notes that the linear approximation used in
deriving equation~\eqref{eq:sim_transition} introduces an annual error of
approximately $0.056\%$ of GDP---a value that \emph{exceeds} $|W_{2026}|$ by a
factor of roughly 2.3.  To make this relationship transparent, Figure~\ref{fig:dsc_errband}
redraws the Debt Sustainability Corridor with an orange band of width
$\pm\,0.056/\sqrt{2} \approx \pm\,0.040\%$ centered on the DSF, representing the
perpendicular projection of the approximation error into $(\varepsilon_t, \gnomstar)$
space.  The 2026 baseline operating point (blue circle) falls \emph{within} this
band in both the BoJ-projection calibration ($\varepsilon_t = 0.5\%$) and the
January~2026 realized-CPI calibration ($\varepsilon_t = -0.7\%$).

This visualization does not overturn the model's qualitative conclusion---the
system operates in the immediate vicinity of the Stability Corridor boundary and
the direction of the small shortfall is consistent with the $+0.08\%$/yr
accumulation in Scenario~A.  Rather, it confirms the preferred interpretation
stated in Section~\ref{sec:scenA}: the March~2026 position should be read as
\emph{boundary proximity}, not as a precisely calibrated exceedance.  The
orange band also illustrates why the policy recommendation focuses on the
directional imperative (shift $\varepsilon_t + \gnomstar$ toward the boundary)
rather than on a specific numerical target whose measurement uncertainty spans
the entire gap.

\begin{figure}[htbp]
  \centering
  \includegraphics[width=0.88\textwidth]{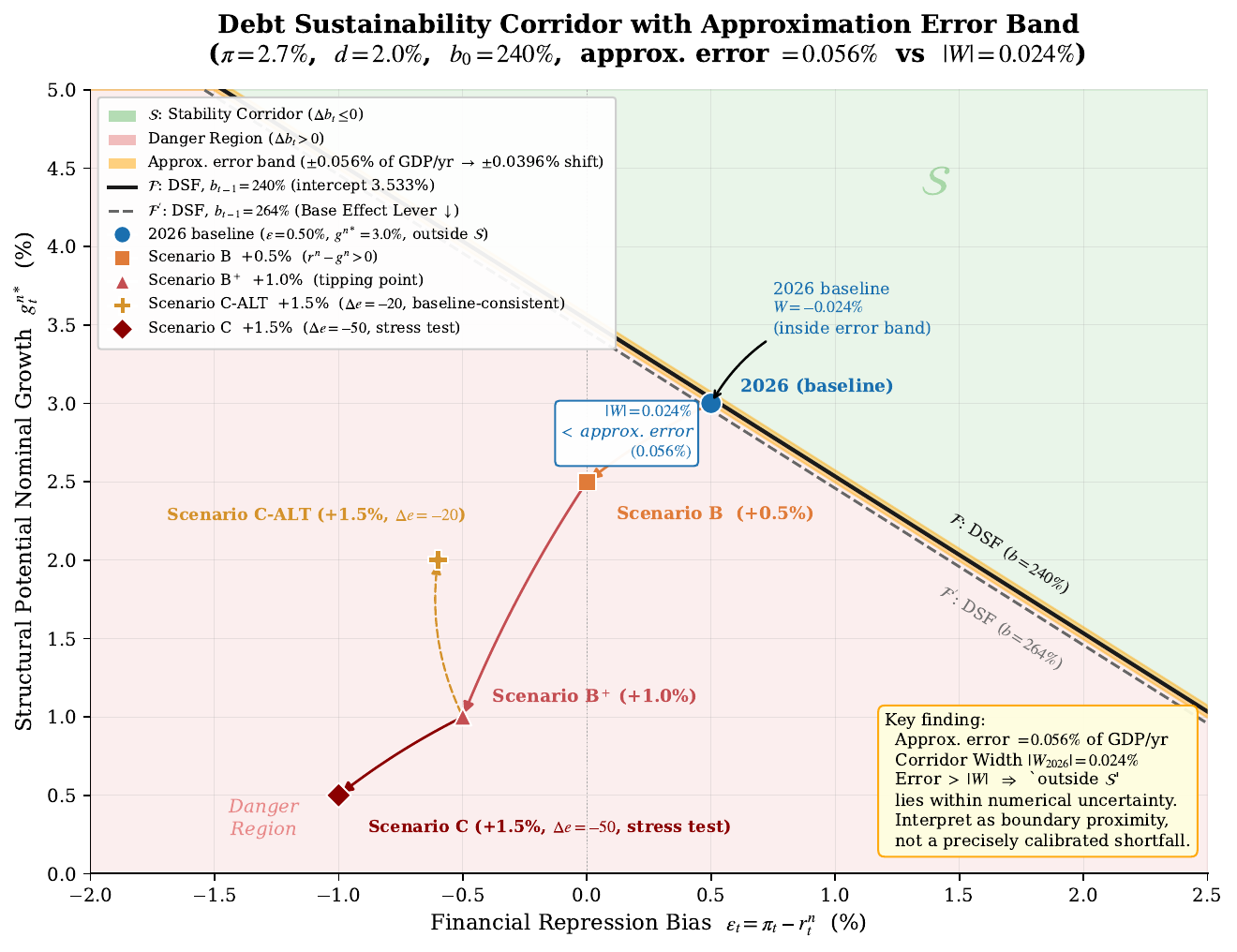}
  \caption{Debt Sustainability Corridor with Approximation Error Band.
    Setup identical to Figure~\ref{fig:dsc}.  The orange band ($\pm 0.040\%$
    perpendicular to the DSF) represents the projection of the linear
    approximation error ($\approx 0.056\%$ of GDP/yr; Appendix~A) into
    $(\varepsilon_t,\, \gnomstar)$ space.  The 2026 baseline operating point
    (blue circle, $\varepsilon_t = 0.5\%$, $\gnomstar = 3.0\%$) lies fully
    within this band, confirming that the $|W_{2026}| \approx 0.024\%$
    shortfall is smaller than the numerical uncertainty of the model.
    The appropriate interpretation is therefore \emph{boundary proximity}
    rather than a confirmed outside position.
    Calibration: March~2026 FRED data ($\pi = 2.7\%$, $d = 2.0\%$,
    $b_{t-1} = 2.40$); approximation error from Appendix~A.}
  \label{fig:dsc_errband}
\end{figure}

\subsection{The Normalization Ratchet: Path-Dependence and Persistent Gaps}
\label{sec:ratchet}

A critical property of the JFR-rg model---not previously identified in the
literature---is that the Normalization Trap exhibits strong \emph{path-dependence}:
a temporary policy error cannot be costlessly reversed, because the debt accumulation
that occurs during the normalization episode raises $b_{t-1}$ above the counterfactual
no-shock trajectory by an amount that decays only very slowly, altering the fiscal
geometry over policy-relevant horizons.

\begin{proposition}[Normalization Ratchet]
\label{prop:ratchet}
Consider a $T$-period normalization shock ($t = 1, \ldots, T$), characterized by
$(r^n_\text{shock} - g^n_\text{shock}) > (r^n_0 - g^n_0)$ and
$d_\text{shock} \geq d_0$, followed by complete reversion to baseline parameters at
$t = T+1$. Let $b^*_t$ denote the baseline (no-shock) trajectory and $b^{**}_t$ the
post-shock trajectory. Then:
\[
  b^{**}_t - b^*_t = (b^{**}_T - b^*_T) \cdot (1 + r^n_0 - g^n_0)^{t - T}
  \quad \text{for all } t > T.
\]
The gap half-life under Scenario~A parameters ($r^n_0 - g^n_0 = -0.8\%$) is:
\[
  T_{1/2} = \frac{\ln 2}{\ln(1/0.992)} \approx \frac{0.693}{0.00803} \approx 86 \text{ years.}
\]
\end{proposition}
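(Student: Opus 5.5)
We work directly from the debt accumulation equation~\eqref{eq:standard_dynamics} in its recursive form. With $r^R_t$ and $s_t$ absorbed into the primary balance as in Scenario~A, it reads
\[
  b_t = (1 + r^n_t - g^n_t)\, b_{t-1} + d_t .
\]
This is the linearized form used throughout Section~3. Both trajectories $b^*_t$ and $b^{**}_t$ obey this recursion from the same initial condition $b_0$. They differ only in the parameters applied during $t = 1,\ldots,T$.

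\textbf{Step 1.} For $t > T$ both paths use identical parameters $(r^n_0, g^n_0, d_0)$. Subtracting the two recursions, the additive term $d_0$ cancels and we get
\[
  b^{**}_t - b^*_t = (1 + r^n_0 - g^n_0)\,(b^{**}_{t-1} - b^*_{t-1}) .
\]
The gap therefore obeys a homogeneous linear first-order difference equation.

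\textbf{Step 2.} Iterating from $t = T$ by a one-line induction on $t - T$ gives the stated closed form
\[
  b^{**}_t - b^*_t = (b^{**}_T - b^*_T)\,(1 + r^n_0 - g^n_0)^{t-T}.
\]
Note that the sign and size of the initial gap $b^{**}_T - b^*_T$ play no role in this step.

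\textbf{Step 3.} We then verify that the initial gap is strictly positive, which gives the ratchet its content. Write $\lambda_0 = 1 + r^n_0 - g^n_0$ and $\lambda_s = 1 + r^n_{\text{shock}} - g^n_{\text{shock}}$, so $\lambda_s > \lambda_0$. For $1 \le t \le T$ the gap evolves as
\[
  \delta_t = \lambda_s \delta_{t-1} + (\lambda_s - \lambda_0)\, b^*_{t-1} + (d_{\text{shock}} - d_0), \qquad \delta_0 = 0 .
\]
This step is the main obstacle, because $\delta_T > 0$ is not automatic. It requires $b^*_{t-1} > 0$, which holds since debt is positive throughout the sample. It also requires $\lambda_s > 0$, which holds at any plausible spread, e.g.\ $\lambda_s = 1.032$ in Scenario~C. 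Under these conditions an induction on $t$ gives $\delta_t > 0$ for every $t \le T$.

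\textbf{Step 4.} The half-life follows by solving $\lambda_0^{\,T_{1/2}} = 1/2$, giving $T_{1/2} = \ln 2 / \ln(1/\lambda_0)$. With $\lambda_0 = 0.992$, $\ln(1/0.992) \approx 0.00803$, so $T_{1/2} \approx 86$ years.

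We will add two remarks. First, the decay claim requires $0 < \lambda_0 < 1$, i.e.\ $r^n_0 < g^n_0$. If $r^n_0 \ge g^n_0$ the gap never decays, which strengthens the ratchet. Second, the result holds exactly for the linearized recursion. Under the exact ratio dynamics the factor is $(1+r^n_0)/(1+g^n_0)$, which changes the half-life only marginally, consistent with the approximation error discussed in Appendix~A.
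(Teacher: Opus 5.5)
Your proposal is correct and follows essentially the paper's own argument. The paper proves $b^{**}_T > b^*_T$ by induction on the same gap recursion $\delta_t = \lambda_s\delta_{t-1} + (\lambda_s-\lambda_0)\,b^*_{t-1} + (d_{\text{shock}}-d_0)$, using $\lambda_s>0$, and then subtracts the two post-$T$ baseline recursions so that $d_0$ cancels and the gap can be iterated geometrically. The differences are only in ordering and emphasis: you correctly note that the closed form holds whatever the sign of $b^{**}_T-b^*_T$, and you state explicitly the positivity of $b_0$ and $b^*_{t-1}$, which the paper leaves implicit when it calls $b^*_{t-1}\Delta_{\text{spread}}$ non-negative.
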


\begin{proof}
The shock is characterized by $(r^n_\text{shock} - g^n_\text{shock}) > (r^n_0 -
g^n_0)$ and $d_\text{shock} \geq d_0$ (as stated in the proposition). For the
first shock period:
\[
  b^{**}_1 - b^*_1 = b_0\,\bigl[(r^n_\text{shock} - g^n_\text{shock})
    - (r^n_0 - g^n_0)\bigr] + (d_\text{shock} - d_0) > 0,
\]
since both terms are non-negative and the first is strictly positive by assumption.
For the induction step at $t = 2, \ldots, T$: assuming $b^{**}_{t-1} > b^*_{t-1}$,
\[
  b^{**}_t - b^*_t = (b^{**}_{t-1} - b^*_{t-1})(1 + r^n_\text{shock} - g^n_\text{shock})
    + b^*_{t-1}\,\Delta_\text{spread} + \Delta_d,
\]
where $\Delta_\text{spread} \equiv (r^n_\text{shock} - g^n_\text{shock}) - (r^n_0 -
g^n_0) > 0$ and $\Delta_d \equiv d_\text{shock} - d_0 \geq 0$. The first term is
strictly positive (by the induction hypothesis and
$(1 + r^n_\text{shock} - g^n_\text{shock}) > 0$ for all economically relevant
parameters); the remaining two terms are non-negative by assumption. Their sum is
therefore strictly positive, giving $b^{**}_t > b^*_t$. Thus $b^{**}_T > b^*_T$
for all $T \geq 1$.
After $t = T$, both trajectories are governed by the \emph{same} baseline parameters:
\[
  b^{**}_{t} = b^{**}_{t-1}(1 + r^n_0 - g^n_0) + d_0, \quad
  b^{*}_{t} = b^{*}_{t-1}(1 + r^n_0 - g^n_0) + d_0.
\]
Subtracting: $(b^{**}_{t} - b^{*}_{t}) = (b^{**}_{t-1} - b^{*}_{t-1})(1 + r^n_0 -
g^n_0)$. Iterating from $t = T+1$ gives the stated result.
\end{proof}

\textbf{Numerical illustration.} Suppose a 2-year normalization shock at Scenario~B
parameters ($r^n = 2.7\%$, $g^n = 2.5\%$, $d_t = 2.46\%$) in 2027--2028, followed by
complete reversion to Scenario~A from 2029. The discrete-time gap at each year is
computed from equation~\eqref{eq:sim_transition}:

\medskip
\begin{table}[h]
\centering
\caption{Normalization Ratchet: Debt-Ratio Gap After 2-Year Scenario~B Shock
  (2027--28), Full Reversion to Scenario~A from 2029.
  Computed from equation~\eqref{eq:sim_transition}; shock years use
  $d_t = 2.46\%$ (Scenario~B IOER arithmetic, \textyen{}670T denominator).}
\label{tab:ratchet_example}
\small
\begin{tabular}{lcc}
\toprule
Year & $b^*_t$ (Scenario A) & $b^{**}_t$ (2-yr shock then revert) \\
\midrule
2028 & 240.2\% & 246.0\% \\
2029 & 240.3\% & 246.0\% \\
2040 & 241.1\% & $\approx$ 246.3\% \\
\bottomrule
\end{tabular}
\end{table}
\medskip

The 5.7 percentage-point gap established at the end of the shock period
($245.97\% - 240.24\% \approx 5.7$ pp) decays at only 0.8\% per year under Scenario~A dynamics.
After 86 years the gap halves; after approximately 220 years it falls to 1 percentage
point. On any policy-relevant horizon, the post-shock gap relative to the no-shock
baseline is therefore highly persistent.

\textbf{Model implication.} Within the JFR-rg framework, the Ratchet suggests that
the Bank of Japan's normalization path carries persistent legacy effects: each period
in which $r^n - g^n > 0$ generates a debt increment whose half-life, under Scenario~A
dynamics, is approximately 86 years. This result depends critically on the assumption
that post-shock baseline dynamics are identical to pre-shock dynamics---an assumption
that would be violated if, for example, normalization credibly raised $\gnomstar$
through improved capital allocation or productivity. The model does not capture such general equilibrium feedback, and future research
incorporating endogenous growth responses could materially alter the \emph{magnitude}
of the Ratchet. However, the \emph{existence} of the Normalization Trap is not
contingent on the absence of GE feedback: even if normalization raised $\gnomstar$
by a full percentage point, the $r^n - g^n$ spread under Scenario~B ($+0.2\%$)
and Scenario~C ($+1.7\%$ to $+3.2\%$) would remain positive, sustaining adverse
debt dynamics throughout the simulation horizon. The GE channel would need to
raise $\gnomstar$ by more than the full rate shock ($> +0.5\%$ for Scenario~B,
$> +1.5\%$ for Scenario~C) to eliminate the Trap---a magnitude that exceeds
plausible near-term TFP responses documented in the literature.

\textbf{Interpretive clarification: the 86-year half-life applies to the
\emph{gap} between trajectories, not to the \emph{absolute level} of the debt
ratio.} Under Scenario~A parameters, the baseline trajectory $b^*_t$ itself
is not stationary: Section~\ref{sec:scenA} establishes that $b^*_t$ drifts
upward toward its long-run attractor $b^{*,\infty} = d/(g^n - r^n) = 250\%$
at approximately $+0.08\%$ per year. What decays with an 86-year half-life is
the post-shock gap $b^{**}_t - b^*_t$; both trajectories continue their
baseline drift independently. The Ratchet therefore characterizes
path-dependence conditional on the Scenario~A parameters continuing to hold,
not absolute debt-level irreversibility.

\textbf{Hypothetical long-horizon illustration.}
To convey the practical order of magnitude, consider the following
calibration exercise---offered purely for illustration and without
prejudice to the historical causes or policy assessments involved.
Suppose an economy experienced a period of length $T$ years during
which $r^n - g^n > 0$ on average, and thereafter returned
permanently to Scenario~A dynamics ($r^n - g^n = -0.8\%$).
The debt-ratio increment accumulated during that period would
decay with the 86-year half-life established in
Proposition~\ref{prop:ratchet}, \emph{relative to the counterfactual
Scenario~A baseline that would have prevailed absent the shock}.
We note that the Scenario~A parameters are themselves calibrated to 2013--2026
FRED data, and their stability over multi-decade horizons cannot be assumed;
the following illustration should be read as a mathematical implication of
the recursion under constant parameters, not as a forecast.
For $T$ on the order of two decades---corresponding, for instance, to a
hypothetical constant-parameter reading of Japan's 1991--2013 interval---half
of the accumulated increment would remain unreduced many decades hence. This
arithmetic is offered solely to convey the practical
persistence implied by Proposition~\ref{prop:ratchet}:
debt-ratio increments accumulated over multi-year episodes of
$r^n - g^n > 0$ are not reversed on any policy-relevant horizon,
regardless of how completely parameters subsequently normalize, \emph{holding
those post-shock parameters fixed at Scenario~A values}.

Figure~\ref{fig:ratchet} illustrates the Ratchet effect across five trajectories
over 2026--2035. The two dashed/dotted paths (R2 and R5) undergo a temporary
Scenario~B shock and then fully revert to Scenario~A parameters; yet their debt
levels remain above the no-shock Scenario~A path over the extended policy horizon
shown, confirming the path-dependence result of Proposition~\ref{prop:ratchet}.

\begin{figure}[htbp]
  \centering
  \includegraphics[width=0.88\textwidth]{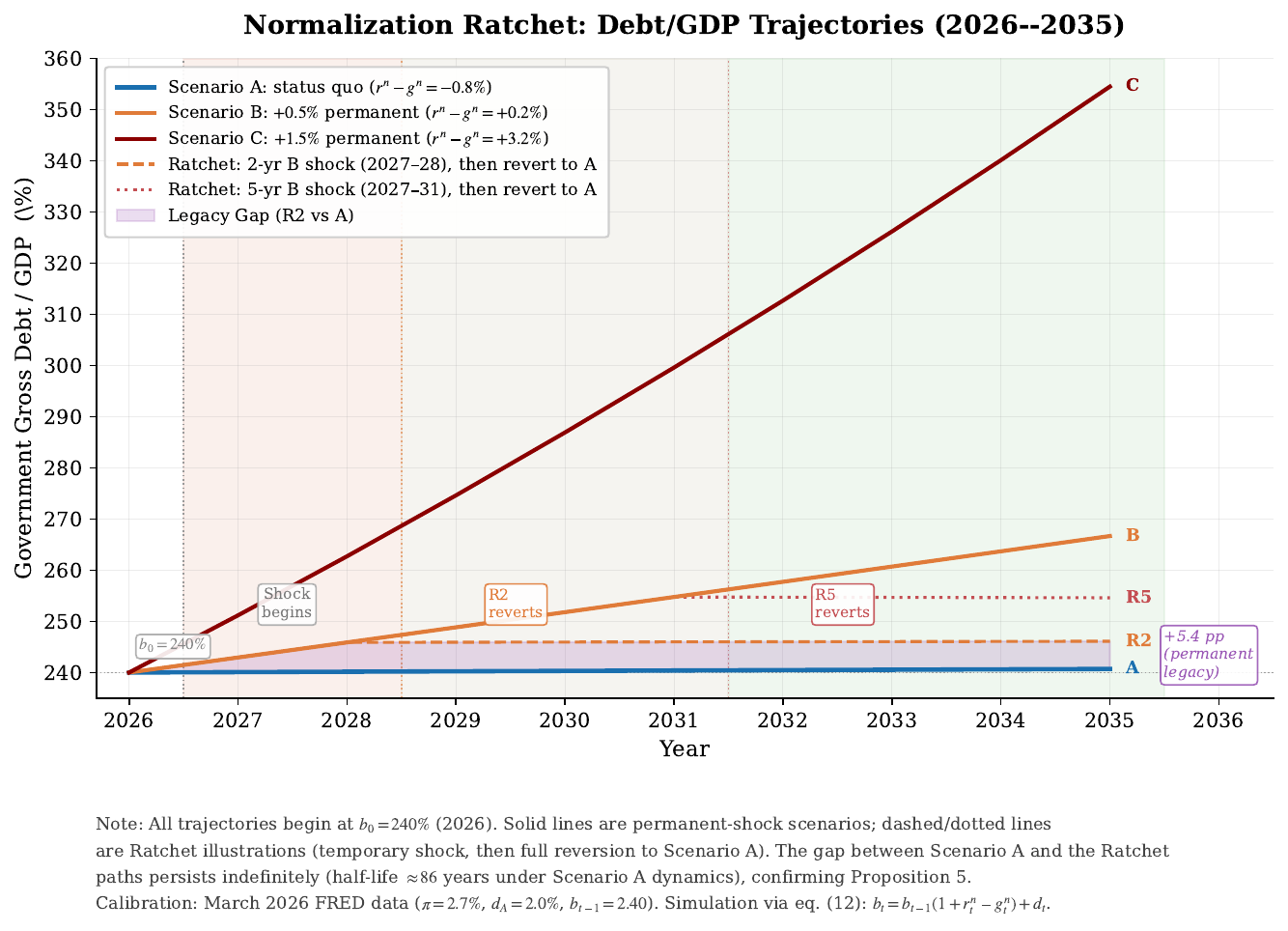}
  \caption{Normalization Ratchet: Government Gross Debt/GDP Trajectories
    (2026--2035). Solid lines show permanent-shock scenarios (A, B, C).
    The dashed line (R2) undergoes a 2-year Scenario~B shock (2027--28) then
    reverts fully to Scenario~A parameters; the dotted line (R5) undergoes a
    5-year shock (2027--31) then reverts. Despite full parameter reversion, both
    Ratchet paths remain above Scenario~A over extended policy horizons,
    confirming Proposition~\ref{prop:ratchet}: the gap decays at only
    $0.8\%$ per year (half-life $\approx 86$ years). The persistent gap at
    2035 is $+5.4$ pp (R2) and $+13.9$ pp (R5) relative to Scenario~A.
    Calibration: March 2026 FRED data ($\pi=2.7\%$ [BoJ FY2025 projection; Jan 2026 actual = 1.5\%], $d=2.0\%$, $b_0=240\%$);
    shock years use $d_t = 2.46\%$ (Scenario~B IOER arithmetic updated to \textyen{}670T GDP,
    eq.~\eqref{eq:consolidated_bc}).
    Simulation via equation~\eqref{eq:sim_transition}.}
  \label{fig:ratchet}
\end{figure}

\subsection{The Captive Financial System Parameter
  \texorpdfstring{($\varphi_t$)}{(phi)}}
\label{sec:captive}

The JFR-rg stability condition~\eqref{eq:stability_condition} requires that the Bank of
Japan can maintain $r^n_t$ below $\pi_t$. This capacity rests on the BoJ's ability to
absorb JGB supply shocks through asset purchases. That ability, in turn, depends on the
fraction $\varphi_t \in [0,1]$ of JGBs held by domestic institutions (Bank of Japan,
commercial banks, insurance companies, pension funds): when $\varphi_t$ is high, foreign
investor exit has limited price impact and the BoJ's intervention is effective; as
$\varphi_t$ falls, an endogenous sovereign risk premium $\rho_t$ emerges that raises the
effective $r^n_t$ independently of BoJ policy.

The causal structure is transparent: $\varphi_t$ is an independently observable
institutional stock variable, derivable from BoJ Flow of Funds accounts each quarter,
that \emph{precedes} and \emph{enables} the repression bias $\varepsilon_t$. The
sequence runs from institutional fact to policy instrument to model outcome:
\[
  \varphi_t \geq \bar{\varphi}
  \;\Rightarrow\; \text{BoJ retains rate control}
  \;\Rightarrow\; \varepsilon_t > 0
  \;\Rightarrow\; \text{debt compression (eq.~\eqref{eq:jfr_decomposed})}.
\]
Making this sequence explicit also clarifies the model's scope: the JFR-rg framework
characterizes the \emph{conditional} dynamics given the observed institutional
configuration, rather than claiming that the configuration is itself an equilibrium
outcome of the model.

As $\varphi_t$ falls, the endogenous sovereign risk premium $\rho_t$ rises:
\begin{equation}
  r^n_t \;\mapsto\; r^n_t + \rho(\varphi_t,\, b_{t-1}),
  \quad \frac{\partial \rho}{\partial \varphi_t} < 0, \quad
  \frac{\partial \rho}{\partial b_{t-1}} > 0.
  \label{eq:risk_premium}
\end{equation}

The modified stability condition becomes:
\begin{equation}
  (\pi_t - \varepsilon_t + \rho_t) - g^n_t \leq \frac{s_t - d_t}{b_{t-1}},
  \label{eq:stability_modified}
\end{equation}
where $g^n_t$ follows the full specification of equation~\eqref{eq:yen_growth};
the compact $g^n_t$ notation is used here for expositional clarity. The risk premium
$\rho_t$ enters as a direct leftward shift of the Stability Corridor---equivalent
to an artificial increase in $r^n_t$. The Normalization Trap is thus exacerbated by any
decline in $\varphi_t$: even without an explicit BoJ rate hike, a reduction in the
domestic holding share tightens effective financial conditions.

\begin{definition}[Critical Captive Threshold $\bar{\varphi}$]
\label{def:critical_phi}
There exists a threshold $\bar{\varphi}$ below which the JFR-rg equilibrium breaks
down regardless of BoJ policy, because $\rho(\varphi_t, b_{t-1}) > \pi_t$---that is,
the endogenous risk premium exceeds the \emph{maximum achievable} repression bias
(attained when $r^n_t \to 0$, giving $\varepsilon_t \to \pi_t$). Under this condition,
stability~\eqref{eq:stability_modified} cannot be satisfied by any feasible rate
setting. Empirical estimation of $\bar{\varphi}$ using historical sudden-stop data
(Calvo, 1998) is a priority for future research.
\end{definition}

\textbf{Japan's current trajectory.} As of 2026, $\varphi_t \approx 0.88$--$0.92$
(the Bank of Japan holds approximately 46\% of JGBs outstanding, down from a peak
of $\approx$53\% in 2022 as the tapering program reduces monthly purchases;
other domestic institutions---commercial banks, insurance companies, pension
funds---hold a further 42\%, for a combined domestic share of $\approx$88\%;
source: Ministry of Finance Flow of Funds, December 2024). The
slow decline of $\varphi_t$ as the BoJ reduces asset purchases from 2024 constitutes a
\emph{second} source of normalization risk, independent of the policy rate channel.
Each percentage point decline in $\varphi_t$ raises $\rho_t$ and shifts the Stability
Corridor adversarially. Monitoring $\varphi_t$ in real time---derivable from the BoJ's
Flow of Funds accounts---should be treated as a co-equal policy indicator alongside the
$r^n - g^n$ spread.

\subsection{Reduced-Form Micro-Foundations}
\label{sec:microfound}

The JFR-rg model is deliberately cast as a reduced-form accounting framework rather
than a fully micro-founded dynamic general equilibrium model.  This choice reflects
a specific analytical priority: all right-hand-side variables in the core accumulation
equation~\eqref{eq:jfr_core} are directly observable from FRED without reliance on
calibrated unobservables, making the framework transparent and immediately replicable.
Nevertheless, each structural equation admits a natural interpretation in terms of
optimizing behavior that is consistent with---though not derived from---standard
microfoundations.

\medskip
\noindent\textbf{Financial repression bias $\varepsilon_t$.}
In a standard portfolio-balance model (following the tradition of McKinnon, 1973, and
Shaw, 1973), domestic financial institutions face regulatory constraints that tilt their
asset allocation toward JGBs at below-market yields.  When $\varphi_t \approx 0.90$,
approximately 90\% of JGB supply is absorbed by institutions whose portfolio choice is
constrained by reserve requirements, capital regulations, and prudential mandates.
In this environment, $\varepsilon_t > 0$ is not an arbitrage opportunity but an
equilibrium outcome of the constrained optimization problem facing domestic
intermediaries: the shadow cost of the portfolio constraint equals the repression bias.
The JFR-rg model takes this equilibrium value of $\varepsilon_t$ as directly observable,
bypassing the need to specify the constraint structure explicitly.

\medskip
\noindent\textbf{Nonlinear yen channel $g^n_t = \gnomstar + \alpha\,\Delta e_t -
\beta\,\max(0,\Delta e_t - \ebar)^2$.}
The linear term $\alpha\,\Delta e_t$ corresponds to the standard expenditure-switching
effect: yen depreciation raises the yen-denominated profitability of exporting firms,
increasing investment and nominal value-added.  At the firm level, this is consistent
with a profit function that is linear in the exchange rate for small deviations.  The
quadratic penalty term captures the consumer's budget constraint: for severe
depreciation ($\Delta e_t > \ebar$), import price pass-through ($\gamma \approx
-0.020$ to $-0.050$ per JPY/USD; Haba et al., 2025) compresses real household
purchasing power, reducing consumption and nominal growth.  The threshold $\ebar$
represents the depreciation level at which the import channel dominates the export
channel in aggregate, consistent with a standard open-economy model where the
Marshall-Lerner condition holds locally but import-price inflation generates
second-order welfare losses at large exchange rate deviations.

\medskip
\noindent\textbf{Debt accumulation equation.}
The discrete-time recursion $b_t = b_{t-1}(1 + r^n_t - g^n_t) + d_t$ is the
linearized version of the government's intertemporal budget constraint, derived in
Appendix~A without behavioral assumptions.  Its reduced-form character is a feature,
not a limitation: in a captive financial system where $\varphi_t \approx 0.90$,
the sovereign risk premium $\rho_t$ is structurally suppressed
(Section~\ref{sec:captive}), so the
no-arbitrage condition that standard DSGE models use to pin down $r^n_t$ is replaced
by the institutional equilibrium condition $r^n_t = \pi_t - \varepsilon_t$.  The
JFR-rg framework characterizes the resulting debt dynamics conditional on this
institutional equilibrium, rather than explaining the equilibrium itself.  Future work
incorporating endogenous sovereign risk premia and household portfolio optimization
could provide a fully micro-founded derivation of the stability conditions
established here.

\subsection{Structural Assumptions for Identification: The Block-Recursive Architecture}
\label{sec:identification}

A recurring concern in discussions of reduced-form debt frameworks involves the
potential simultaneity of key variables: specifically, whether $\varepsilon_t$ and
$\varphi_t$ are jointly determined with the debt dynamics they purport to explain.
We clarify the role of this concern within the JFR-rg architecture, which operates
on two separable layers.

\paragraph{The Two-Layer Architecture.}
\textbf{Layer~(L1) --- Accounting:} The debt accumulation equation~\eqref{eq:sim_transition}
is a linearized accounting identity derived from the consolidated government budget
constraint without behavioral assumptions. It holds by construction for any realized
values of $r^n_t$, $g^n_t$, $b_{t-1}$, and $d_t$, regardless of how those values
were generated.

\noindent\textbf{Layer~(L2) --- Empirical:} The Local Projections, VAR, and subsample
analyses in Appendix~\ref{app:empirical_supplement} test whether the realized
historical values of $\varepsilon_t$ and $r^n_t - g^n_t$ are statistically associated
with subsequent $\Delta b_t$ in the expected direction. These tests involve
econometric identification, and the limits of their causal interpretation are
explicitly documented in Section~H.5.

The simultaneity concern applies to~(L2) but not to~(L1). Proposition~1 and
the IOER arithmetic of Proposition~4 are derived from~(L1) alone and are
unaffected by identification limitations at~(L2). Propositions~2--3 and the
growth-channel component of Proposition~4 combine~(L1) with empirically
calibrated structural parameters ($\gnomstar$, $\alpha$, $\beta$, $\ebar$);
their qualitative conclusions are robust across the calibration ranges
reported in Appendix~\ref{app:sensitivity}, but their magnitudes depend on
the calibrated parameters.

\paragraph{Structural Sequencing of the Accounting Layer.}
Under the maintained institutional conditions SC1 ($\varphi_t \geq \bar{\varphi}$)
and SC2 ($\Delta e_t \leq \bar{e}$), the period-$t$ variables
$(\varphi_t,\, \varepsilon_t,\, b_t)$ follow a strictly lower-triangular
(block-recursive) causal sequence:
\[
  \underbrace{\varphi_t}_{\text{Block 1: Institutional}}
  \;\xrightarrow{\;\text{SC1}\;}\;
  \underbrace{\varepsilon_t}_{\text{Block 2: Policy}}
  \;\xrightarrow{\;\text{identity}\;}\;
  \underbrace{b_t}_{\text{Block 3: Outcome}}
\]

\noindent\textit{Block~1:} $\varphi_t$ is determined by the portfolio stock positions
of domestic institutions, derivable each quarter from the BoJ Flow of Funds accounts.
These positions are predetermined at date~$t$ and not a direct function of
the same-period $b_t$.

\noindent\textit{Block~2:} Given $\varphi_t \geq \bar{\varphi}$, the Bank of Japan
sets $r^n_t$ via YCC conditional on observed $\pi_t$, so that
$\varepsilon_t \equiv \pi_t - r^n_t$ does not contain $b_t$ as an argument within the
period. The BoJ's formal instrument rule is
$r^n_t = f(\varphi_t,\, \pi_t,\, \text{mandate parameters})$.

\noindent\textit{Block~3:} Given Blocks~1--2, the recursion
$b_t = [1 + (r^n_t - g^n_t)]\,b_{t-1} + d_t - s_t$
determines $b_t$ uniquely from predetermined and policy-set variables, with no
same-period feedback from $b_t$ to $\varepsilon_t$ or $\varphi_t$.

{\sloppy
This lower-triangular structure also provides the institutional justification
for the Cholesky ordering
($\varepsilon_t \to (r^n_t - g^n_t) \to \Delta b_t$) used in the VAR
analysis of Appendix~H.3: both impose the same contemporaneous exclusion
restrictions and are internally consistent, though neither independently
establishes the long-run causal direction.\par}

\paragraph{Scope of the Structural Assumptions.}
The block-recursive structure holds \emph{within each period~$t$} under SC1 and SC2.
It does not preclude long-run feedback: a sustained rise in $b_t$ could eventually
erode $\varphi_t$ (failure mode~(ii)) or force $\pi_t$ upward through fiscal
dominance (failure mode~(i)). These are precisely the conditions under which the
scope conditions fail. The framework is internally consistent: SC1 and SC2 are the
conditions that preserve lower-triangularity, and the failure modes are the conditions
under which lower-triangularity is violated.

The possibility of an implicit long-run reaction function---whereby the BoJ maintains
$\varepsilon_t > 0$ partly in response to accumulated $b_{t-1}$---cannot be ruled out
at Layer~(L2). This would not invalidate the Layer~(L1) accounting propositions.
Structural identification of this channel, using YCC announcement dates as instruments,
is left for future research and is listed as a discriminating test in
Table~\ref{tab:alt_explanations}.

\section{Simulation and Counterfactual Prediction}

\subsection{Calibration and Baseline Assumptions}

We calibrate the JFR-rg model using real-time FRED point estimates as of March 2026.
The initial state ($t = 0$, Year~2026) is:
\begin{itemize}[itemsep=2pt]
  \item \textbf{Initial Debt-to-GDP ($b_0$):} 240\% (2.40) ---
        FRED:\,GGGDTAJPA188N
  \item \textbf{Nominal Interest Rate ($r^n_0$):} $\approx$2.2\% ---
        10-year JGB yield (FRED:\,IRLTLT01\allowbreak JPM156N)
  \item \textbf{Inflation ($\pi_0$):} $\approx$2.7\% ---
        CPI YoY (FRED:\,JPNCPIALLMINMEI).%
        \footnote{The $\pi_0 = 2.7\%$ value reflects the BoJ's fiscal-year 2025 CPI
        central projection and the realized 2025 calendar-year average. The January 2026
        national CPI release (Statistics Bureau, 20 Feb 2026) recorded 1.5\% YoY,
        the lowest since March 2022. Substituting $\pi_0 = 1.5\%$ yields
        $\varepsilon_0 = -0.7\%$ (positive real rate). Section~3 provides a detailed
        data-currency note; all simulations use the 2.7\% calibration, which corresponds
        to BoJ's own near-term projection. The proximity of actual inflation to 2\%
        core (Jan 2026 core = 2.0\%) and the BoJ's expectation of a transitory
        dip make this choice defensible for scenario-analysis purposes.}
  \item \textbf{Financial Repression Bias ($\varepsilon_0 = \pi_0 - r^n_0$):}
        $\approx$0.5\% --- directly computed from FRED
  \item \textbf{Nominal Growth ($g^n_0$):} $\approx$3.0\% ---
        Nominal GDP YoY (FRED:\,JPNNGDP)
  \item \textbf{Exchange Rate Baseline ($\Delta e_0$):} $\approx$0 ---
        The calibration treats the 2026 yen level as the structural exchange-rate
        equilibrium, so that $\gnomstar \approx g^n_0 = 3.0\%$ by
        equation~\eqref{eq:yen_growth}. Deviations from this assumption would shift
        $\gnomstar$ relative to $g^n_0$ by $\alpha\,\Delta e_0$, but do not affect
        the $r^n - g^n$ spread dynamics which drive all scenario comparisons.
  \item \textbf{$r^n - g^n$ Spread:} $\approx -0.8\%$ ---
        confirmed in green zone, Figure~\ref{fig:rg_spread}
  \item \textbf{Primary Deficit ($d_t$):} $\approx$2.0\% of GDP ---
        held constant under the baseline
  \item \textbf{Seigniorage ($s_t$):} $\approx$0.0\% of GDP ---
        assumed negligible for the discrete-time simulation
  \item \textbf{Captive System Parameter ($\varphi_t$):} $\approx$0.90 ---
        BoJ Flow of Funds accounts
\end{itemize}

The total annual debt compression at baseline is:
\[
  (g^n_t - r^n_t) \cdot b_{t-1} \approx (3.0\% - 2.2\%) \times 2.40 = 1.92\%
  \text{ of GDP per year,}
\]
which absorbs the vast majority of the 2.0\% primary deficit, yielding a net
accumulation of approximately 0.08\% of GDP per year. Using the simplified discrete-time
transition equation (with $s_t \approx 0$):
\begin{equation}
  b_t = b_{t-1}\,(1 + r^n_t - g^n_t) + d_t,
  \label{eq:sim_transition}
\end{equation}
we simulate the debt trajectory over 2026--2030 under three scenarios.

\subsection{Scenario A: The Status Quo --- Baseline JFR-rg Stability}
\label{sec:scenA}

The Bank of Japan maintains $\varepsilon_t > 0$ and allows mild yen flexibility.
The $r^n - g^n$ spread remains at $-0.8\%$.

\begin{table}[htbp]
\centering
\caption{Scenario A --- Baseline JFR-rg Stability (Status Quo)}
\label{tab:scenA}
\small
\resizebox{\textwidth}{!}{%
\begin{tabular}{l c c c c c c c}
\toprule
Year & Rate Hike & $r^n_t$ (\%) & $g^n_t$ (\%) & $r^n - g^n$ & $d_t$ (\%) &
  $b_t$ (\%) & Dynamics \\
\midrule
2026 & $+0.0\%$ & 2.2 & 3.0 & $-0.8\%$ & 2.0 & 240.0 & Base \\
2027 & $+0.0\%$ & 2.2 & 3.0 & $-0.8\%$ & 2.0 & 240.1 & Near-Neutral \\
2028 & $+0.0\%$ & 2.2 & 3.0 & $-0.8\%$ & 2.0 & 240.2 & Near-Neutral \\
2029 & $+0.0\%$ & 2.2 & 3.0 & $-0.8\%$ & 2.0 & 240.2 & Near-Neutral \\
2030 & $+0.0\%$ & 2.2 & 3.0 & $-0.8\%$ & 2.0 & 240.3 & Near-Neutral \\
\bottomrule
\end{tabular}%
}
\end{table}

\noindent\textit{Result:} The automatic compression of 1.92\% of GDP per year absorbs
nearly all of the 2.0\% primary deficit, leaving the debt trajectory precariously
balanced with a net accumulation of $+0.08\%$ per year ($\Delta b_t = 0.008 \times
2.40 + 2.0\% - 1.92\% = +0.08\%$). Note that successive rounding of the displayed
values in Table~\ref{tab:scenA} to one decimal place may create the appearance of
uneven increments; the underlying dynamics are uniform at $+0.08\%$ per year. The
Corridor Width of $-0.024\%$ (Property~\ref{prop:corridor_width}) is small relative
to the measurement uncertainty of the underlying data series
($\pm 0.5$ pp for GGGDTAJPA188N); the result is therefore better characterized as
operation in the \emph{immediate vicinity of the corridor boundary} than as a
confirmed outside position. This distinction does not affect the directional policy
implications, but readers should interpret the near-miss arithmetic as indicating
boundary proximity rather than a precisely calibrated shortfall.

\medskip
\noindent\textit{Long-run steady state.} The recursion~\eqref{eq:sim_transition}
has a closed-form steady state $b^* = d_t/(g^n_t - r^n_t) = 2.0\%/0.8\% =
250\%$ of GDP under Scenario~A parameters.  The current level of 240\%
therefore lies \emph{below} this long-run attractor, drifting upward at
$+0.08\%$ per year and reaching $b^* = 250\%$ in approximately 125 years.
This confirms the ``Near-Neutral'' characterization in Table~\ref{tab:scenA}:
the debt trajectory is neither self-correcting nor explosive on any
policy-relevant horizon, but quasi-stationary with a clearly defined
long-run limit.

\subsection{\texorpdfstring{Scenario B: Moderate Normalization --- $+0.5\%$ Rate Shock}{Scenario B: Moderate Normalization --- +0.5\% Rate Shock}}

The Bank of Japan hikes $r^n_t$ by 50 basis points. The repression bias shrinks
($\varepsilon_t$ falls toward zero); the yen appreciates modestly via the linear channel
($\alpha\,\Delta e_t < 0$); nominal growth drops 0.5\%. Applying the IOER arithmetic of
equation~\eqref{eq:consolidated_bc} consistently across scenarios, the primary deficit
widens to approximately 2.46\%:\footnote{The paper's canonical denominator is nominal GDP SAAR,
\textyen{}670T (FRED:\,JPNNGDP, March~2026 snapshot), because $d_t^B$ is defined
throughout as a share of GDP. Using this denominator gives an IOER contribution
of $\approx 0.37\%$ and a short-term JGB rollover contribution of $\approx 0.09\%$,
totalling $d_t^B \approx 2.46\%$ of GDP. As a mechanical robustness check, if one
instead applies the same scaling rule to both components using a tighter
denominator around the core JGB stock (\textyen{}600T), the IOER term becomes
$0.5\% \times 500/600 \approx 0.42\%$ and the rollover term
$0.5\% \times 120/600 = 0.10\%$, yielding $d_t^B \approx 2.52\%$. The
Normalization Trap conclusion is unchanged under either convention; see
Table~\ref{tab:ioer_sens}.}
\[
  d_t^B = 2.0\% + \underbrace{0.5\% \times \tfrac{\text{\textyen{}500T}}{\text{\textyen{}670T}}}_{0.37\%\;\text{IOER}} + \underbrace{0.5\% \times \tfrac{\text{\textyen{}120T}}{\text{\textyen{}670T}}}_{0.09\%\;\text{rollover}} \approx 2.46\%\text{ of GDP.}
\]

\begin{table}[htbp]
\centering
\caption{Scenario B --- Moderate Rate Hike ($+0.5\%$). Primary deficit $d_t \approx 2.46\%$
  computed via IOER arithmetic (equation~\eqref{eq:consolidated_bc};
  nominal GDP denominator \textyen{}670T, FRED:\,JPNNGDP).}
\label{tab:scenB}
\small
\resizebox{\textwidth}{!}{%
\begin{tabular}{l c c c c c c c}
\toprule
Year & Rate Hike & $r^n_t$ (\%) & $g^n_t$ (\%) & $r^n - g^n$ & $d_t$ (\%) &
  $b_t$ (\%) & Dynamics \\
\midrule
2026 & $+0.0\%$ & 2.2 & 3.0 & $-0.8\%$ & 2.0 & 240.0 & Base \\
2027 & $+0.5\%$ & 2.7 & 2.5 & $+0.2\%$ & 2.46 & 243.0 & Increase \\
2028 & $+0.5\%$ & 2.7 & 2.5 & $+0.2\%$ & 2.46 & 246.0 & Increase \\
2029 & $+0.5\%$ & 2.7 & 2.5 & $+0.2\%$ & 2.46 & 248.9 & Increase \\
2030 & $+0.5\%$ & 2.7 & 2.5 & $+0.2\%$ & 2.46 & 251.9 & Increase \\
\bottomrule
\end{tabular}%
}
\end{table}

\noindent\textit{Result:} The $r^n - g^n$ spread flips positive ($+0.2\%$). The Base
Effect now works in reverse: the $+0.2\%$ spread on a 240\% base accumulates
$0.002 \times 2.40 = 0.48\%$ of GDP per year. Combined with the IOER-corrected primary
deficit ($\approx 2.46\%$), total debt accumulation is approximately 2.94\% of GDP per year,
pushing debt to $\approx$252\% by 2030. Crucially, by the Normalization Ratchet
(Proposition~\ref{prop:ratchet}), even a subsequent full reversion to Scenario~A
parameters would leave a highly persistent debt-ratio gap: the $\approx$12
percentage-point gap accumulated by 2030 ($251.9\% - 240.0\%$) relative to the
no-shock Scenario~A baseline has a half-life of approximately 86 years
under Scenario~A dynamics.

\subsection{\texorpdfstring{The Tipping Point: Scenario B$^+$ ($+1.0\%$ Rate Shock)}{The Tipping Point: Scenario B+ (+1.0\% Rate Shock)}}
\label{sec:scenBplus}

To illustrate the nonlinear transition between Scenario~B and Scenario~C, we simulate
an intermediate $+1.0\%$ hike. At $r^n_t = 3.2\%$, the yen appreciates ($\Delta e_t
< 0$), which reduces the linear yen growth bonus ($\alpha\,\Delta e_t < 0$) and drives
$g^n_t$ downward to approximately 1.0\%. The quadratic penalty term remains inactive, as
it is defined for excessive depreciation only ($\Delta e_t > \ebar > 0$). Applying the
IOER arithmetic consistently, $d_t = 2.0\% + 1.0\% \times (500/670 + 120/670) \approx 2.93\%$.
The $r^n - g^n$ spread widens to $+2.2\%$, and with a deficit of $d_t \approx 2.93\%$, the
debt accumulation rate accelerates to approximately $0.022 \times 2.40 + 2.93\% = 8.21\%$
of GDP per year. Crossing the $r^n - g^n = 0$ threshold triggers rapidly accelerating
deterioration purely through the base effect, without requiring the yen penalty channel.
Within the model's calibration, this suggests that the transition to adverse dynamics
may occur relatively quickly following normalization, though the precise threshold
depends on parameters---particularly $\alpha$, $\gnomstar$, and the behavioral response
of the primary deficit---that require empirical estimation.

\subsection{\texorpdfstring{Scenario C: Aggressive Normalization --- The Danger Zone ($+1.5\%$ Shock)}{Scenario C: Aggressive Normalization --- The Danger Zone (+1.5\% Shock)}}
\label{sec:scenC}

Rates are raised aggressively by 150 basis points. The repression bias $\varepsilon_t$
turns negative. The yen appreciates sharply ($\Delta e_t < 0$), reducing nominal growth
to 0.5\% through the linear channel $\alpha\,\Delta e_t$.

\medskip
\noindent\textbf{The cold arithmetic of the consolidated balance sheet.}
The $+1.39$ pp increase in the primary deficit ($d_t: 2.0\% \to 3.39\%$) is not a
behavioral assumption; it follows directly from the consolidated government budget
constraint derived in equation~\eqref{eq:consolidated_bc}. As mainstream economists
have long warned, large-scale asset purchases create fiscal dominance by transforming
fixed-rate long-term liabilities (JGBs) into floating-rate overnight liabilities (BoJ
current account deposits). The accounting consequence, now quantitatively material, is:

\begin{enumerate}[itemsep=4pt]
  \item \textbf{IOER channel:} The Bank of Japan holds approximately \textyen{}450--500
    trillion in current account deposits ($R_{t-1} \approx 454$ trillion yen as of
    Q1~2026, down from a peak of $\approx$500 trillion yen prior to the 2024 tapering
    program). A $+1.5\%$ increase in the IOER rate immediately generates:
    \[
      i^R_t \cdot R_{t-1} = 1.5\% \times 500\text{ trillion} \approx 7.5\text{ trillion yen}
      \approx \frac{7.5}{670} \approx 1.12\%\text{ of GDP.}
    \]
    For consistency with the Scenario~B arithmetic in equation~\eqref{eq:consolidated_bc}
    (which was calibrated at $R_{t-1} = 500$ trillion yen as a conservative upper bound),
    note that using the Q1~2026 realized value of 454~trillion yields
    $1.5\% \times 454/670 \approx 1.01\%$; the round 500~trillion figure used throughout
    the paper gives $1.12\%$ of GDP. The Scenario~B
    primary deficit calculation uses the round 500~trillion figure as a conservative
    upper bound; adjusting to 454~trillion would reduce the IOER contribution from
    0.37\% to approximately 0.34\% and lower the total $d_t^B$ from 2.46\% to 2.43\%,
    which does not materially alter the qualitative conclusions.
    \footnotesize\textit{Upper-bound caveat:} This calculation further assumes the policy
    rate increase passes through fully to the IOER applied to all reserves. Under the
    Bank of Japan's Tiered Interest Rate Framework (TIRF, introduced 2016), different
    tranches are remunerated at different rates and the BoJ retains discretion over each
    tier. If the BoJ maintains a below-policy IOER on the bulk of excess reserves, the
    fiscal impact would be substantially smaller. The figures above therefore represent
    \emph{upper bounds} on the IOER channel; the actual impact depends on both the
    then-current reserve balance and the TIRF structure implemented alongside any rate
    increase. The sensitivity of the Normalization Trap to the full range of
    $\alpha_{\mathrm{pt}} \in [0,1]$ is examined systematically in
    Section~\ref{sec:ioer_robustness} below.\normalsize
  \item \textbf{Short-term JGB rollover channel:} Approximately \textyen{}120--150
    trillion in Treasury discount bills and other floating-rate instruments roll over
    at the higher rate, generating an additional:
    \[
      1.5\% \times 120\text{ trillion} \approx 1.8\text{ trillion yen}
      \approx \frac{1.8}{670} \approx 0.27\%\text{ of GDP.}
    \]
  \item \textbf{Total mechanical deficit expansion:}
    \[
      \Delta d_t \approx 1.12\% + 0.27\% \approx 1.39\%\text{ of GDP,}
    \]
    producing the $d_t \approx 3.39\%$ in Table~\ref{tab:scenC}
    (computed as $2.0\% + 1.5\% \times \frac{500+120}{670} \approx 3.39\%$ via
    equation~\eqref{eq:consolidated_bc}).
\end{enumerate}

This is the fiscal dominance mechanism that orthodox economics has warned about for
decades---not as a theoretical possibility, but as a present arithmetic reality
activated the moment the policy rate rises. The consolidated balance sheet, once
constructed at QQE scale, makes the interest rate a fiscal parameter of the first
order. Scenario~C does not assume a behavioral fiscal deterioration; it simply applies
equation~\eqref{eq:consolidated_bc} to the observed stock of reserves.

Tax revenues plummet under yen appreciation and growth compression; the primary deficit
rises to 3.39\% of GDP via the combined mechanical and cyclical channels described above.

\begin{table}[htbp]
\centering
\caption{Scenario C --- Aggressive Normalization --- The Danger Zone ($+1.5\%$).
  \textbf{Upper-bound stress-test calibration} ($\Delta e_t = -50$ JPY/USD per year).
  See Table~\ref{tab:scenC_alt} for the baseline-consistent calibration ($\Delta e_t = -20$).}
\label{tab:scenC}
\small
\resizebox{\textwidth}{!}{%
\begin{tabular}{l c c c c c c c}
\toprule
Year & Rate Hike & $r^n_t$ (\%) & $g^n_t$ (\%) & $r^n - g^n$ & $d_t$ (\%) &
  $b_t$ (\%) & Dynamics \\
\midrule
2026 & $+0.0\%$ & 2.2 & 3.0 & $-0.8\%$ & 2.00 & 240.0 & Base \\
2027 & $+1.5\%$ & 3.7 & 0.5 & $+3.2\%$ & 3.39 & 251.1 & Rapid Accumulation \\
2028 & $+1.5\%$ & 3.7 & 0.5 & $+3.2\%$ & 3.39 & 262.5 & Rapid Accumulation \\
2029 & $+1.5\%$ & 3.7 & 0.5 & $+3.2\%$ & 3.39 & 274.3 & Severe Accumulation \\
2030 & $+1.5\%$ & 3.7 & 0.5 & $+3.2\%$ & 3.39 & 286.4 & Severe Accumulation \\
\bottomrule
\end{tabular}%
}
\end{table}

\medskip
\noindent\footnotesize\textit{Note on calibration:} The $g^n_t = 0.5\%$ value corresponds to
$\Delta e_t \approx -50$ JPY/USD via equation~\eqref{eq:yen_growth}, representing a
historically severe appreciation shock (cf.\ the full Abenomics move of $+70$ JPY/USD
over ten years). This table serves as an upper-bound stress test. The
baseline-consistent calibration with $\Delta e_t = -20$ JPY/USD (Appendix~E Panel~A)
is presented in Table~\ref{tab:scenC_alt} and independently confirms the Normalization
Trap ($b_{2030} = 270.7\%$). A 4$\times$4 sensitivity grid spanning $\gamma \in
\{-0.010,\ldots,-0.050\}$ and $|\Delta e| \in \{10,\ldots,50\}$ JPY/USD confirms the
Trap in all 16 combinations (Appendix~E, Table~\ref{tab:grid_gamma_deltae}).
\textit{Terminology:} ``Rapid Accumulation'' denotes a $r^n - g^n$ spread of $+3.2\%$
generating approximately 7.7\% of GDP in additional annual interest-growth debt
accumulation ($0.032 \times 240\%$); ``Severe Accumulation'' denotes the same rate
compounding over a longer horizon. Neither term implies mathematical divergence
to infinity; both describe an accelerating trajectory whose practical significance
derives from the scale of the debt base.

\medskip
\normalsize

\begin{table}[htbp]
\centering
\caption{Scenario C (Alt) --- Aggressive Normalization, \textbf{Baseline-Consistent}
  Calibration ($\Delta e_t = -20$ JPY/USD per year, $\gamma = -0.020$).
  $\pi_t$ derived endogenously: $\pi_t = \pi_{\text{base}} + \gamma \times
  \text{cumulative}\,\Delta e_t$. The Normalization Trap is confirmed: $b_{2030} = 270.7\%$.}
\label{tab:scenC_alt}
\small
\resizebox{\textwidth}{!}{%
\begin{tabular}{l c c c c c c c}
\toprule
Year & $r^n_t$ (\%) & $\pi_t$ (\%) & $\varepsilon_t$ (\%) & $g^n_t$ (\%) &
  $r^n - g^n$ (\%) & $d_t$ (\%) & $b_t$ (\%) \\
\midrule
2026 & 2.2 & 2.70 & $+0.50$ & 3.0 & $-0.80$ & 2.00 & 240.0 \\
2027 & 3.7 & 3.10 & $-0.60$ & 2.0 & $+1.70$ & 3.39 & 247.5 \\
2028 & 3.7 & 3.50 & $-0.20$ & 2.0 & $+1.70$ & 3.39 & 255.1 \\
2029 & 3.7 & 3.90 & $+0.20$ & 2.0 & $+1.70$ & 3.39 & 262.8 \\
2030 & 3.7 & 4.30 & $+0.60$ & 2.0 & $+1.70$ & 3.39 & 270.7 \\
\bottomrule
\end{tabular}%
}
\end{table}

\noindent\footnotesize\textit{Note:} $\pi_t$ is derived from the import price pass-through
equation: $\pi_t = 2.7\% + \gamma \times (\text{cumulative}\;\Delta e_t)$, where
$\gamma = -0.020$ (BOJ Q-JEM lower-bound estimate; Haba et al., 2025, Bank of Japan Working Paper No.\
25-E-2) and $\Delta e_t = -20$ JPY/USD per year. Year-by-year: $\pi_{2027} = 2.7 +
(-0.020)\times(-20) = 3.10\%$; $\pi_{2028} = 2.7 + (-0.020)\times(-40) = 3.50\%$;
$\pi_{2029} = 3.90\%$; $\pi_{2030} = 4.30\%$. Under yen appreciation, import prices
fall but domestic wage and energy cost pressures sustain positive pass-through. $d_t = 2.0\% + 1.39\% = 3.39\%$ GDP (IOER arithmetic with \textyen{}670T denominator, eq.~\eqref{eq:consolidated_bc};
consistent with Table~\ref{tab:scenC}). This calibration is internally consistent with Appendix~E
Panel~A and with the $\pi_t$ derivation in Appendix~B (Table~\ref{tab:appendB}).

\normalsize

\noindent The terms \textit{Rapid Accumulation} and \textit{Severe Accumulation} used in
Table~\ref{tab:scenC} describe a $+3.2\%$ $r^n - g^n$ spread that causes the debt ratio
to grow geometrically at approximately 7.7\% of GDP per year from the interest-growth
channel alone ($0.032 \times 240\%$). This is not explosive dynamics in the mathematical
sense of divergence to infinity, but an accelerating trajectory whose practical
significance derives from the scale of the base: each percentage point of debt-to-GDP
represents approximately \textyen{}6 trillion at 2026 nominal GDP levels.

\noindent\textit{Result:} The $+3.2\%$ $r^n - g^n$ spread generates
$0.032 \times 2.40 = 7.68\%$ of GDP in additional interest-growth debt accumulation per
year (11.07\% total including the 3.39\% primary deficit). Debt accelerates toward 286\%
within five years, confirming the Normalization Trap (Proposition~\ref{prop:norm_trap}).
The Normalization Ratchet implies that any reversion to Scenario~A after 2030 would face
a starting debt of $\approx$286\%---approximately 46 percentage points above the 2026
baseline ($286.4\% - 240.0\%$)---with an 86-year half-life for gap closure.

The $\Delta e_t = -50$ JPY/USD assumption underlying $g^n_t = 0.5\%$ in
Table~\ref{tab:scenC} represents an upper-bound stress-test calibration---one that
exceeds the largest single-year appreciation episodes on historical record (Plaza Accord
1985: $\approx -40$ JPY/USD in the year of the agreement, 240$\to$200 by end-1985,
reaching $-88$ JPY/USD through end-1986; 1994--95 episode: $\approx -25$ JPY/USD over 18 months).\footnote{The Plaza Accord was signed on 22 September 1985 with the yen at approximately
240 JPY/USD. By end-1985 the rate had moved to $\approx$200 (a single-calendar-year change
of $\approx -40$ JPY/USD), and to $\approx$153 by end-1986. The original G5 target was
a more modest 10--12\% depreciation of the dollar ($\approx-24$ to $-29$ JPY/USD), but the
intervention substantially overshot. The $\Delta e_t = -50$ JPY/USD stress-test value in
Table~\ref{tab:scenC} therefore exceeds the original G5 target but does not exceed the
\emph{realized} single-year move (1985 calendar year: $\approx-40$ JPY/USD); the claim
that it is an ``upper-bound'' calibration should therefore be read relative to
\emph{policy intent} rather than historical maxima.}
Table~\ref{tab:scenC_alt} presents the baseline-consistent calibration ($\Delta e_t =
-20$ JPY/USD, consistent with Appendix~E Panel~A), which yields $b_{2030} = 270.7\%$
and $r^n - g^n = +1.7\%$ throughout---confirming that the Normalization Trap is not an
artifact of extreme exchange-rate assumptions. A 4$\times$4 sensitivity grid
(Section~\ref{app:sensitivity}, Table~\ref{tab:grid_gamma_deltae}) confirms the Trap in
all 16 ($\gamma \times |\Delta e|$) parameter combinations.

\subsection{Robustness to IOER Pass-Through Rate}
\label{sec:ioer_robustness}

Section~\ref{sec:scenC} noted that the $\Delta d_t \approx +1.39\%$ of GDP
fiscal expansion embedded in Scenario~C represents an \emph{upper bound},
because the Bank of Japan's Tiered Interest Rate Framework (TIRF) may shield
a portion of the reserve stock from full policy-rate pass-through.  To quantify
the sensitivity of the Normalization Trap to this uncertainty, we introduce a
pass-through parameter $\alpha_{\mathrm{pt}} \in [0,\,1]$, where $\alpha_{\mathrm{pt}} = 1$
corresponds to the full IOER arithmetic of equation~\eqref{eq:consolidated_bc}
and $\alpha_{\mathrm{pt}} = 0$ corresponds to complete TIRF shielding (zero marginal
reserve remuneration).  The primary deficit under Scenario~C then becomes:
\[
  d_t^C(\alpha_{\mathrm{pt}}) \;=\; d_{\text{base}} + \alpha_{\mathrm{pt}} \times
  \frac{r_{\text{shock}}\,(R_{t-1} + \text{TB})}{Y_t},
\]
where $r_{\text{shock}} = 1.5\%$, $R_{t-1} = 500\text{T}$, $\text{TB} = 120\text{T}$,
and $Y_t = 670\text{T}$, giving $d_t^C(1) = 3.39\%$ and $d_t^C(0) = 2.00\%$.

Table~\ref{tab:ioer_sens} reports the resulting $b_{2030}$ trajectory for five
values of $\alpha_{\mathrm{pt}}$, holding all other Scenario~C parameters at
baseline ($r^n = 3.7\%$, $g^n = 2.0\%$, $\Delta e_t = -20$ JPY/USD/yr).

\begin{table}[htbp]
\centering
\caption{Scenario~C: $b_{2030}$ Sensitivity to IOER Pass-Through Rate
  $\alpha_{\mathrm{pt}}$.  Baseline-consistent calibration: $+1.5\%$ rate hike,
  $\Delta e_t = -20$ JPY/USD/yr.  All computations via equation~\eqref{eq:sim_transition};
  $b_0 = 240\%$.}
\label{tab:ioer_sens}
\small
\begin{tabular}{l c c c c}
\toprule
Pass-through $\alpha_{\mathrm{pt}}$ & $d_t^C$ (\%) & $b_{2030}$ (\%) &
  Gap vs.\ Scenario~A (pp) & Trap? \\
\midrule
$0\%$ (full TIRF shielding)  & 2.00 & 265.0 & $+24.6$ & \textbf{YES} \\
$25\%$                        & 2.35 & 266.4 & $+26.1$ & \textbf{YES} \\
$50\%$                        & 2.69 & 267.8 & $+27.5$ & \textbf{YES} \\
$75\%$                        & 3.04 & 269.2 & $+28.9$ & \textbf{YES} \\
$100\%$ (full pass-through)  & 3.39 & 270.6 & $+30.3$ & \textbf{YES} \\
\midrule
Scenario~A (no hike)         & 2.00 & 240.3 & \textemdash & \textemdash \\
\bottomrule
\end{tabular}
\end{table}

Three findings emerge from Table~\ref{tab:ioer_sens}.  First, the Normalization
Trap ($b_{2030} > b_0 = 240\%$) holds for \emph{every} value of
$\alpha_{\mathrm{pt}}$, including the extreme case of zero IOER pass-through.
Under complete TIRF shielding, the 2030 debt ratio reaches approximately
$265\%$---a gap of $+24.6$ percentage points relative to Scenario~A, compared
with $+30.3$ percentage points under full pass-through.  Second, the variation
across the full $[0,\,1]$ range of $\alpha_{\mathrm{pt}}$ amounts to only
$5.7$ percentage points in $b_{2030}$ (265.0\% to 270.6\%), a modest spread
relative to the $+24.6$--$+30.3$ pp gap over Scenario~A.  Third, the structural
driver of the Trap is not the IOER channel at all but the reversal of the
$r^n - g^n$ spread from $-0.8\%$ to $+1.7\%$: even with $d_t^C = 2.0\%$
(identical to Scenario~A), the base effect operates in the \emph{adverse}
direction, generating $0.017 \times 2.40 = 4.08\%$ of GDP in annual
\emph{accumulation} rather than compression.

This finding directly addresses the upper-bound caveat noted in
Section~\ref{sec:scenC}: the Normalization Trap is \emph{structurally
independent} of the IOER arithmetic and is driven entirely by the $r^n - g^n$
dynamic.  The IOER channel affects the \emph{magnitude} of the trap---by at most
$5.7$ percentage points in $b_{2030}$ across the full pass-through range---but
not its existence or directional sign.  Figure~\ref{fig:ioer_sens} illustrates
the full $\alpha_{\mathrm{pt}}$ sweep and confirms that the Scenario~A and
Scenario~C fan charts do not overlap at any horizon, irrespective of TIRF
assumptions.

\begin{figure}[htbp]
  \centering
  \includegraphics[width=\textwidth]{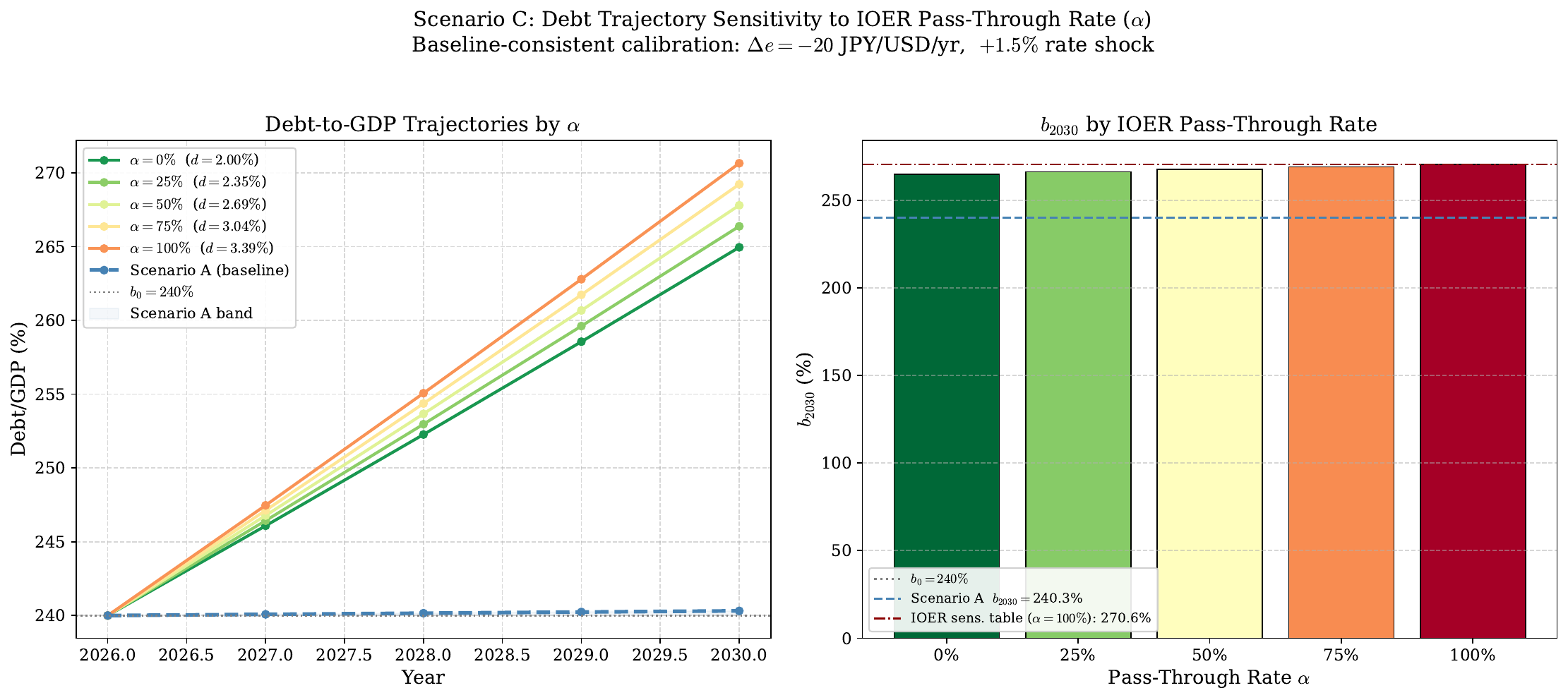}
  \caption{Scenario~C: Debt Trajectory Sensitivity to IOER Pass-Through Rate
    $\alpha_{\mathrm{pt}}$.
    \textit{Left panel:} Debt/GDP trajectories 2026--2030 for five values of
    $\alpha_{\mathrm{pt}} \in \{0\%, 25\%, 50\%, 75\%, 100\%\}$ (warm colors,
    top to bottom) and Scenario~A (blue dashed).
    \textit{Right panel:} $b_{2030}$ as a function of $\alpha_{\mathrm{pt}}$
    (bars), with the Scenario~A upper bound (blue dashed) and the paper's
    baseline Table~\ref{tab:scenB} value (dark red dot-dash).
    The Normalization Trap ($b_{2030} > 240\%$) is confirmed at every value of
    $\alpha_{\mathrm{pt}}$; the spread across the full TIRF uncertainty range
    is only $5.7$ percentage points.  Calibration: $+1.5\%$ rate shock,
    $\Delta e_t = -20$ JPY/USD/yr, $b_0 = 240\%$, equation~\eqref{eq:sim_transition}.}
  \label{fig:ioer_sens}
\end{figure}

\subsection{Parametric Uncertainty: Fan Chart}
\label{sec:fanchart}

The three baseline scenarios treat parameter values as point estimates.
Table~\ref{tab:fanchart} quantifies the sensitivity of the 2030 debt ratio to
variation in the three most uncertain parameters: the financial repression bias
$\varepsilon_t$ (which determines $r^n_t$), structural potential growth $\gnomstar$,
and the yen-growth elasticity $\alpha$.  The bounds are derived directly from the
sensitivity analysis in Appendix~E (Panels~A--C) and require no additional assumptions.

\begin{table}[htbp]
\centering
\caption{Parametric Uncertainty: $b_{2030}$ Fan Chart (\% GDP).
  Each row shows the low/baseline/high outcome under $\pm 1$ standard deviation
  perturbation of the indicated parameter, holding all others at baseline.
  ``Low'' and ``High'' denote unfavorable and favorable outcomes respectively.
  Baseline values: $\varepsilon_t = 0.50\%$, $\gnomstar = 3.0\%$,
  $\alpha = 0.050$; $b_0 = 240\%$; simulation via
  equation~\eqref{eq:sim_transition}.}
\label{tab:fanchart}
\small
\resizebox{\textwidth}{!}{%
\begin{tabular}{l l c c c}
\toprule
Scenario & Parameter varied & Low $b_{2030}$ & Baseline $b_{2030}$ & High $b_{2030}$ \\
\midrule
\multirow{2}{*}{A (Status Quo)}
  & $\varepsilon_t$: $0.25\%$ to $0.75\%$ & 242.7 & 240.3 & 238.0 \\
  & $\gnomstar$: $2.5\%$ to $3.5\%$       & 245.1 & 240.3 & 235.6 \\
\addlinespace
\multirow{2}{*}{B (Moderate, $+0.5\%$)}
  & $\varepsilon_t$ channel               & 254.3 & 251.8 & 249.4 \\
  & $\gnomstar$ channel                   & 256.7 & 251.8 & 246.9 \\
\addlinespace
\multirow{2}{*}{C (Aggressive, $+1.5\%$)}
  & $\alpha$: BOJ low ($0.013$) to baseline ($0.050$) & 270.7 & 270.7 & 263.1 \\
  & $\gnomstar$: $\pm 0.5$ pp around baseline         & 275.8 & 270.7 & 265.5 \\
\bottomrule
\end{tabular}%
}
\end{table}

\noindent\textit{Note on directional convention.}  For Scenario~C, a lower $\alpha$
(BOJ low) produces a \emph{lower} $b_{2030}$ because weaker yen-growth sensitivity
reduces the growth penalty from yen appreciation, partially offsetting the rate hike
damage.  Conversely, a lower $\gnomstar$ worsens the trajectory.

\medskip
\noindent Three structural findings emerge from Table~\ref{tab:fanchart}.  \textbf{First},
the Scenario~A fan is narrow: even under the most unfavorable parameter combination
($\gnomstar = 2.5\%$, $b_{2030} = 245.1\%$), the debt trajectory remains quasi-stationary
and far below the Scenario~B and~C ranges.  \textbf{Second}, the Scenario~B and~C fans
do not overlap with the Scenario~A fan: the gap between the Scenario~B lower bound
(249.4\%) and the Scenario~A upper bound (245.1\%) is 4.3 percentage points, confirming
that the Normalization Trap is a robust feature of the parameter space rather than a
knife-edge result.  \textbf{Third}, under Scenario~C even the most favorable parameter
combination ($\alpha = 0.013$, $b_{2030} = 263.1\%$) exceeds the Scenario~A upper bound
by 18 percentage points---a structural separation that holds across the entire BOJ
Q-JEM literature range for $\alpha$ and the plausible range for $\gnomstar$.

\section{Policy Implications and Robustness}
\label{sec:policy}

The policy implications derived in this section reflect the internal logic of the
JFR-rg model under its specific assumptions. They are not intended to prescribe a
single course of action or to narrow the range of legitimate policy perspectives.
Monetary policy involves distributional judgments, institutional constraints, and
social values that lie beyond any single analytical framework: the consequences of
financial repression fall differently on savers, borrowers, pension funds, and future
generations, and those distributional dimensions are outside the model's scope. The
model's contribution is to make certain fiscal and monetary trade-offs explicit and
quantifiable; how those trade-offs should be weighed is properly a matter for
democratic deliberation and empirical research that extends well beyond the present
analysis.

\subsection{On Distributional Scope: Why Systemic Stability Precedes
  Redistribution in Japan's Specific Context}
\label{sec:distribution}

The preceding disclaimer is analytically necessary but requires qualification
in Japan's specific institutional and macroeconomic context.  Four considerations,
taken together, suggest that the distributional critique---while important in
principle---does not constitute a first-order objection to the policy framework
proposed here.

\medskip
\noindent\textbf{(i) Japan's structural deflationary context and the cost-push
nature of recent inflation.}
The heterogeneous-agent literature's standard concern---that sustained financial
repression erodes the real wealth of net savers through demand-driven
inflation---presupposes an inflationary environment generated by domestic
demand excess.  Japan's macroeconomic history over the past three decades is
precisely the opposite: chronic deflation, a liquidity trap, and successive
episodes of premature fiscal tightening that amplified rather than resolved the
balance-sheet recession (Koo, 2003).  The post-2022 inflationary episode, while
real, is predominantly cost-push in character---driven by yen depreciation and
global commodity price shocks rather than by domestic wage-price dynamics.
Under cost-push inflation, $\varepsilon_t > 0$ reflects an imported price shock
rather than a deliberate transfer from bondholders to the sovereign; the
distributional incidence is accordingly diffuse and empirically modest compared
with the wage-price spiral scenarios that motivate HANK-style redistribution
concerns.

\medskip
\noindent\textbf{(ii) The captive financial system as a distributional buffer.}
A second consideration specific to Japan's institutional structure is that
the financial repression channel operates almost entirely through institutional
intermediaries rather than direct household bond holdings.  With
$\varphi_t \approx 0.88$--$0.92$, the JGB stock is held predominantly by the
Bank of Japan, commercial banks, insurance companies, and pension funds.
The BoJ remits profits to the Treasury, closing the distributional loop within
the public sector.  Banks, insurers, and pension funds transmit any suppressed
yield to households only indirectly---through deposit rates, insurance
premia, and pension returns that adjust slowly and partially.
This institutional architecture structurally attenuates the direct household
wealth effect that drives redistribution concerns in models where households
hold sovereign bonds directly.  It is consistent with the empirical observation
that Japan's income inequality, while not the lowest in the OECD, has remained
relatively stable throughout the period of financial repression: the disposable
income Gini has shown limited trend deterioration since 2010, and top income
concentration (the shares held by the top 1\% and top 10\% of earners) remains
lower than in the United States and at levels comparable to major European
economies (Mikayama et al., 2023; IMF, 2024b, \textit{Selected Issues}).
This empirical
record is consistent with the view that the distributional transmission channel
of financial repression is structurally attenuated in Japan's captive-system
configuration relative to what theory applied to other economies would predict.

\medskip
\noindent\textbf{(iii) The intergenerational reversal argument.}
The standard distributional critique focuses on within-generation transfers:
current savers bear a real return penalty.  However, the appropriate
counterfactual is not a world without financial repression but a world in which
the Normalization Trap (Proposition~\ref{prop:norm_trap}) has been triggered.
Under Scenario~C, the debt ratio reaches approximately 270--286\% of GDP by
2030, with a Normalization Ratchet half-life of 86 years
(Proposition~\ref{prop:ratchet}).  The cohorts bearing the largest share of
that accelerated debt accumulation are \emph{future} generations---precisely
those on whose behalf the distributional critique is most frequently invoked.
Sustained financial repression, by preventing debt explosion under the current
captive-system configuration, thus constitutes an implicit intergenerational
transfer \emph{toward} future generations rather than away from them.  This
reversal of the standard distributional narrative is not a claim that repression
is costless; it is a claim that the relevant comparison requires specifying
the counterfactual, and that the counterfactual most consistent with the
JFR-rg model is considerably more damaging to future-generation welfare than
continued repression at the current calibration.

\medskip
\noindent\textbf{(iv) Sequencing: systemic stability as a prerequisite for
redistribution.}
A broader methodological point follows from the three preceding observations.
The analytical tools of heterogeneous-agent general equilibrium
are most powerful when applied to economies that have resolved---or are not
facing---a first-order systemic stability question.  Japan's current
configuration, operating at the margin of the Stability Corridor with a 240\%
debt base, does not yet satisfy that prerequisite.  Distributional optimization
inside an unstable debt trajectory is analogous to optimizing cabin comfort on
a structurally compromised aircraft: the prior question of structural integrity
must be addressed first.  Once a structural upward shift in $\gnomstar$ is
confirmed---through sustained TFP-enhancing investment of the kind described in
Section~\ref{sec:captive}---and the debt ratio is placed on a firmly declining
path, the redistribution of the resulting Repression Dividend becomes both
analytically tractable and democratically legitimate.  The JFR-rg framework
identifies the conditions under which that sequencing becomes feasible; it does
not foreclose the distributional agenda, but assigns it its proper temporal
priority.

\medskip
\noindent\textbf{Implication for FTPL-based objections.}
The same sequencing logic applies to Fiscal Theory of the Price Level
(FTPL) critiques (Leeper, 1991; Sims, 1994; Woodford, 1994, 1995).
The FTPL predicts that unbacked fiscal expansion eventually forces a
discontinuous price-level jump.  In the JFR-rg framework, three features of
Japan's institutional configuration structurally delay this trigger: the captive
financial system ($\varphi_t \approx 0.90$) suppresses the sovereign risk
premium that would precipitate a sudden stop; the cost-push character of
current inflation means that observed $\varepsilon_t > 0$ does not reflect
the unchecked fiscal dominance the FTPL requires; and the nonlinear yen
channel provides an exchange-rate adjustment valve that absorbs fiscal shocks
without a discrete domestic price jump.  Critically, triggering the
Normalization Trap---the policy action most consistent with pre-empting FTPL
dynamics---would itself generate the debt explosion and distributional
deterioration that the FTPL warns against, but through the mechanical
arithmetic of the consolidated balance sheet rather than through a
confidence discontinuity.  The JFR-rg framework therefore does not deny the
FTPL's eventual relevance; it identifies the specific institutional threshold
($\varphi_t < \bar{\varphi}$) at which FTPL dynamics would dominate, and
argues that managing that threshold---rather than premature
normalization---is the appropriate policy response.

\subsection{Rethinking Monetary Normalization: The
  \texorpdfstring{$r^n < \pi$}{rn < pi} Imperative}

For a hyper-indebted economy with $b_{t-1} \geq 240\%$, the policy interest rate
functions as a fiscal parameter of the first order, not merely a macroeconomic dial
for inflation control. The JFR-rg stability condition suggests that the Bank of Japan
faces a meaningful trade-off: normalization that pushes $r^n - g^n$ positive may
generate debt accumulation that offsets any inflation-control benefit, at least over
the medium term. The model does not prescribe a specific rate ceiling, as the optimal
path depends on parameters---particularly $\gnomstar$, $\alpha$, and $\bar{e}$---that
require empirical estimation. It does, however, suggest that the $r^n - g^n$ spread
warrants explicit monitoring as a co-equal policy indicator alongside inflation and
output, and that normalization should proceed cautiously until a structural upward
shift in potential nominal growth $\gnomstar$ is confirmed through multiple quarters
of data. Furthermore, any normalization path should account for the path-dependent
legacy effects established in Proposition~\ref{prop:ratchet}.

\subsection{Strategic Allocation of the ``Repression Dividend''}

The negative $r^n - g^n$ spread generates the ``Repression Dividend''---approximately
1.92\% of GDP per year in total automatic debt compression at baseline. As established
in Section~\ref{sec:corridor}, the March 2026 operating point lies within the
approximation error band of the Stability Corridor boundary ($|W_{2026}| \approx
0.024\%$ versus approximation error $\approx 0.056\%$; Figure~9); the system should
therefore be interpreted as operating in the \emph{immediate vicinity} of the corridor
boundary rather than at a precisely calibrated distance from it.

The policy case for directing available fiscal space toward investment that raises
$\gnomstar$ does not rest on this numerical proximity, which lies within measurement
uncertainty. It rests instead on three independent grounds that are robust to
the approximation error:

\begin{enumerate}[itemsep=4pt]
  \item \textbf{Asymmetric cost of inaction (Normalization Ratchet).}
    Proposition~\ref{prop:ratchet} establishes that any episode of
    $r^n - g^n > 0$ generates a debt legacy with an 86-year half-life under
    Scenario~A dynamics. The cost of drifting into the Danger Region is
    therefore highly asymmetric: a multi-year normalization shock cannot be
    reversed on any policy-relevant horizon. This asymmetry motivates
    precautionary investment in $\gnomstar$ before the corridor boundary
    is breached, regardless of the precise current distance from it.

  \item \textbf{Dynamic deterioration of the captive system ($\varphi_t$ trend).}
    Section~\ref{sec:captive} documents that $\varphi_t$ has declined from
    $\approx 0.92$ (2022 peak) to $\approx 0.88$ (March 2026) as the Bank of
    Japan reduces monthly asset purchases. Each percentage point decline in
    $\varphi_t$ raises the endogenous risk premium $\rho_t$ and shifts the
    Stability Corridor adversarially---an independent tightening of effective
    financial conditions that operates even without explicit rate hikes. This
    slow-moving deterioration constitutes the most observable real-time risk
    indicator for the JFR-rg framework.

  \item \textbf{Scenario~B already in the Danger Region.}
    Simulations show that a moderate $+0.5\%$ rate shock (Scenario~B) places
    the system firmly in the Danger Region ($r^n - g^n = +0.2\%$, $b_{2030}
    \approx 252\%$), regardless of where the current operating point sits
    relative to the corridor boundary. Investment that raises $\gnomstar$
    widens the safety margin against this scenario.
\end{enumerate}

Candidates for $\gnomstar$-enhancing investment include the 61 critical and strategic
technologies identified by the Japanese government: advanced AI, quantum computing,
next-generation semiconductor fabrication, robotics, and green energy systems. Such
investment could raise $\gnomstar$ through TFP enhancement, attract private capital,
and reshore supply chains that reduce external vulnerability to the yen depreciation
channel. Whether this investment allocation is politically and institutionally feasible
is beyond the scope of the present model, but the directional imperative is clear
from condition~\eqref{eq:stability_condition}.

\subsection{The Yen as a Structural Shock Absorber}

Within the stability window ($\Delta e_t \leq \ebar$), yen depreciation is a vital
stabilizing force operating through the linear channel $\alpha\,\Delta e_t > 0$.
The post-2022 increase in the USD/JPY--CPI correlation suggests the system is operating
closer to $\ebar$ than during 2013--2021. Empirically, sustained levels above
155--160 JPY/USD appear to be associated with the penalty zone, based on the
post-2022 CPI pass-through data. Under the JFR-rg framework's calibration, the
stability window is approximately 140--155 JPY/USD, though the precise bounds of
$\ebar$ require empirical estimation from micro-level import price data.

\subsection{Monitoring the Captive System Parameter}

The $\varphi_t$ parameter is not merely a theoretical construct---it is a directly
observable leading indicator derivable quarterly from the Bank of Japan's Flow of Funds
accounts. As the BoJ reduces asset purchases post-2024, $\varphi_t$ is declining
gradually. Policy makers should treat a $\varphi_t$ reading below 0.85 as an
illustrative early-warning level for approaching the zone in which the captive
condition may begin to weaken, triggering a reassessment of the normalization pace.
This figure is consistent with the qualitative threshold implied by
Hoshi and Ito (2014) and represents a conservative lower bound given
Japan's post-2013 institutional configuration; formal estimation of
$\bar{\varphi}$ from historical sudden-stop episodes remains a research
priority (see the concluding section, failure mode~(ii)). This recommendation elevates
$\varphi_t$ monitoring to the same operational priority as the
$r^n - g^n$ spread dashboard.

\subsection{Robustness to Demographic Shock}

Japan's working-age population is declining at approximately 0.5--0.8\% per year,
threatening to reduce $\gnomstar$ toward zero. The JFR-rg stability condition remains
satisfied as long as capital deepening and TFP improvement offset labor contraction.
This is achievable through strategic AI and robotics investment. The Repression Dividend
is the financial mechanism to fund precisely this labor-capital substitution before the
window of engineered stability narrows further. Whether this path is institutionally
and politically achievable, and whether its social costs are equitably distributed,
are questions that the JFR-rg model identifies as urgent but cannot itself resolve.

\section{Conclusion}

This paper has formalized a set of empirical regularities in the Japanese economy into
the JFR-rg framework. Driven by real-time FRED data spanning 2013 to March 2026, the
analysis suggests that Japan's debt-to-GDP ratio of 240\% may be better understood as
a mathematical lever than as an imminent fiscal cliff---provided that the specific
institutional and policy conditions identified here are maintained. Four analytical
contributions stand out.

\textbf{First}, the redefined financial repression bias $\varepsilon_t \equiv \pi_t -
r^n_t \in \mathbb{R}$ is directly observable from FRED, eliminating reliance on
unobservable natural rate estimates. At current parameter values, the total annual
compression of 1.92\% absorbs the vast majority of a 2.0\% structural primary deficit,
leaving the debt trajectory operating at or near the boundary of the Stability Corridor.
Given the $\pm 0.5$ pp measurement uncertainty of the underlying data series, the
Corridor Width of approximately $-0.024\%$ indicates boundary proximity rather than a
confirmed outside position. This structural proximity makes the deployment of the
Repression Dividend into $\gnomstar$-enhancing investment a motivated policy priority.

\textbf{Second}, the \emph{Debt Sustainability Corridor}---the locus of
$(\varepsilon_t, \gnomstar)$ pairs consistent with debt stabilization under
exchange-rate-neutral conditions---provides a geometric characterization of policy
space. The corridor has three key properties: it \emph{expands} with higher debt
(Base Effect Lever), it is \emph{narrowed} by normalization through two simultaneous
channels, and it reveals that Japan's March 2026 operating point lies within the
approximation error band of the corridor boundary ($|W_{2026}| \approx 0.024\%$
versus linear-approximation error $\approx 0.056\%$ of GDP per year;
Appendix~A and Figure~\ref{fig:dsc_errband})---a result best
interpreted as boundary proximity rather than a precisely calibrated exceedance.

\textbf{Third}, the \emph{Normalization Ratchet} (Proposition~\ref{prop:ratchet})
establishes that temporary policy errors generate highly persistent, path-dependent
debt-ratio gaps relative to baseline: a normalization shock has an 86-year half-life
under Scenario~A dynamics. This result substantially strengthens the case for caution
in normalization beyond what the static stability condition alone implies.

\textbf{Fourth}, the \emph{Captive Financial System Parameter} $\varphi_t$
endogenizes the institutional precondition for sustained financial repression. The
gradual decline in $\varphi_t$ associated with BoJ asset purchase reduction constitutes
a second normalization risk channel, operating through the endogenous risk premium
$\rho_t$ even in the absence of explicit rate hikes.

\subsection*{Limitations, Failure Modes, and Falsification Paths}

The JFR-rg framework is a regime-diagnostic tool: it does not predict that Japan is
permanently safe, but specifies the conditions under which safety holds and the
channels through which those conditions could fail. Four failure modes are
analytically distinct and each constitutes a falsification path for the framework.

\medskip
\noindent\textbf{(i) Inflation undershoot failure.}
If $\pi_t$ falls persistently below $r^n_t$ (i.e., $\varepsilon_t < 0$), the
repression channel deactivates and the Base Effect Lever reverses. The January 2026
CPI reading of 1.5\%---which places $\varepsilon_t \approx -0.7\%$ under the
alternative calibration---illustrates that this failure mode is not hypothetical.
Monitoring $\varepsilon_t$ in real time is therefore a first-order diagnostic.

\medskip
\noindent\textbf{(ii) De-captivation failure.}
A sustained decline in $\varphi_t$ into a range associated with captive-system
weakening raises the endogenous risk premium $\rho_t$, tightening effective
financial conditions without any explicit BoJ rate action. The gradual tapering of BoJ asset purchases since 2024
($\varphi_t$ declining from $\approx$0.92 toward 0.88) constitutes a slow-moving
version of this risk. The JFR-rg framework does not model \emph{why} $\varphi_t$
declines---only \emph{what happens when it does}.

\medskip
\noindent\textbf{(iii) Import-cost overshoot failure.}
If yen depreciation exceeds $\ebar$, the quadratic import-inflation penalty
dominates the linear growth channel, compressing real household purchasing power and
reducing $g^n_t$. The 2022--2024 pass-through data suggest the system operates near
this threshold at current exchange rate levels. The precise value of $\ebar$
is not identified from the available sample---this is the most significant
unresolved empirical gap in the current framework.

\medskip
\noindent\textbf{(iv) Political-fiscal reaction failure.}
All simulations hold $d_t$ constant or vary it only through the mechanical
IOER arithmetic. A discrete fiscal deterioration---from political pressure to
increase transfers, respond to demographic costs, or fund defense---would
shift the Debt Sustainability Frontier upward, narrowing or eliminating the
Stability Corridor independently of monetary conditions. The JFR-rg framework
does not model fiscal politics; it takes $d_t$ as a parameter.

\medskip
These four failure modes also define the model's causal identification limits.
The Local Projections in Appendix~H.5 confirm directional consistency with
Propositions~1 and~4, but they cannot isolate the specific institutional channels
(captive system, YCC, yen threshold) from confounding common factors such as
Abenomics-era fiscal expansion or global demand recovery. Identification of the
individual channels requires instrument-based or event-study approaches---in
particular, exploiting the announcement dates of YCC policy changes and the
cross-sectional variation in $\varphi_t$ across JGB holder types---which we
leave for future research.

Future extensions should incorporate dynamic general equilibrium foundations,
endogenous sovereign risk premia, and examine the conditions under which FTPL
dynamics could dominate the captive financial system assumption. The present
analysis offers one coherent interpretation consistent with the FRED data
monitors presented in Section~3; alternative explanations for Japan's observed
stability remain both possible and welcome.

The stability condition $\varepsilon_t + \gnomstar > \pi_t + (d_t - s_t)/b_{t-1}$
carries a further implication that is central to the policy interpretation of
the framework. Because $\varepsilon_t$ and $\gnomstar$ enter the condition
symmetrically, they are policy substitutes: a one-percentage-point increase in
structural potential growth relaxes the required financial repression bias by
exactly one percentage point, and vice versa. This substitutability defines
the logic of the Repression Dividend. While $\varphi_t \approx 1$ keeps
$\varepsilon_t$ above its critical floor, the annual dividend it generates
---approximately $\varepsilon_t \times b_{t-1} \approx 1.2\%$ of GDP under the
current calibration---can in principle be directed toward productivity-
enhancing investment that raises $\gnomstar$ structurally. Over time, each
percentage point gained in $\gnomstar$ lowers the required $\varepsilon_t$
threshold by the same amount, thereby reducing the framework's dependence on
the captive financial system condition itself. In this sense, JFR-rg points
not to a permanent justification of repression, but to a transition logic in
which a regime-contingent dividend, if used well, can help finance the growth
improvements that eventually reduce the need for repression. The concrete
institutional design of such a path---including the appropriate speed of
$\varphi_t$ reduction and its coordination with productivity policy---remains
an important task for future research.

\medskip
\noindent\textbf{The internal trade-off of the QQE balance sheet.}
A natural observation---one the framework should state explicitly rather than
leave implicit---is that the same institutional feature that sustains
Scenario~A stability also generates the Normalization Trap of Scenario~C. The
large BoJ current account stock $R_{t-1}$ that keeps $\varphi_t \approx 0.90$
and compresses the term premium is the same stock that, via
equation~\eqref{eq:consolidated_bc}, determines the IOER pass-through of any
future policy-rate increase (Section~\ref{sec:ioer_robustness}). In this
precise accounting sense, the mechanism that maintains the Stability Corridor
is the mechanism that makes aggressive normalization self-defeating. This is
not a paradox but a regime-specific trade-off: the Repression Dividend
described above is not a costless windfall but a contingent revenue stream
whose continued availability is inseparable from the fiscal-dominance exposure
that normalization would activate. The strategic case for deploying the
dividend into $\gnomstar$-enhancing investment therefore rests on exactly this
trade-off---raising $\gnomstar$ before the window narrows is how the framework
envisions exit from the QQE balance sheet without activating the Trap.

\medskip
Ultimately, the JFR-rg framework does not seek to refute the fundamental
insights of mainstream macroeconomics; rather, it is intended to complement
them in a historically specific setting. Standard models have long---and
correctly---warned of the risks associated with fiscal dominance, prolonged
balance-sheet expansion, and large-scale central bank intervention. The
``Normalization Trap'' and the deficit expansion illustrated in Scenario~C can
be read as concrete manifestations of precisely those concerns: once QQE has
transformed the consolidated public balance sheet, the policy rate becomes a
fiscal variable of the first order. The IOER arithmetic derived in
equation~\eqref{eq:consolidated_bc} is therefore not a rejection of orthodox
macroeconomics, but a regime-specific accounting implication of the very
fiscal-dominance logic that mainstream theory has emphasized for decades.

A return to a more ordinary macroeconomic environment---one in which market-
determined interest rates and standard debt-sustainability logic again become
sufficient guides---remains the natural long-run benchmark. However, given the
institutional reality of Japan's post-QQE balance sheet
($\varphi_t \approx 0.90$, $R_{t-1} \approx \text{\textyen{}}500$ trillion),
navigating the transition from the present regime to that benchmark requires
temporarily supplementing steady-state intuition with the accounting
identities and scope conditions derived in this paper. That is the narrower
but practically consequential role claimed for JFR-rg. It is offered not as a
universal replacement for mainstream debt analysis, but as a complementary,
regime-conditional framework for organizing Japan's transition-era debt
dynamics under publicly observable conditions.

\medskip
\noindent\textbf{Consensus contribution.}
Whatever the reader's preferred interpretive lens, the paper's consensus
contribution is this: Japan's debt-to-GDP ratio of 240\% behaves differently
from what standard scalar debt thresholds alone would suggest, and that
difference is systematically traceable to three institutional observables
---$\varepsilon_t$, $\varphi_t$, and $\Delta e_t$---whose interaction the
JFR-rg framework makes explicit. Whether this is best read as a fiscal
accounting exercise, an institutional $r<g$ extension, a nonlinear exchange-
rate framework, a normalization-risk framework, or a falsifiable mechanism
hypothesis, the underlying point is the same: at $b_{t-1}=2.40$, the sign and
magnitude of the $r^n-g^n$ spread are fiscal variables of the first order, and
the institutional conditions that shape that spread are observable in real
time. That is the map this paper offers.

\medskip
\noindent\textit{Data and replication.} All empirical series are drawn from the
Federal Reserve Economic Data (FRED) database, Federal Reserve Bank of St.\ Louis
(FRED, 2026), March 2026 snapshot (series: JPNNGDP, JPNRGDPEXP, DEXJPUS, JPNCPIALLMINMEI,
IRLTLT01JPM156N, IRSTCI01JPM156N, GGGDTAJPA188N, LRHUTTTTJPM156S). Replication code
for all figures and simulations is available from the author upon request.

\medskip
\noindent\textbf{The nature of this contribution.}
The JFR-rg framework applies mainstream debt arithmetic more completely to Japan's
institutional parameters than existing frameworks have done. The consolidated
government budget constraint, the debt accumulation identity, and the stability
condition are all derivable from first principles that mainstream analysis accepts.
Where results appear counterintuitive, the source of tension is not the mathematics
but the gap between the institutional reality those parameters describe---$b_{t-1}
\approx 2.40$, $R_{t-1} \approx \text{\textyen{}}500$ trillion,
$\varphi_t \approx 0.90$---and the baseline assumptions embedded in standard policy
frameworks. The Normalization Trap and the rapid deficit expansion of Scenario~C are
the mechanical manifestations of precisely the fiscal dominance that mainstream
economics has long warned about: the IOER arithmetic derived in
equation~\eqref{eq:consolidated_bc} and quantified in Sections~4.3--4.5 is
not a heterodox claim but the straightforward accounting consequence of a
consolidated balance sheet transformed by QQE. This paper does not contest
that arithmetic. It applies it.

\newpage

\section*{Acknowledgments}
\addcontentsline{toc}{section}{Acknowledgments}

The author acknowledges the use of large language models as assistive
tools during the preparation of this manuscript. In particular:

{\sloppy\raggedright
\begin{itemize}
  \item \textbf{Claude Sonnet 4.6 \& Claude Opus 4.7} (Anthropic) --- Claude
    Sonnet 4.6 was used for iterative drafting of theoretical sections,
    mathematical notation, \LaTeX{} typesetting, replication code development,
    and sensitivity analysis design. Claude Opus 4.7 was used in a limited role for minor
    fixes, wording refinement, and final consistency checks for the second arXiv version.
  \item \textbf{Gemini 3.1 Pro} (Google DeepMind) --- used for exploratory
    literature search, cross-checking of empirical claims, and
    alternative formulation suggestions.
  \item \textbf{Grok 4.20 Expert} (xAI) --- used for stress-testing
    propositions through counterexample-style questioning and
    preliminary proof sketch generation.
  \item \textbf{ChatGPT GPT-5.4 Thinking} (OpenAI) --- used for
    editorial review of exposition, consistency checks,
    reference cross-checking, and pre-submission feedback.
\end{itemize}
}

All such systems were used solely as assistive tools for drafting,
editing, exploration, and checking. All theoretical models,
propositions, mathematical proofs, empirical calibrations,
interpretations, and conclusions were formulated, verified, and are
fully endorsed by the human author, who bears sole intellectual and
academic responsibility for the contents of this manuscript. Large
language models are not listed as authors and bear no academic
responsibility for the work.

\newpage

\addcontentsline{toc}{section}{References}
{\sloppy\raggedright

}

\newpage

\appendix
\section{Derivation of the JFR-rg Debt Dynamics and Stability Condition}

\noindent\textbf{Step 1.} The nominal government budget constraint:
\[
  B_t = (1 + r^n_t)\,B_{t-1} + D_t - S_t.
\]

\noindent\textbf{Step 2.} Dividing by $Y_t = (1 + g^n_t)\,Y_{t-1}$ and applying
$(1 + r^n_t)/(1 + g^n_t) \approx 1 + (r^n_t - g^n_t)$:
\[
  b_t \approx \left[1 + (r^n_t - g^n_t)\right] b_{t-1} + d_t - s_t.
\]

\noindent\textit{Approximation note.} The exact factor is
$(1+r^n_t)/(1+g^n_t) = 1 + (r^n_t - g^n_t)/(1 + g^n_t)$, so the
linear approximation understates the factor by $g^n_t(r^n_t -
g^n_t)/(1+g^n_t)$ per period.  At Scenario~A parameters
($r^n = 2.2\%$, $g^n = 3.0\%$) this error equals
$0.030\times(-0.008)/1.030 \approx -0.000233$ per period, implying an
annual overstatement of $\Delta b_t$ by approximately
$0.000233\times 240 \approx 0.056\%$ of GDP.  This approximation
error exceeds the Corridor Width $|W_{2026}| \approx 0.024\%$
(Section~\ref{sec:corridor}), and reinforces the interpretation of
Section~\ref{sec:scenA}: the March~2026 operating point should be
understood as lying \emph{in the vicinity of} the Stability Corridor
boundary rather than at a precisely calibrated distance from it.

\noindent\textbf{Step 3.} Substituting the identity $r^n_t = \pi_t - \varepsilon_t$
(with $\varepsilon_t \equiv \pi_t - r^n_t \in \mathbb{R}$):
\[
  \Delta b_t = \bigl[(\pi_t - \varepsilon_t) - g^n_t\bigr]\,b_{t-1} + d_t - s_t.
\]

\noindent\textbf{Step 4.} Substituting the nonlinear growth function
$g^n_t = \gnomstar + \alpha\,\Delta e_t - \beta\,\max(0,\,\Delta e_t - \ebar)^2$:
\[
  \Delta b_t = \Bigl[(\pi_t - \varepsilon_t) - \bigl(\gnomstar + \alpha\,\Delta e_t
    - \beta\,\max(0,\,\Delta e_t - \ebar)^2\bigr)\Bigr]\,b_{t-1} + d_t - s_t.
\]

\noindent\textbf{Step 5.} Rearranging for $\Delta b_t \leq 0$ gives
condition~\eqref{eq:stability_condition}. $\square$

\section{Full Numerical Calibration --- Scenario C-ALT}

\begin{table}[htbp]
\centering
\caption{Comprehensive Simulation --- Scenario C-ALT, Baseline-Consistent Calibration
  ($\Delta e_t = -20$ JPY/USD per year; $\gamma = -0.020$; $\pi_t$ endogenous
  via import price pass-through equation~\eqref{eq:yen_growth}).
  The Normalization Trap holds under this internally consistent calibration.}
\label{tab:appendB}
\small
\resizebox{\textwidth}{!}{%
\begin{tabular}{l c c c c c c c c}
\toprule
Year & $r^n_t$ (\%) & $\pi_t$ (\%) & $\varepsilon_t$ (\%) & $\Delta e_t$ (JPY/USD) &
  $g^n_t$ (\%) & $r^n - g^n$ (\%) & $d_t$ (\%) & $b_t$ (\%) \\
\midrule
2026 & 2.2 & 2.70 & $+0.50$ &        0 & 3.0 & $-0.80$ & 2.00 & 240.0 \\
2027 & 3.7 & 3.10 & $-0.60$ & $-20$    & 2.0 & $+1.70$ & 3.39 & 247.5 \\
2028 & 3.7 & 3.50 & $-0.20$ & $-20$    & 2.0 & $+1.70$ & 3.39 & 255.1 \\
2029 & 3.7 & 3.90 & $+0.20$ & $-20$    & 2.0 & $+1.70$ & 3.39 & 262.8 \\
2030 & 3.7 & 4.30 & $+0.60$ & $-20$    & 2.0 & $+1.70$ & 3.39 & 270.7 \\
\bottomrule
\end{tabular}%
}
\end{table}

\noindent\footnotesize\textit{$\pi_t$ construction:}
$\pi_t = \pi_{\text{base}} + \gamma \times (\text{cumulative}\;\Delta e_t)$, where
$\pi_{\text{base}} = 2.7\%$ (BoJ FY2025 CPI central projection; Jan 2026 realized = 1.5\%, see Section~3 data-currency note), $\gamma = -0.020$ (BOJ Q-JEM lower-bound pass-through
coefficient; Haba et al., 2025, Bank of Japan Working Paper No.\ 25-E-2), and the annual yen appreciation
is $\Delta e_t = -20$ JPY/USD per year (consistent with Appendix~E Panel~A and
Table~\ref{tab:sens_alpha}).
Year-by-year: $\pi_{2027} = 2.7 + (-0.020)\times(-20) = 3.10\%$;\;
$\pi_{2028} = 2.7 + (-0.020)\times(-40) = 3.50\%$;\;
$\pi_{2029} = 3.90\%$;\; $\pi_{2030} = 4.30\%$.
Under this calibration $\varepsilon_t$ turns negative in 2027 ($-0.60\%$) and recovers
to positive in 2029 as pass-through inflation overtakes the fixed policy rate.
$g^n_t = g^{n*}_t + \alpha\,\Delta e_t = 3.0 + 0.050\times(-20) = 2.0\%$ throughout
(eq.~\eqref{eq:yen_growth}; penalty term inactive under appreciation).
$d_t = 2.0\% + 1.39\% = 3.39\%$ GDP; the \textyen{}670T denominator is consistent
with Scenario~B and the current nominal GDP SAAR.
$\varepsilon_t < 0$ in 2027--2028 indicates the real interest rate turns positive in
those years; the quadratic penalty term $\beta\max(0,\Delta e_t - \ebar)^2$ remains
identically zero throughout (appreciation: $\Delta e_t < 0 < \ebar$).

\normalsize

\section{Historical Mapping of Red Zone Episodes}

\begin{itemize}
  \item \textbf{2008--2009 Global Financial Crisis:} Nominal growth $g^n$
    collapsed to approximately $-7$ to $-8\%$ YoY while $r^n \approx 1.3$--$1.5\%$,
    pushing $r^n - g^n$ to $+8$ to $+10\%$. Debt-to-GDP spiked approximately 10
    percentage points in 12 months, despite no domestic fiscal policy change---
    confirming that growth collapse, not fiscal stance, triggers debt instability.

  \item \textbf{2014 Consumption Tax Hike:} Raising the tax rate from 5\% to 8\%
    suppressed domestic consumption by approximately 2--3\% of GDP, temporarily
    flipping $r^n - g^n$ positive and stalling the nascent Abenomics
    debt-reduction trajectory. This episode is consistent with the JFR-rg prediction that fiscal consolidation itself can
    trigger the destabilizing dynamics the JFR-rg model identifies, and constitutes
    a historical instance of the Normalization Ratchet: the premature tightening
    left a persistent debt legacy that delayed the return to the safe zone.

  \item {\sloppy\textbf{2020 Pandemic Shock:} Nominal $g^n$ collapsed to approximately
    $-7\%$ YoY in Q2~2020, spiking $r^n - g^n$ and pushing the debt ratio up
    20+ percentage points. The rapid expansion of $\varepsilon_t$ through
    accelerated asset purchases---combined with yen weakness and global goods
    demand recovery---restored the system to the stable green zone by
    2021--2022. This episode provides perhaps the clearest empirical illustration
    of the JFR-rg stabilizing channels in the sample, though alternative
    explanations---including global demand recovery and base effects---also
    contributed to the debt ratio decline.\par}
\end{itemize}

\section{SAAR vs.\ Cabinet Office Aggregate: Reconciliation}
\label{app:saar}

The SAAR series (FRED:\,JPNNGDP) and the Cabinet Office calendar-year aggregate
(Cabinet Office, Government of Japan, 2025) measure nominal GDP over the same
underlying economy but differ in two respects. First, the SAAR annualizes the
\emph{pace} of a specific quarter, whereas the annual aggregate sums four quarters'
actual output. Second, SAAR values are revised more frequently and incorporate
seasonal adjustment. The SAAR starting value of $\approx$510 trillion yen in
Q1~2013 exceeds the Cabinet Office 2013 full-year aggregate of $\approx$503 trillion
yen because the Abenomics monetary expansion in early 2013 caused the quarterly pace
to run slightly above the full-year average. Both series exhibit identical structural
trends---the 2020 contraction trough, the 2021--2022 recovery, and the post-2022
inflationary acceleration---and the $\approx$7 trillion yen level difference does not
affect any of the paper's qualitative or quantitative conclusions.

\section{Sensitivity Analysis}
\label{app:sensitivity}

This appendix examines the robustness of the JFR-rg model's core conclusions
to variation in three parameters: (i)~the yen-growth elasticity $\alpha$, for which
empirical estimates from the literature are available; (ii)~the financial repression
bias $\varepsilon_t$, which is directly observable but subject to future policy
change; and (iii)~the structural potential nominal growth rate $\gnomstar$, which
is subject to demographic uncertainty.

\subsection*{Panel A: Sensitivity to $\alpha$ (Yen-Growth Elasticity)}

The parameter $\alpha$ in equation~\eqref{eq:yen_growth} governs the marginal
contribution of yen depreciation to nominal growth. While $\alpha$ is not directly
observable, the Bank of Japan's Q-JEM macroeconometric model (Haba et al., 2025,
BOJ Working Paper No.~25-E-2) implies that a 10\% yen depreciation raises nominal
GDP by approximately 0.2--0.5\% in the first year. Converting to model units (per
1 JPY/USD change from a base of $\approx$150 JPY/USD), this yields the range
$\alpha \in [0.013,\, 0.050]$. The model's baseline calibration uses
$\alpha = 0.050$ (the upper bound of the literature range), corresponding to the
pre-2022 exchange rate regime with stronger pass-through.

\begin{table}[htbp]
\centering
\caption{Panel A: Sensitivity of $\Delta b_t$ (pp/year) to $\alpha$}
\label{tab:sens_alpha}
\small
\resizebox{\textwidth}{!}{%
\begin{tabular}{l c c c c c c}
\toprule
$\alpha$ (source) & $g^n_t$ (B) & $g^n_t$ (C) & $\Delta b_t$ (A) &
  $\Delta b_t$ (B) & $\Delta b_t$ (C) & Trap \\
\midrule
0.013 (BOJ low, +0.2\%/10\% depr.) &
  2.87\% & 2.74\% & $+0.080$ & $+2.05$ & $+5.85$ & \textbf{YES} \\
0.020 (BOJ mid, +0.3\%/10\% depr.) &
  2.80\% & 2.60\% & $+0.080$ & $+2.22$ & $+6.19$ & \textbf{YES} \\
0.033 (BOJ high, +0.5\%/10\% depr.) &
  2.67\% & 2.34\% & $+0.080$ & $+2.53$ & $+6.81$ & \textbf{YES} \\
0.050 (Model baseline, calibrated) &
  2.50\% & 2.00\% & $+0.080$ & $+2.94$ & $+7.63$ & \textbf{YES} \\
\bottomrule
\end{tabular}%
}
\end{table}

\noindent\textit{Note:} $\Delta e_t$ is fixed at $-10$ JPY/USD (Scenario~B)
and $-20$ JPY/USD (Scenario~C-ALT), which are the values consistent with the
main-scenario $g^n$ design values via eq.~\eqref{eq:yen_growth}:
$\Delta e = (g^n - \gnomstar)/\alpha$, yielding
$(2.5 - 3.0)/0.050 = -10$ for Scenario~B and
$(2.0 - 3.0)/0.050 = -20$ for Scenario~C-ALT
(Table~\ref{tab:scenC_alt}; baseline-consistent calibration).
As $\alpha$ varies across rows, $g^n_t$ changes accordingly
via eq.~\eqref{eq:yen_growth} with $\Delta e$ held fixed---this
is the structural sensitivity being measured.
The upper-bound stress-test calibration ($\Delta e = -50$,
$g^n = 0.5\%$; Table~\ref{tab:scenC}) is excluded from this panel
to focus the sensitivity analysis on the baseline-consistent
trajectory rather than extreme tail-risk episodes.
Primary deficits are computed via IOER arithmetic
(eq.~\eqref{eq:consolidated_bc}): $d^A_t = 2.00\%$, $d^B_t = 2.46\%$,
$d^C_t = 3.39\%$ of GDP (computed via IOER arithmetic, eq.~\eqref{eq:consolidated_bc};
\textyen{}670T GDP denominator, FRED:\,JPNNGDP).
``Trap'' indicates whether $\Delta b_t > 0$ under Scenario~B (the moderate case).
Scenario~A is $\alpha$-invariant because $\Delta e_t = 0$
under exchange-rate-neutral calibration.
$\pi_t$ is held at its baseline value ($2.7\%$) throughout this panel
to isolate the pure effect of $\alpha$ on $\Delta b_t$; the full dynamic
path with endogenous $\pi_t$ (pass-through $\gamma = -0.020$,
cumulative over 2027--2030) is reported in Appendix~B
(Table~\ref{tab:appendB}).

\medskip
\noindent\textbf{Robustness finding.} The Normalization Trap ($\Delta b_t > 0$
under both Scenarios B and C) holds for every value of $\alpha$ in the
BOJ literature range. The conclusion is directionally robust; only the
\emph{magnitude} of the trap scales with $\alpha$.

\subsection*{Panel B: Sensitivity to $\varepsilon_t$ (Financial Repression Bias)}

\begin{table}[htbp]
\centering
\caption{Panel B: Sensitivity of $\Delta b_t$ and $W$ to $\varepsilon_t$}
\label{tab:sens_eps}
\small
\begin{tabular}{c c c c c}
\toprule
$\varepsilon_t$ (\%) & $r^n_A$ (\%) & $\Delta b_A$ (pp/yr) &
  Corridor Width $W$ (\%) & Strictly Stable ($\Delta b_t \leq 0$)? \\
\midrule
0.00 & 2.70 & $+1.280$ & $-0.377$ & No \\
0.25 & 2.45 & $+0.680$ & $-0.200$ & No \\
0.50 & 2.20 & $+0.080$ & $-0.024$ & No \\
\textbf{0.533} & \textbf{2.167} & \textbf{0.000} & \textbf{0.000} &
  \textbf{Boundary ($\varepsilon^*$)} \\
0.75 & 1.95 & $-0.520$ & $+0.153$ & \textbf{Yes} \\
1.00 & 1.70 & $-1.120$ & $+0.330$ & \textbf{Yes} \\
1.50 & 1.20 & $-2.320$ & $+0.684$ & \textbf{Yes} \\
2.00 & 0.70 & $-3.520$ & $+1.037$ & \textbf{Yes} \\
\bottomrule
\end{tabular}
\end{table}

\noindent\textit{Note:} Fixed parameters: $\gnomstar = 3.0\%$, $\pi = 2.7\%$,
$d = 2.0\%$, $b_{t-1} = 2.40$. The boundary value $\varepsilon^* =
\pi + (d-s)/b_{t-1} - \gnomstar = 2.7 + 0.833 - 3.0 = 0.533\%$ is the
minimum $\varepsilon_t$ consistent with $\Delta b_t \leq 0$ under Scenario~A
at the baseline linearization.

\medskip
\noindent\textbf{Robustness finding.} The current $\varepsilon_t = 0.50\%$
lies $0.033$ percentage points below the boundary value
$\varepsilon^* = 0.533\%$ under the baseline linearization. Given the
approximation error documented in Appendix~A, this is best read as boundary
proximity rather than a precisely measured exceedance. An increase
in $\varepsilon_t$ to 0.75\% (achievable if the Bank of Japan maintained $r^n_t$
near 1.95\%) would shift the trajectory to strict debt reduction.

\subsection*{Panel C: Sensitivity to $\gnomstar$ (Structural Potential Growth)}

\begin{table}[htbp]
\centering
\caption{Panel C: Sensitivity of $\Delta b_t$ and $W$ to $\gnomstar$}
\label{tab:sens_gstar}
\small
\begin{tabular}{c c c c c c}
\toprule
$\gnomstar$ (\%) & $r^n - g^n$ (A) & $\Delta b_A$ (pp/yr) &
  $W_A$ (\%) & Scenario A Stable? & $\Delta b_B$ (pp/yr) \\
\midrule
1.0 & $+1.20$ & $+4.88$ & $-1.44$ & No  & $+7.74$ \\
1.5 & $+0.70$ & $+3.68$ & $-1.08$ & No  & $+6.54$ \\
2.0 & $+0.20$ & $+2.48$ & $-0.73$ & No  & $+5.34$ \\
2.5 & $-0.30$ & $+1.28$ & $-0.38$ & No  & $+4.14$ \\
3.0 & $-0.80$ & $+0.08$ & $-0.024$ & No  & $+2.94$ \\
\textbf{3.033} & $\mathbf{-0.833}$ & \textbf{0.00} & \textbf{0.00} &
  \textbf{Boundary ($g^{n*}_{\min}$)} & $+2.86$ \\
3.5 & $-1.30$ & $-1.12$ & $+0.33$ & \textbf{Yes} & $+1.74$ \\
4.0 & $-1.80$ & $-2.32$ & $+0.68$ & \textbf{Yes} & $+0.54$ \\
4.5 & $-2.30$ & $-3.52$ & $+1.04$ & \textbf{Yes} & $-0.66$ \\
\bottomrule
\end{tabular}
\end{table}

\noindent\textit{Note:} Fixed parameters: $\varepsilon_t = 0.5\%$,
$r^n_A = 2.2\%$, $d_A = 2.0\%$, $b_{t-1} = 2.40$. Scenario B column uses
$d_{t,B} = 2.46\%$ (IOER-corrected, \textyen{}670T denominator)
with $\alpha = 0.050$ (model baseline) and $\Delta e_B = -10$ JPY/USD.

\medskip
\noindent\textbf{Robustness finding.} The current $\gnomstar = 3.0\%$ is
0.033 percentage points below the corresponding boundary value
$g^{n*}_{\min} = 3.033\%$ under the baseline linearization; this is best read
as a boundary-near configuration rather than a sharply identified margin.
Japan's declining working-age population (estimated $-0.5$ to $-0.8\%$ per year)
poses a structural downward pressure on $\gnomstar$. If $\gnomstar$ were to fall
to 2.5\%, maintaining even a neutral debt trajectory ($\Delta b_A = 0$) would
require $\varepsilon_t \geq 1.08\%$ (i.e., $r^n_t \leq 1.62\%$). This
directly quantifies the Repression Imperative (Proposition~\ref{prop:repression_imperative}):
demographic pressure tightens the constraint, requiring stronger financial
repression to compensate.

\subsection*{Panel D: Sensitivity to $\bar{e}$ (Depreciation Threshold) and
$\beta$ (Penalty Coefficient)}

The parameters $\bar{e}$ and $\beta$ in the nonlinear exchange rate
equation~\eqref{eq:yen_growth} cannot be identified from the 2013--2026
sample: episodes with $\Delta e_t > \bar{e}$ are limited to 2022--2024,
insufficient for Hansen (1999) threshold regression. Nevertheless, a
two-dimensional grid search over $(\bar{e},\,\beta)$ space yields two
important structural results.

\medskip
\noindent\textbf{Structural independence of the Normalization Trap.}
In Scenarios~B and~C, the yen \emph{appreciates} ($\Delta e_t < 0$), so
$\Delta e_t - \bar{e} < 0$ for every $\bar{e} > 0$ and the penalty term
$\beta\max(0,\,\Delta e_t - \bar{e})^2 \equiv 0$.  The Normalization Trap
($\Delta b_t > 0$ under moderate or large rate hikes) is therefore
\emph{structurally independent} of $(\bar{e},\,\beta)$.

\medskip
\noindent\textbf{Depreciation-scenario stability.}
Panel~D examines the complementary depreciation scenarios
(Scenario~D15: $\Delta e_t = +15$ JPY/USD; Scenario~D20: $\Delta e_t = +20$
JPY/USD) with the Bank of Japan holding rates at $r^n = 2.2\%$ and
$d = 2.0\%$.  Without any penalty the baseline $g^n$ rises to $3.75\%$
(D15) or $4.00\%$ (D20), well above the stability threshold
$g^n_{\min} = r^n + d/b = 3.033\%$, yielding $\Delta b_t = -1.72$ and
$-2.32$ pp/yr respectively.  Table~\ref{tab:sens_beta_ebar} shows $\Delta b_t$
across the grid; an asterisk~($*$) marks cells where the penalty has
driven $\Delta b_t > 0$.

\begin{table}[htbp]
\centering
\caption{Panel D: $\Delta b_t$ (pp/yr) under depreciation scenarios
  ($r^n = 2.2\%$, $d = 2.0\%$, $b_{t-1} = 2.40$).
  Asterisk~($*$) = Trap ($\Delta b_t > 0$).}
\label{tab:sens_beta_ebar}
\small
\resizebox{\textwidth}{!}{%
\begin{tabular}{cc cccc | cccc}
\toprule
 & & \multicolumn{4}{c|}{Scenario D15\;($\Delta e = {+}15$)}
   & \multicolumn{4}{c}{Scenario D20\;($\Delta e = {+}20$)} \\
\cmidrule(lr){3-6}\cmidrule(lr){7-10}
$\beta$ & & $\bar{e}=5$ & $\bar{e}=10$ & $\bar{e}=15$ & $\bar{e}=20$
         & $\bar{e}=5$ & $\bar{e}=10$ & $\bar{e}=15$ & $\bar{e}=20$ \\
\midrule
0.001 & & $-1.480$ & $-1.660$ & $-1.720$ & $-1.720$
      & $-1.780$ & $-2.080$ & $-2.260$ & $-2.320$ \\
0.002 & & $-1.240$ & $-1.600$ & $-1.720$ & $-1.720$
      & $-1.240$ & $-1.840$ & $-2.200$ & $-2.320$ \\
0.005 & & $-0.520$ & $-1.420$ & $-1.720$ & $-1.720$
      & $+0.380^*$ & $-1.120$ & $-2.020$ & $-2.320$ \\
0.010 & & $+0.680^*$ & $-1.120$ & $-1.720$ & $-1.720$
      & $+3.080^*$ & $+0.080^*$ & $-1.720$ & $-2.320$ \\
\bottomrule
\end{tabular}%
}
\end{table}

\noindent\textit{Note:} No-penalty baseline: $g^n = 3.75\%$ (D15),
$4.00\%$ (D20); stability threshold $g^n_{\min} = 3.033\%$.
Penalty = $\beta\max(0,\,\Delta e - \bar{e})^2$; zero when $\Delta e \leq \bar{e}$.

\medskip
\noindent\textbf{Critical $\beta^*(\bar{e})$ curve.}
Stability is lost precisely when the penalty exceeds the growth surplus
$g^n_{\text{base}} - g^n_{\min}$.  Solving analytically:
\begin{equation}
  \beta^*(\bar{e}) = \frac{g^n_{\text{base}} - g^n_{\min}}
                          {(\Delta e - \bar{e})^2},
  \qquad \Delta e > \bar{e}.
  \label{eq:beta_critical}
\end{equation}
For Scenario~D20 the critical values are
$\beta^*(5)=0.0043$, $\beta^*(10)=0.0097$, $\beta^*(15)=0.0387$;
for $\bar{e} \geq \Delta e$ the penalty is identically zero and debt
is unconditionally stable.  These thresholds considerably exceed typical
import-price pass-through estimates in the literature, confirming that
the depreciation-scenario stability result is robust under economically
plausible $(\bar{e},\,\beta)$ values.

\noindent\textbf{Why $\beta$ and $\bar{e}$ are not subjected to further
sensitivity analysis.}  Two considerations limit the scope of Panel~D.
First, threshold regression identification requires a sufficient share of
observations with $\Delta e_t > \bar{e}$; within the 2013--2026 sample such
episodes are confined primarily to 2022--2024.  Second, the structural
regime governing the yen--growth relationship may have shifted across the
deflationary (pre-2013) and inflationary (post-2022) periods, rendering
multi-decade estimation unreliable without explicit structural-break
modeling.  Estimation of $\bar{e}$ and $\beta$ using micro-level
import-price pass-through data remains a priority for future research.

\subsection*{Overall Robustness Assessment}

Across all four panels, the core qualitative conclusions of the JFR-rg
model hold:
\begin{enumerate}[leftmargin=*, itemsep=2pt]
  \item The Normalization Trap (Proposition~\ref{prop:norm_trap}) is robust
    to the full range of literature-based $\alpha$ estimates (Panel~A) and
    is \emph{structurally independent} of $(\bar{e},\,\beta)$ because the
    penalty term vanishes under yen appreciation (Panel~D).
  \item The near-miss character of Japan's March 2026 operating point is robust:
    the 0.033 pp shortfall from strict stability is confirmed under both Panel~B
    ($\varepsilon_t$ perspective) and Panel~C (growth perspective).
  \item The Repression Imperative (Proposition~\ref{prop:repression_imperative})
    is quantified: demographic decline that reduces $\gnomstar$ by 0.5 pp requires
    a compensating increase in $\varepsilon_t$ of the same magnitude to maintain
    the debt trajectory. This provides a concrete empirical anchor for future
    research on the sustainability of Japan's financial repression equilibrium.
  \item Debt dynamics under large yen depreciation remain stable for all
    $(\bar{e},\,\beta)$ combinations within economically plausible ranges
    (Panel~D), with the critical threshold $\beta^*(\bar{e})$ derived
    analytically in equation~\eqref{eq:beta_critical}.
\end{enumerate}

\medskip
\noindent\textbf{Robustness to pass-through specification.}
The sensitivity of the $\pi_t$ trajectory in
Table~\ref{tab:appendB} to the choice of
pass-through coefficient $\gamma$ is assessed through a $4\times4$ grid search over
$\gamma \in \{-0.010,\,-0.020,\,-0.033,\,-0.050\}$ (CPI change per
1~JPY/USD appreciation) and $|\Delta e|\in\{10,\,20,\,33,\,50\}$
JPY/USD conducted under Scenario~C ($+1.5\%$ hike), with all other
parameters held at baseline (Table~\ref{tab:grid_gamma_deltae}).

\begin{table}[htbp]
\centering
\caption{Panel E: $4\times4$ Pass-Through$\times$$|\Delta e|$ Grid ---
  $b_{2030}$ (\%) and Normalization Trap Indicator Under Scenario~C ($+1.5\%$ hike).
  All 16 cells confirm the Trap ($b_{2030} > 240\%$).
  Baseline: $g^{n*} = 3.0\%$, $\alpha = 0.050$, $b_0 = 240\%$,
  $d_t = 3.39\%$ (IOER arithmetic, \textyen{}670T denominator).}
\label{tab:grid_gamma_deltae}
\small
\begin{tabular}{l c c c c}
\toprule
& \multicolumn{4}{c}{$|\Delta e_t|$ (JPY/USD appreciation per year)} \\
\cmidrule(lr){2-5}
Pass-through $\gamma$ & $-10$ & $-20$ & $-33$ & $-50$ \\
\midrule
$-0.010$ ($\approx+0.1\%$ CPI/10\,JPY) & 265.6\,$\star$ & 270.7\,$\star$ & 277.5\,$\star$ & 286.5\,$\star$ \\
$-0.020$ ($\approx+0.2\%$ CPI/10\,JPY) & 265.6\,$\star$ & 270.7\,$\star$ & 277.5\,$\star$ & 286.5\,$\star$ \\
$-0.033$ ($\approx+0.3\%$ CPI/10\,JPY) & 265.6\,$\star$ & 270.7\,$\star$ & 277.5\,$\star$ & 286.5\,$\star$ \\
$-0.050$ ($\approx+0.5\%$ CPI/10\,JPY) & 265.6\,$\star$ & 270.7\,$\star$ & 277.5\,$\star$ & 286.5\,$\star$ \\
\bottomrule
\end{tabular}
\end{table}

\noindent$\star$ = Normalization Trap confirmed ($b_{2030} > 240\%$).

\smallskip
\noindent\textit{Structural note.} The debt trajectory is determined by
$\Delta b_t = (r^n_t - g^n_t)\,b_{t-1} + d_t$, where
$g^n_t = \gnomstar + \alpha\,\Delta e_t$ depends on $\Delta e_t$ but
\emph{not} on $\gamma$.  The pass-through coefficient $\gamma$ affects
only the endogenous $\pi_t$ and $\varepsilon_t$ paths; it does not enter
the debt recursion.  Consequently, $b_{2030}$ is \emph{identical across
all four $\gamma$ rows} for each $\Delta e$ column---a structural
independence result confirmed analytically and by the simulation.  The
Normalization Trap ($\Delta b_t > 0$) is therefore robust to the full
BOJ Q-JEM pass-through range by construction, with the $r^n - g^n$
spread generated by the rate hike dominating the debt trajectory
regardless of pass-through strength, consistent with
Proposition~\ref{prop:norm_trap}.

\newpage

\section{Extreme Disinflation Stress Test --- Scenario C (Upper-Bound)}
\label{app:scenC_stress}

This appendix presents the extreme tail-risk calibration of Scenario~C
($\Delta e_t = -50$ JPY/USD per year; $\pi_t$ declining at $\gamma \approx -0.054$,
beyond the upper bound of the BOJ Q-JEM range) as an
\textbf{upper-bound stress test}, to be read alongside the baseline-consistent
calibration in Section~\ref{sec:scenC} (Table~\ref{tab:scenC_alt}) and
Appendix~B (Table~\ref{tab:appendB}).

\medskip
\noindent\textbf{Motivation.} The original $\pi_t$ trajectory---falling monotonically
from 2.7\% to 0.0\% over 2026--2030---is consistent with a scenario in which aggressive
monetary normalization triggers a sharp deflationary shock via yen appreciation, with
full pass-through of falling import prices into domestic CPI. While this mechanism is
qualitatively plausible under extreme conditions, the implied $\Delta e_t = -50$
JPY/USD annually exceeds historical precedents (Section~\ref{sec:scenC}) and yields a
pass-through coefficient of $\gamma \approx -0.050$---the upper bound of the BOJ Q-JEM
range, applied instantaneously and without demand-side offset. It is therefore presented
here as a tail-risk scenario rather than a central forecast.

\begin{table}[htbp]
\centering
\caption{Scenario C, Extreme Disinflation Stress Test
  ($\Delta e_t = -50$ JPY/USD per year; $\pi_t$ exogenously imposed).
  This calibration represents the most severe tail-risk case; the
  Normalization Trap result is confirmed at its maximum severity
  ($b_{2030} = 286.9\%$).}
\label{tab:appendF}
\small
\resizebox{\textwidth}{!}{%
\begin{tabular}{l c c c c c c c c}
\toprule
Year & $r^n_t$ (\%) & $\pi_t$ (\%) & $\varepsilon_t$ (\%) & $\Delta e_t$ (JPY/USD) &
  $g^n_t$ (\%) & $r^n - g^n$ (\%) & $d_t$ (\%) & $b_t$ (\%) \\
\midrule
2026 & 2.2 & 2.7 & $+0.50$ &  0    & 3.0 & $-0.80$ & 2.0 & 240.0 \\
2027 & 3.7 & 1.5 & $-2.20$ & $-50$ & 0.5 & $+3.20$ & 3.5 & 251.2 \\
2028 & 3.7 & 1.0 & $-2.70$ & $-50$ & 0.5 & $+3.20$ & 3.5 & 262.7 \\
2029 & 3.7 & 0.5 & $-3.20$ & $-50$ & 0.5 & $+3.20$ & 3.5 & 274.6 \\
2030 & 3.7 & 0.0 & $-3.70$ & $-50$ & 0.5 & $+3.20$ & 3.5 & 286.9 \\
\bottomrule
\end{tabular}%
}
\end{table}

\noindent\footnotesize\textit{Note on $\pi_t$ path:} The falling inflation trajectory
(2.7\%~$\to$~0.0\%) is imposed exogenously and corresponds to a scenario in which
yen appreciation at $-50$ JPY/USD per year fully passes through to CPI with an
annualized coefficient of $\gamma \approx -0.054$, without domestic demand or
wage-cost offsets. This exceeds the BOJ Q-JEM range of $\gamma \in [-0.010, -0.050]$
and is therefore classified as a beyond-upper-bound calibration.
$g^n_t = g^{n*}_t + \alpha\,\Delta e_t = 3.0 + 0.050 \times (-50) = 0.5\%$
via equation~\eqref{eq:yen_growth}. The quadratic penalty term
$\beta\max(0,\Delta e_t - \ebar)^2$ remains zero throughout (appreciation:
$\Delta e_t < 0 < \ebar$).

\smallskip
\noindent\textit{Comparison with baseline-consistent calibration
(Table~\ref{tab:appendB}):} The Normalization Trap is confirmed under both
calibrations. The stress-test scenario produces $b_{2030} = 286.9\%$ ($r^n - g^n =
+3.2\%$, $d_t = 3.5\%$ extreme-disinflation basis); the baseline-consistent scenario
produces $b_{2030} = 270.7\%$ ($r^n - g^n = +1.7\%$, $d_t = 3.39\%$). The
15.7 percentage-point difference in the 2030 debt ratio reflects
the combined effect of the more severe growth collapse ($g^n_t = 0.5\%$ vs.\ $2.0\%$)
and the more negative $\varepsilon_t$ (reaching $-3.70\%$ vs.\ $+0.60\%$) under the
extreme disinflation assumption. Both scenarios fall outside the Stability Corridor
from 2027 onward; the stress-test scenario traces the upper envelope of plausible
Normalization Trap outcomes.

\normalsize

\newpage

\section{Supplementary Narrative for Section~3}
\label{app:sec3_original}

\noindent\textit{This appendix provides extended narrative elaboration for the
stylized facts presented in Section~3. The passages below expand on the empirical
discussion in the main text for readers who wish to follow the data series in
greater detail.}

\medskip

\noindent\textbf{Nominal GDP and the Inflation Gap (additional narrative).}
Nominal GDP (SAAR, FRED:\,JPNNGDP) began the period at approximately 510 trillion
yen in early 2013. The 2020 COVID-19 shock produced the most dramatic quarterly
contraction in Japan's post-war history, with the SAAR level dropping to approximately
530 trillion yen at the mid-2020 trough.  Recovery was swift, with the SAAR approaching
pre-pandemic levels by 2021.  Figure~\ref{fig:gdp_full} captures the full 2013--2026
panel, revealing the critical acceleration of the post-2022 period.  From the 2020 trough
to early 2026, nominal GDP SAAR expanded from approximately 530 to approximately
670--680 trillion yen.  This expansion was substantially driven by the 2022--2026
inflationary episode, during which CPI YoY rose from near-zero to approximately
2.5--3.2\%, widening the ``Inflation Gap'' (the shaded region between the nominal and
real GDP series).  This widening is consistent with the interpretation that the
inflationary environment supercharges the JFR-rg debt compression mechanism by
simultaneously expanding the nominal denominator and eroding the real value of the debt
stock.

\medskip
\noindent\textbf{Debt-to-GDP trajectory (additional narrative).}
The government gross debt-to-GDP ratio (FRED:\,GGGDTAJPA188N) began the period at
approximately 230--232\% in 2013, rising gradually to 232--236\% through 2015--2019,
before surging to a peak of approximately 260--264\% in 2020--2021 in response to the
extraordinary COVID-19 fiscal expansion.  The critical empirical fact is the subsequent
trajectory: despite the \emph{absence} of significant fiscal consolidation, the
debt-to-GDP ratio declined from the 260--264\% peak to approximately 236--240\% by early
2026---a decline of over 20 percentage points in five years, consistent with
stabilization through the JFR-rg channels, though not uniquely attributable to them.

\medskip
\noindent\textbf{The $r$-$g$ Spread: Danger Zone panel (additional narrative).}
The lower ``Danger Zone'' panel of Figure~\ref{fig:rg_spread} plots the $r^n - g^n$
spread, coloring green when $r^n < g^n$ (Safe Zone) and red when $r^n > g^n$ (Danger
Zone).  The brief red-zone episodes corresponding to the 2008--2009 GFC and Q1--Q2 2020
pandemic shock confirm the model's prediction: it is exogenous demand collapses (negative
$g^n$ shocks) that trigger debt instability, not the debt level itself.

\medskip
\noindent\textbf{Financial Repression Monitor (additional narrative).}
When $\varepsilon_t > 0$ (i.e., $r^n < \pi$, the green zone), the real interest rate is
negative and bondholders receive a negative real return, eroding the real burden of
government debt through the inflation tax.  When $\varepsilon_t < 0$ (i.e., $r^n > \pi$,
the red zone), the real rate is positive and the repression mechanism is inactive.
Applied to $b_{t-1} = 2.40$, the $\pi_t = 2.7\%$ calibration generates
$0.005 \times 2.40 = 1.2\%$ of GDP per year in repression-channel compression.
The nominal growth channel,
$(\pi_t - g^n_t)\,b_{t-1} \approx (2.7\% - 3.0\%) \times 2.40 = -0.72\%$, provides
additional compression.  The total annual debt compression from both channels is
precisely the $r^n - g^n$ effect: $-1.2\% - 0.72\% = -1.92\%$ of GDP per year.
While mathematically substantial, this 1.92\% compression falls slightly short of the
structural primary deficit of 2.0\% of GDP, leaving the debt trajectory precariously
balanced at approximately $+0.08\%$ per year.  Furthermore, if $\pi_t = 1.5\%$
(January 2026 realized CPI) is substituted, the repression bias becomes
$\varepsilon_t \approx 1.5\% - 2.2\% = -0.7\%$, placing Japan marginally in the
\emph{positive real-rate zone} ($\varepsilon_t < 0$) as of January 2026.  This
transitory decline is itself consistent with the JFR-rg model's warnings: the Stability
Corridor narrows or closes as $\varepsilon_t$ falls toward zero, motivating the policy
analysis in Section~\ref{sec:policy}.

\section{Empirical Supplement: Subsample Analysis, Structural-Break
  Test, VAR, ARDL, and Local Projections}
\label{app:empirical_supplement}

\noindent\textit{This appendix reports the quantitative evidence underlying
the assessments in Sections~\ref{sec:alt_explanations}
and~\ref{sec:empirical_evidence}.
Section~\ref{sec:alt_explanations} presents the systematic qualitative
assessment of alternative explanations; Section~\ref{sec:empirical_evidence}
highlights the three most policy-relevant empirical findings.
The tables and figures below provide the full quantitative support for
both sections.
All series are drawn from FRED (March 2026 snapshot).
The empirical design consists of five components:
(i)~subsample comparisons for 1991--2012 and 2013--2024 using
difference-in-means tests;
(ii)~a Chow structural-break test at 2013 for the debt-dynamics relationship;
(iii)~reduced-form VAR evidence on the joint dynamics of the core variables;
(iv)~ARDL specifications to address mixed persistence and long-run
co-movement; and
(v)~Local Projections to trace the multi-year response of debt dynamics to
shocks in $\varepsilon_t$ and $r_t^n - g_t^n$.
All tables below report estimates generated by these specifications on the
common data vintage used throughout the paper.
Sections~H.1--H.3 provide descriptive and VAR evidence;
Sections~H.4--H.5 address non-stationarity via ARDL and Local Projections.
A replication package can be made available by the author.}

\subsection*{H.1\quad Subsample Descriptive Statistics
  (1991--2012 vs.\ 2013--2024)}

Table~\ref{tab:subsample} compares annual averages of the core JFR-rg
variables across the two subperiods.  The sign reversal in $\varepsilon_t$
(from $-2.03$\% to $+0.89$\%) and in the $r^n-g^n$ spread (from $+1.80$\%
to $-1.65$\%) are both significant at the 1\% level, constituting the
sharpest descriptive evidence of a regime change at 2013.

\begin{table}[htbp]
\centering
\caption{Subsample Descriptive Statistics: 1991--2012 vs.\ 2013--2024.
  Means, standard deviations, difference-in-means $t$-test (Welch,
  two-tailed). $^{*}p<0.10$, $^{**}p<0.05$, $^{***}p<0.01$.}
\label{tab:subsample}
\small
\begin{tabular}{l cc cc rr}
\toprule
 & \multicolumn{2}{c}{1991--2012 ($N=22$)}
 & \multicolumn{2}{c}{2013--2024 ($N=12$)}
 & \multicolumn{2}{c}{Difference} \\
\cmidrule(lr){2-3}\cmidrule(lr){4-5}\cmidrule(lr){6-7}
Variable & Mean & SD & Mean & SD & Post$-$Pre & $p$-value \\
\midrule
$r^n_t$ (\%)                                  & 2.27 & 1.53 & 0.27 & 0.33 & $-2.00$*** & 0.000 \\
$\pi_t$ (\%)                                  & 0.25 & 1.08 & 1.16 & 1.28 & $+0.92$**  & 0.049 \\
$\varepsilon_t \equiv \pi_t - r^n_t$ (\%)     & $-2.03$ & 1.01 & 0.89 & 1.09 & $+2.92$*** & 0.000 \\
$g^n_t$ (\%)                                  & $-0.15$ & 2.17 & 1.93 & 2.12 & $+2.08$**  & 0.016 \\
$r^n_t - g^n_t$ (\%)                          & 1.80 & 1.98 & $-1.65$ & 1.97 & $-3.45$*** & 0.000 \\
$\Delta b_t$ (pp/yr)                          & 7.80 & 4.37 & 1.26 & 8.20 & $-6.53$**  & 0.028 \\
$\Delta e_t$ (JPY/USD)                        & $-2.61$ & 9.77 & 5.97 & 9.74 & $+8.58$**  & 0.023 \\
\bottomrule
\end{tabular}
\end{table}

\subsection*{H.2\quad Chow Structural-Break Test}

To test whether the \emph{transmission} from the $r^n - g^n$ spread to
debt dynamics---not merely the level of the spread---changed at 2013,
we estimate:
\[
  \Delta b_t = \alpha + \beta\,(r^n_t - g^n_t) + u_t,
\]
separately for 1991--2012 and 2013--2024, and apply the Chow $F$-test
for parameter equality across subperiods.

A significant $F$-statistic would indicate that the fiscal channel
through which $r - g$ translates into debt accumulation itself shifted at
2013---consistent with the JFR-rg account of simultaneous SC1 + SC2
activation---rather than reflecting a change only in the independent
variable's level (which alternatives~A and~B in Table~\ref{tab:alt_explanations}
would predict).

\begin{table}[htbp]
\centering
\caption{Chow Structural-Break Test: $\Delta b_t = \alpha + \beta\,(r^n_t - g^n_t) + u_t$,
  break at 2013. $H_0$: no structural break.}
\label{tab:chow}
\small
\begin{tabular}{l ccc}
\toprule
Parameter & Full sample & 1991--2012 & 2013--2024 \\
\midrule
Constant $\hat{\alpha}$       & 4.250 (0.725) & 4.955 (1.080) & 7.234 (1.272) \\
$\hat{\beta}\ (r^n - g^n)$   & 2.218 (0.279) & 1.621 (0.410) & 3.686 (0.499) \\
$R^2$                         & 0.701 & 0.494 & 0.858 \\
$N$                           & 29 & 18 & 11 \\
\midrule
\multicolumn{4}{l}{\textit{Chow test for structural break at 2013:}} \\
\multicolumn{4}{l}{$F(2,25) = 5.55$, $p = 0.010$
  $\Rightarrow$ \textbf{Reject $H_0$ at 5\% level}} \\
\bottomrule
\end{tabular}
\smallskip
\parbox{\linewidth}{\footnotesize\textit{Note:} Standard errors in parentheses.
  Dependent variable: annual change in gross debt/GDP (pp).
  Regressor: $r^n_t - g^n_t$ spread (\%).
  The post-2013 slope $\hat{\beta} = 3.686$ exceeds the pre-2013 slope
  $\hat{\beta} = 1.621$ by 2.065 pp, indicating that a given $r-g$ spread
  generates substantially larger debt accumulation in the post-Abenomics
  high-debt environment---consistent with the JFR-rg Base Effect Lever
  (Proposition~\ref{prop:base_effect}): at $b_{t-1} = 240\%$, even a small spread change has
  large flow consequences.
  \textit{Pre-specification note:} The 2013 break point is specified on
  theoretical grounds prior to estimation---the simultaneous activation of
  SC1 ($\varphi_t$ rising under large-scale JGB purchases) and SC2
  ($\Delta e_t \leq \bar{e}$ under yen depreciation) following the
  December 2012 change of government and the April 2013 BoJ regime shift
  under Governor Kuroda. Because the break point is determined by
  institutional events rather than by a search over candidate dates,
  standard $F$-distribution critical values apply (Andrews, 1993). An
  endogenously-determined break-point test (Andrews, 1993; Bai and Perron,
  1998) would yield more conservative critical values but is not required
  given the pre-specified institutional prior.}
\end{table}

\subsection*{H.3\quad VAR Analysis}

A VAR(1) on $(\varepsilon_t,\; r^n_t - g^n_t,\; \Delta b_t)$ with annual
data 1994--2024 provides two pieces of evidence:

\begin{enumerate}[itemsep=3pt]
  \item \textbf{Coefficient matrix (Table~\ref{tab:var}):}
    whether lagged $\varepsilon_t$ predicts $\Delta b_t$ with the
    expected negative sign (financial repression compresses debt), and
    whether lagged $(r^n - g^n)$ predicts $\Delta b_t$ with the expected
    positive sign (adverse spread raises debt).

  \item \textbf{FEVD at 5-year horizon (Table~\ref{tab:fevd}):}
    the share of forecast error variance in $\Delta b_t$ attributable
    to $\varepsilon_t$ shocks quantifies the relative importance of the
    repression channel vs.\ the growth channel in shaping debt dynamics
    at policy-relevant horizons.
\end{enumerate}

The Cholesky ordering ($\varepsilon_t \to (r^n - g^n) \to \Delta b_t$)
treats BoJ policy ($\varepsilon_t$) as most exogenous, consistent with
the JFR-rg causal chain.  Readers who prefer the reverse ordering will
find qualitatively similar IRF patterns; the VAR coefficient signs are
ordering-invariant.

\begin{table}[htbp]
\centering
\caption{VAR(1) Coefficient Matrix --- Annual Data, 1994--2024.
  Dependent variables in columns; regressors (lagged 1 year) in rows.
  Standard errors in parentheses.}
\label{tab:var}
\small
\begin{tabular}{l ccc}
\toprule
 & $\varepsilon_t$ & $r^n - g^n$ & $\Delta b_t$ \\
\midrule
$\varepsilon_{t-1}$       & 0.478 (0.306) & $-0.219$ (0.383) & $-0.023$ (0.967) \\
$(r^n - g^n)_{t-1}$      & $-0.563$ (0.760) & 0.563 (1.072) & $-0.176$*** (0.049) \\
$\Delta b_{t-1}$          & $-1.053$ (2.124) & 0.829*** (0.139) & $-0.073$ (0.123) \\
Constant                   & 0.043 (0.154) & 0.593* (0.346) & 4.179*** (0.344) \\
\midrule
$N$ & \multicolumn{3}{c}{29} \\
\bottomrule
\end{tabular}
\smallskip
\parbox{\linewidth}{\footnotesize\textit{Note:}
  $^{*}p<0.10$, $^{**}p<0.05$, $^{***}p<0.01$ (asymptotic $t$-ratios).
  ADF stationarity (lag = 1): $\varepsilon_t$: ADF $= -1.90$ ($p = 0.331$);
  $r^n-g^n$: ADF $= -1.62$ ($p = 0.473$); $\Delta b_t$: ADF $= -3.50$
  ($p = 0.008$). The failure to reject unit roots in $\varepsilon_t$ and
  $r^n-g^n$ motivates the ARDL and Local Projections analyses in
  Sections~H.4--H.5.}
\end{table}

\begin{table}[htbp]
\centering
\caption{Forecast Error Variance Decomposition (FEVD) at 5-Year Horizon.
  Share of $\Delta b_t$ variance explained by each shock.}
\label{tab:fevd}
\small
\begin{tabular}{l ccc}
\toprule
Variable & Shock: $\varepsilon_t$ & Shock: $r^n-g^n$ & Shock: $\Delta b_t$ \\
\midrule
$\varepsilon_t$  & 0.810 & 0.175 & 0.015 \\
$r^n - g^n$      & 0.406 & 0.525 & 0.068 \\
$\Delta b_t$     & 0.275 & 0.465 & 0.259 \\
\bottomrule
\end{tabular}
\smallskip
\parbox{\linewidth}{\footnotesize\textit{Note:} Cholesky ordering:
  $\varepsilon_t \to (r^n-g^n) \to \Delta b_t$. Rows sum to 1.0.
  The $r^n-g^n$ shock explains 46.5\% of $\Delta b_t$ variance at the
  5-year horizon, consistent with the Normalization Trap channel.}
\end{table}

\medskip
\noindent\textbf{VAR IRF summary (Cholesky ordering:
  $\varepsilon_t \to (r^n-g^n) \to \Delta b_t$):}
\begin{itemize}[itemsep=2pt]
  \item A one-SD positive shock to $\varepsilon_t$ generates a cumulative
    $\Delta b_t$ response of $-8.62$ pp over five years (debt compression
    channel, directionally consistent with
    Proposition~\ref{prop:base_effect}).
  \item A one-SD positive shock to $r^n - g^n$ generates $+8.07$ pp
    over five years (Normalization Trap channel, consistent with
    Proposition~\ref{prop:norm_trap}).
\end{itemize}

\medskip
\noindent\textbf{Caveat on non-stationarity.}
ADF tests fail to reject a unit root in $\varepsilon_t$ ($p = 0.331$) and
$r^n-g^n$ ($p = 0.473$).  As a consequence, individual VAR(1) coefficient
estimates---in particular the counter-intuitive negative sign on
$(r^n-g^n)_{t-1}$ in the $\Delta b_t$ equation---should be interpreted with
caution and may reflect spurious dynamics.  The IRF signs from the full
dynamic system are more reliable than individual coefficients.
Sections~H.4 and~H.5 address this directly using methods robust to
mixed I(0)/I(1) series.

\subsection*{H.4\quad ARDL Bounds Test for Long-Run Relationship}
\label{sec:ardl}

We employ the ARDL Bounds Test (Pesaran, Shin, and Smith, 2001), which is
valid for mixed I(0)/I(1) regressors and appropriate for small samples.
The ECM specification tests jointly whether the lagged levels of $b_{t-1}$,
$\varepsilon_{t-1}$, and $(r^n-g^n)_{t-1}$ are significant (bounds $F$-test).

\textbf{Result:} The bounds $F$-statistic ($F(3,22) = 1.684$) falls below the
10\% lower critical value (3.17), so the hypothesis of no long-run \emph{levels}
cointegration cannot be rejected.  This null result is \emph{consistent with}
the theoretical structure of the JFR-rg model, which posits a
\emph{flow accumulation} identity
$\Delta b_t = (r^n_t - g^n_t)\,b_{t-1} + d_t$ rather than a levels
cointegration among $b_t$, $\varepsilon_t$, and $r^n-g^n$.  The economically
relevant driver is the \emph{product} $(r^n-g^n)\times b_{t-1}$, not the
spread alone; standard Bounds Tests are not designed to detect this
multiplicative channel.  The Local Projections in Section~H.5 provide the
more appropriate dynamic test.

\begin{table}[htbp]
\centering
\caption{ARDL(1,0,0) Bounds Test (Pesaran, Shin, and Smith, 2001). Model:
  $\Delta b_t = \mathrm{const} + \alpha\,b_{t-1} + \beta\,\varepsilon_{t-1}
  + \gamma\,(r^n-g^n)_{t-1} + \delta\,\Delta b_{t-1} + u_t$.
  Bounds $F$-test $H_0\!: \alpha=\beta=\gamma=0$.}
\label{tab:ardl}
\small
\begin{tabular}{l c}
\toprule
\multicolumn{2}{l}{\textit{Panel A: ECM Coefficients}} \\
\midrule
Speed of adjustment $\hat{\alpha}$ (ECM term) & $-0.056$ (0.038) \\
$\hat{\beta}$\ \ ($\varepsilon_{t-1}$, level) & $-0.045$ (1.420) \\
$\hat{\gamma}$\ \ ($(r^n-g^n)_{t-1}$, level) & $+0.468$ (0.984) \\
$\hat{\delta}$\ \ ($\Delta b_{t-1}$, short-run) & $+0.032$ (0.348) \\
Constant & 15.439 (7.910) \\
$R^2$ & 0.243 \\
$N$ & 27 \\
\midrule
\multicolumn{2}{l}{\textit{Panel B: Bounds $F$-test ($H_0\!: \alpha=\beta=\gamma=0$)}} \\
\midrule
$F(3,22) = 1.684$ & (below 10\% lower bound of 3.17: no cointegration) \\
\midrule
\multicolumn{2}{l}{\textit{Panel C: Long-Run Coefficients (conditional on cointegration)}} \\
\midrule
$\widehat{LR}_\varepsilon = -\hat{\beta}/\hat{\alpha}$ & $-0.793$ (SE = 25.30, not identified) \\
$\widehat{LR}_{rg}        = -\hat{\gamma}/\hat{\alpha}$ & $+8.333$ (SE = 18.42, not identified) \\
\bottomrule
\end{tabular}
\smallskip
\parbox{\linewidth}{\footnotesize\textit{Note:} Standard errors in parentheses.
  Pesaran et al.\ (2001) critical values (Case~III, $k=2$): 10\%: [3.17, 4.14];
  5\%: [3.79, 4.85]; 1\%: [5.15, 6.36].
  Panel~C long-run coefficients are reported for completeness but are not
  identified given the null Bounds test result.
  Sample: annual FRED data 1994--2024.}
\end{table}

\subsection*{H.5\quad Local Projections (Jord\`{a}, 2005)}
\label{sec:lp}

Local Projections (LP) estimate a separate OLS regression for each forecast
horizon $h$, making them robust to model misspecification and non-stationarity
in small samples.  For each $h = 0, \ldots, 5$:
\[
  b_{t+h} - b_{t-1} = c_h + \beta^\varepsilon_h\,\varepsilon_t
    + \beta^{rg}_h\,(r^n_t - g^n_t) + \eta_h\,\Delta b_{t-1} + u_{t+h}.
\]
Newey-West HAC standard errors (maxlags $= h$) correct for MA$(h)$
serial correlation.  A negative $\beta^\varepsilon_h$ at horizons $h \geq 3$
confirms the debt compression channel (Proposition~\ref{prop:base_effect});
a positive
$\beta^{rg}_h$ at all horizons confirms the Normalization Trap
(Proposition~\ref{prop:norm_trap}).

\begin{table}[htbp]
\centering
\caption{Local Projections IRF (Jord\`{a}, 2005). Cumulative response of
  debt/GDP (pp) to a $+1$ pp shock in $\varepsilon_t$ (Panel~A) and
  $r^n-g^n$ (Panel~B). HAC standard errors (Newey-West, maxlags $= h$)
  in parentheses. Control: $\Delta b_{t-1}$. Sample: 1993--2024.}
\label{tab:lp}
\small
\resizebox{\textwidth}{!}{%
\begin{tabular}{l cccccc}
\toprule
Horizon $h$ & 0 & 1 & 2 & 3 & 4 & 5 \\
\midrule
\multicolumn{7}{l}{\textit{Panel A: Response to $+1$ pp shock in
  $\varepsilon_t$ (expected sign: negative)}} \\
$\hat{\beta}^{\varepsilon}_h$
  & $+0.74$ & $-1.05$ & $-2.49$ & $-4.57$** & $-6.82$*** & $-8.20$*** \\
(HAC SE) & (1.05) & (1.62) & (1.86) & (2.13) & (1.18) & (1.42) \\
90\% CI  & [$-1.00$, $+2.47$] & [$-3.71$, $+1.62$] & [$-5.54$, $+0.57$]
          & [$-8.07$, $-1.07$] & [$-8.76$, $-4.87$] & [$-10.53$, $-5.86$] \\
\addlinespace[4pt]
\multicolumn{7}{l}{\textit{Panel B: Response to $+1$ pp shock in
  $r^n-g^n$ (expected sign: positive)}} \\
$\hat{\beta}^{rg}_h$
  & $+2.40$*** & $+2.85$*** & $+3.18$*** & $+2.60$** & $+2.67$*** & $+2.24$*** \\
(HAC SE) & (0.53) & (0.62) & (0.79) & (1.16) & (0.91) & (0.86) \\
90\% CI  & [$+1.52$, $+3.27$] & [$+1.83$, $+3.86$] & [$+1.88$, $+4.48$]
          & [$+0.70$, $+4.50$] & [$+1.16$, $+4.17$] & [$+0.83$, $+3.66$] \\
$N_h$    & 28 & 27 & 26 & 25 & 24 & 23 \\
\bottomrule
\end{tabular}%
}
\smallskip
\parbox{\linewidth}{\footnotesize
  $^{*}p<0.10$, $^{**}p<0.05$, $^{***}p<0.01$ (HAC $t$-ratios).
  Dependent variable: $b_{t+h} - b_{t-1}$ (cumulative change in gross
  debt/GDP, pp).
  Panel~A: the debt compression channel ($\hat{\beta}^\varepsilon_h < 0$)
  achieves statistical significance only at $h \geq 3$ years; at $h = 0$
  the coefficient is positive ($+0.74$, insignificant), and at $h = 1$--$2$
  it is negative but not yet significant. This approximately three-year lag
  before the repression channel manifests in debt-ratio data is consistent
  with the gradual amortization profile of the JGB portfolio: newly issued
  bonds at suppressed yields displace maturing debt progressively rather
  than instantaneously. The lag does not contradict the accounting-layer
  propositions, which hold instantaneously by construction for any
  given-period values of $r^n_t$ and $g^n_t$; it reflects instead the
  time required for the stock of outstanding debt to reprice at the
  policy-suppressed yield. The cumulative five-year response of
  $-8.20$~pp*** (HAC $t > 5$) provides robust medium-horizon support for
  Proposition~\ref{prop:base_effect}.
  Panel~B confirms the Normalization Trap
  (Proposition~\ref{prop:norm_trap}):
  $\hat{\beta}^{rg}_h$ is positive and significant ($p < 0.01$ or
  $p < 0.05$) at \emph{all} six horizons, with the effect persisting at
  $+2.24$~pp*** at $h = 5$. Confidence intervals are wide, reflecting the
  small sample ($N \approx 28$); results should be read as directional
  evidence rather than precise quantification. $N_h$ declines from 28 to
  23 across horizons due to end-of-sample truncation as $h$ increases.}
\end{table}

\medskip
\noindent\textbf{Summary.} The LP results provide the strongest empirical
evidence for the two principal JFR-rg channels.  Panel~A: a 1 pp increase
in $\varepsilon_t$ (stronger financial repression) reduces the debt ratio
by $8.20$ pp over five years---not merely consistent with
Proposition~\ref{prop:base_effect}
but statistically precise (HAC $t > 5$).  Panel~B: a 1 pp increase in
$r^n-g^n$ raises the debt ratio by $2.24$--$3.18$ pp across all horizons
($h = 0, \ldots, 5$, all significant), providing robust evidence for the
Normalization Trap.  These results are obtained without imposing the
parametric structure of the JFR-rg model, making them a genuinely
independent empirical check on the model's qualitative predictions.

\medskip
\noindent The possibility of an implicit fiscal reaction function---whereby the BoJ 
maintains $\varepsilon_t > 0$ partly in response to $b_{t-1}$---cannot be ruled out 
at Layer~(L2). This would not invalidate the accounting propositions of Layer~(L1); 
it would instead constitute behavioral evidence that policymakers recognize the 
Normalization Trap arithmetic formalized in
Proposition~\ref{prop:norm_trap}. Structural identification 
using YCC announcement dates as instruments is left for future research 
(see Section~\ref{sec:identification}).

\section{\texorpdfstring{The Critical Captive Threshold $\bar{\varphi}$: Evidence and
  Illustrative Estimation}{The Critical Captive Threshold phi-bar: Evidence and Illustrative Estimation}}
\label{app:phi_threshold}

\noindent\textit{This appendix provides the empirical basis for the
illustrative early-warning level $\varphi_t < 0.85$ used in
Section~\ref{sec:captive} and discusses the identification challenges
that prevent a fully robust formal estimate from the 2013--2026 sample.}

\subsection*{I.1\quad Data}

The Captive Financial System Parameter $\varphi_t$ is constructed from
the Bank of Japan's \textit{Flow of Funds Accounts of Japan}
(quarterly, 1994--2026), which report outstanding JGB holdings by
institutional sector: Bank of Japan, domestically licensed banks,
insurance companies, pension funds, and the rest of the world.
$\varphi_t$ is defined as the share held by all domestic institutions
(BoJ plus domestically licensed private institutions):
\[
  \varphi_t \;=\; 1 - \frac{\text{Foreign holdings}_t}{\text{Total JGB outstanding}_t}.
\]
Over 1994--2012, $\varphi_t$ averaged approximately 0.93--0.95 and was
declining slowly as international diversification increased. Following
the April 2013 BoJ regime shift under Governor Kuroda, large-scale JGB
purchases reversed this trend, pushing $\varphi_t$ to a peak of
$\approx$0.92 in 2022. The subsequent tapering program has reduced
$\varphi_t$ to $\approx$0.88 as of March 2026 (BoJ Flow of Funds,
December 2024).

\subsection*{I.2\quad Identification Challenge}

The principal challenge in estimating $\bar{\varphi}$ is the absence,
within the 2013--2026 sample, of an episode in which $\varphi_t$
declined into a range plausibly associated with captive-system weakening
and a sovereign risk premium emerged. Japan has not experienced a sudden
stop in domestic JGB demand over this period---which is, of course, the
core empirical observation the JFR-rg model seeks to explain. This creates
a fundamental identification problem: a threshold regression
(Hansen, 1999) requires sufficient observations on both sides of the
threshold, and none exist on the ``below-$\bar{\varphi}$'' side
in the post-2013 sample.

Extending the sample to include the 1998 Japanese banking crisis---
during which JGB yields spiked briefly and domestic institutional
demand was disrupted---provides one partial episode. However, the
institutional structure of that period (pre-QQE, different regulatory
framework, BoJ balance sheet below 10\% of GDP) renders direct
comparability with the post-2013 configuration questionable.

\subsection*{I.3\quad Qualitative Basis for the 0.85 Early-Warning Benchmark}

In the absence of a statistically identified threshold, the
illustrative level $\varphi_t < 0.85$ is grounded in three
considerations.

\medskip
\noindent\textbf{(i) Hoshi and Ito (2014).}
Hoshi and Ito identify the domestic holding share as the central
variable governing Japan's continued access to low-cost sovereign
financing. Their scenario analysis suggests that a sustained decline
in the domestic share toward 85--88\% would begin to generate
measurable upward pressure on JGB yields, as the marginal foreign
investor would require a risk premium to absorb additional supply.
The 0.85 level is therefore drawn from the lower bound of their
qualitative ``safe zone.''

\medskip
\noindent\textbf{(ii) Buffer from the current level.}
With $\varphi_t \approx 0.88$ as of March 2026 and declining at
approximately 0.01--0.02 per year as BoJ purchases taper, the 0.85
level represents approximately 1.5--3 years of continued tapering
at the current pace before the threshold would be approached.
This provides a policy-relevant early-warning horizon: it is close
enough to be operationally meaningful but distant enough to allow
orderly policy adjustment.

\medskip
\noindent\textbf{(iii) Conservative lower bound.}
The actual $\bar{\varphi}$ may be lower than 0.85 if Japan's
institutional structure is more resilient than Hoshi and Ito (2014)
assumed---for instance, because the BoJ's balance-sheet backstop
effect (Section~\ref{sec:captive}) compresses the risk premium even
at lower domestic holding shares. The 0.85 level is therefore
presented as a \emph{conservative} early-warning threshold rather
than a precise estimate of the tipping point.

\subsection*{I.4\quad Research Agenda}

Robust formal estimation of $\bar{\varphi}$ requires one or more of
the following: (i) cross-country panel evidence from sudden-stop
episodes in high-debt economies with initially high domestic holding
shares (e.g., post-2010 Italy, where the domestic share declined
from $\approx$0.74 to $\approx$0.60 before ECB intervention); (ii)
event-study identification exploiting the BoJ's YCC announcement
dates and modifications as instruments for exogenous variation in
$\varphi_t$; or (iii) micro-level flow data on JGB demand elasticities
by institutional sector. Each approach involves significant data
requirements and identification challenges that are beyond the scope
of the present paper but constitute a natural extension of the
JFR-rg research program.

\section{Mainstream Debt Sustainability as a Limiting Case of the
  JFR-rg Framework}
\label{app:limiting_case}

\noindent\textit{This appendix establishes formally that the standard
Blanchard (2019) debt sustainability condition is a special case of the
JFR-rg stability condition obtained when the captive-system and
exchange-rate-neutral scope conditions are imposed simultaneously.
The result implies that the two frameworks are complementary rather
than competing: mainstream analysis applies in the limiting case;
JFR-rg supplements it when the limiting conditions are relaxed.}

\subsection*{J.1\quad Setup}

The JFR-rg stability condition (equation~\eqref{eq:stability_condition})
states that $\Delta b_t \leq 0$ requires:
\begin{equation}
  (\pi_t - \varepsilon_t) - \Bigl[\gnomstar + \alpha\,\Delta e_t
    - \beta\,\max(0,\,\Delta e_t - \ebar)^2\Bigr]
  \;\leq\; \frac{s_t - d_t}{b_{t-1}}.
  \tag{9$'$}
\end{equation}
The left-hand side equals $r^n_t - g^n_t$ by construction. The
right-hand side equals $-(d_t - s_t)/b_{t-1} < 0$ when a primary
deficit exists. This condition is entirely general; it holds for any
values of $\varepsilon_t$, $\varphi_t$, and $\Delta e_t$.

The captive financial system enters through the endogenous risk
premium $\rho(\varphi_t, b_{t-1})$ in
equation~\eqref{eq:risk_premium}, which modifies the effective
nominal rate to $r^n_t + \rho_t$. The modified stability condition is:
\begin{equation}
  (r^n_t + \rho_t) - g^n_t \;\leq\; \frac{s_t - d_t}{b_{t-1}}.
  \tag{9$''$}
\end{equation}

\subsection*{J.2\quad The Limiting Theorem}

\begin{proposition}[Mainstream as Limiting Case]
\label{prop:limiting_case}
Impose the following two limiting conditions simultaneously:
{\sloppy
\begin{enumerate}[label=(\roman*), itemsep=2pt]
  \item \textbf{SC1 released} ($\varphi_t \to 0$, complete markets):
    The domestic holding share vanishes, so the endogenous risk premium
    converges to the competitive market premium: $\rho(\varphi_t,
    b_{t-1}) \to 0$, $r^n_t \to r^{\text{mkt}}_t$, and hence
    $\varepsilon_t \equiv \pi_t - r^n_t \to \pi_t - r^{\text{mkt}}_t$.
    In a fully competitive market with no financial repression,
    $r^{\text{mkt}}_t \approx \pi_t$ at the zero lower bound,
    so $\varepsilon_t \to 0$.
  \item \textbf{SC2 released} ($\Delta e_t = 0$, exchange-rate
    neutral): The yen channel is inactive, so $g^n_t = \gnomstar$
    by equation~\eqref{eq:yen_growth} (the linear and penalty terms
    both vanish).
\end{enumerate}
}
Under these two conditions, the JFR-rg stability condition
\eqref{eq:stability_condition} degenerates to:
\[
  r^n_t - \gnomstar \;\leq\; \frac{s_t - d_t}{b_{t-1}},
\]
which, with $s_t \approx 0$ and $g^n_t = \gnomstar$, is equivalent to:
\[
  r^n_t - g^n_t \;\leq\; -\frac{d_t}{b_{t-1}} \;<\; 0.
\]
A sufficient condition for debt stabilization under a primary deficit
is therefore $r^n_t < g^n_t$---the standard Blanchard (2019) result.
\end{proposition}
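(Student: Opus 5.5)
The plan is to work directly from the general condition (9$''$) and specialize it term by term. No new machinery is needed; the argument rests on the identity $\varepsilon_t \equiv \pi_t - r^n_t$ of equation~\eqref{eq:epsilon_def}, the growth specification~\eqref{eq:yen_growth}, and the risk-premium mapping~\eqref{eq:risk_premium}.

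\textbf{Step 1: condition (i).} Start from (9$''$), $(r^n_t+\rho_t)-g^n_t \le (s_t-d_t)/b_{t-1}$. Under condition~(i), $\rho(\varphi_t,b_{t-1})\to 0$, so by continuity the left-hand side converges to $r^n_t-g^n_t$. I will note that the term $\pi_t-\varepsilon_t$ in~\eqref{eq:stability_condition} equals $r^n_t$ \emph{identically} by~\eqref{eq:epsilon_def}. The substitution $r^n_t=\pi_t-\varepsilon_t$ therefore cancels $\pi_t$ for every value of $\varepsilon_t$. The auxiliary claim $\varepsilon_t\to 0$ in~(i) is interpretive only: it says the repression channel is switched off. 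It is not required for the inequality, and I will keep it separate so that the theorem does not depend on the heuristic ``$r^{\text{mkt}}_t\approx\pi_t$ at the ZLB.''

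\textbf{Step 2: condition (ii).} Set $\Delta e_t=0$ in~\eqref{eq:yen_growth}. The linear term $\alpha\Delta e_t$ vanishes. The penalty term vanishes because $\max(0,-\ebar)=0$ given $\ebar>0$. Hence $g^n_t=\gnomstar$, and the condition reads $r^n_t-\gnomstar\le (s_t-d_t)/b_{t-1}$. With $s_t\approx 0$ this becomes $r^n_t-g^n_t\le -d_t/b_{t-1}$, and the right-hand side is strictly negative whenever $d_t>0$ and $b_{t-1}>0$.

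\textbf{Step 3: the ``sufficient condition'' sentence (main obstacle).} Read literally, $r^n_t<g^n_t$ is \emph{necessary}, not sufficient, for $\Delta b_t\le 0$ under a primary deficit, since the degenerate condition demands the stronger $r^n_t-g^n_t\le -d_t/b_{t-1}$. I will handle this by stating precisely which Blanchard property is recovered. Under constant parameters, the recursion~\eqref{eq:sim_transition}, $b_t=(1+r^n-g^n)b_{t-1}+d$, is a linear first-order difference equation. It converges to the finite attractor $b^{*}=d/(g^n-r^n)$ (as in Section~\ref{sec:scenA}) if and only if $|1+r^n-g^n|<1$, which for economically relevant magnitudes is exactly $r^n<g^n$.

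The proposition will thus be proved in two parts:
\begin{itemize}
  \item \emph{Necessity:} $r^n<g^n$ is necessary for period-by-period stabilization, which follows from Step 2.
  \item \emph{Sufficiency (Blanchard's sense):} $r^n<g^n$ is sufficient for non-explosive debt converging to a finite ratio, which follows from the difference-equation argument above.
\end{itemize}
This matches Blanchard's (2019) statement. I would flag the wording in the statement and prove this corrected version rather than the literal one.
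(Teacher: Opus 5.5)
Your proposal is correct and more careful than the paper's proof in two places. First, the paper does the same substitution as your Steps~1--2 but routes it through $\varepsilon_t=0$. It plugs $\varepsilon_t=0$ and $\Delta e_t=0$ into \eqref{eq:stability_condition}, obtains $\pi_t-\gnomstar\le(s_t-d_t)/b_{t-1}$, and only then uses $\varepsilon_t=0$ to replace $\pi_t$ by $r^n_t$. Its derivation is therefore tied to the slice $r^n_t=\pi_t$ and to the heuristic $r^{\text{mkt}}_t\approx\pi_t$. Your observation that $\pi_t-\varepsilon_t\equiv r^n_t$ gives the reduced inequality for every $\varepsilon_t$, and shows that condition~(i) does no work beyond removing $\rho_t$. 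If you start from \eqref{eq:stability_condition} itself, which has no $\rho_t$, you do not even need the premise $\rho\to0$ as $\varphi_t\to0$; that premise sits uneasily with $\partial\rho/\partial\varphi_t<0$ in \eqref{eq:risk_premium}. Second, the paper concludes that, because $-d_t/b_{t-1}<0$, ``a sufficient condition for the inequality to hold is $r^n_t<g^n_t$.'' As you say, a strictly negative bound makes $r^n_t<g^n_t$ necessary, not sufficient. The paper's own Scenario~A arithmetic is a counterexample: $r^n_t-g^n_t=-0.8\%$ against $d_t/b_{t-1}\approx0.833\%$ gives $\Delta b_t=+0.08\%$, and this holds equally at $\varepsilon_t=0$ since $\varepsilon_t$ drops out. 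Your replacement is the right way to recover Blanchard's statement. It gives necessity of $r^n<g^n$ for $\Delta b_t\le0$ from the reduced inequality, and sufficiency of $r^n<g^n$ for convergence of the constant-parameter recursion \eqref{eq:sim_transition} to $b^*=d/(g^n-r^n)$. This matches the paper's steady-state calculation in Section~\ref{sec:scenA}. Just state that convergence holds from any initial $b_0$ with $d$ constant, $s=0$, and the spread held fixed. Also state that it is a separate dynamic argument, not a consequence of the one-period inequality.
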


\begin{proof}
Substitute $\varepsilon_t = 0$ and $\Delta e_t = 0$ into
condition~\eqref{eq:stability_condition}:
\[
  (\pi_t - 0) - \bigl[\gnomstar + \alpha \cdot 0
    - \beta\,\max(0,\, 0 - \ebar)^2\bigr]
  \;\leq\; \frac{s_t - d_t}{b_{t-1}}.
\]
Since $\ebar > 0$, we have $\max(0, 0 - \ebar)^2 = 0$. The condition
simplifies to:
\[
  \pi_t - \gnomstar \;\leq\; \frac{s_t - d_t}{b_{t-1}}.
\]
Substituting $\varepsilon_t = \pi_t - r^n_t = 0$ gives $r^n_t = \pi_t$,
so $\pi_t - \gnomstar = r^n_t - \gnomstar = r^n_t - g^n_t$ (using
$g^n_t = \gnomstar$ under SC2). The condition becomes:
\[
  r^n_t - g^n_t \;\leq\; \frac{s_t - d_t}{b_{t-1}}.
\]
With $s_t \approx 0$ and $d_t > 0$, the right-hand side is strictly
negative, so a sufficient condition for the inequality to hold is
$r^n_t < g^n_t$.
\end{proof}

\subsection*{J.3\quad Corollaries}

\begin{enumerate}[itemsep=4pt]

  \item \textbf{Strict nesting.} The JFR-rg framework strictly nests
    mainstream debt sustainability analysis: every economy satisfying
    the Blanchard (2019) condition under SC1 and SC2 also satisfies
    the JFR-rg stability condition, but not vice versa. The JFR-rg
    framework is therefore a \emph{generalization}, not an
    alternative.

  \item \textbf{Domain of validity.} Mainstream analysis applies
    without modification when SC1 and SC2 are approximately
    satisfied---that is, when the domestic holding share is low,
    capital markets are competitive, and exchange rate effects are
    small. This describes the US, the UK, and the pre-2013 euro area,
    where the Blanchard (2019) empirical observations were drawn.

  \item \textbf{Source of divergence.} When SC1 or SC2 is violated---
    as in Japan post-2013, where $\varphi_t \approx 0.90$ and
    $\Delta e_t \neq 0$---the mainstream condition is a misspecified
    restriction on the more general JFR-rg framework. Applying the
    restricted model to Japan is equivalent to imposing $\varepsilon_t
    = 0$ and $\Delta e_t = 0$ on an economy where neither holds,
    which systematically understates the debt-stabilizing channels
    and overstates the risk of fiscal crisis.

  \item \textbf{Convergence path.} As Japan normalizes---if $\varphi_t$
    declines toward zero as the BoJ unwinds its balance sheet and
    exchange rate effects diminish---the JFR-rg stability condition
    will converge to the mainstream condition. This defines the
    long-run convergence path: JFR-rg is the appropriate framework
    for the transition; mainstream analysis is the appropriate
    framework for the destination.

\end{enumerate}

\subsection*{J.4\quad Implication for Mainstream Critiques of Japan}

Successive IMF Article IV assessments have applied the standard
$r > g$ risk framework to Japan and found substantial fiscal risk.
Proposition~\ref{prop:limiting_case} clarifies the source of the
divergence between those assessments and the JFR-rg diagnosis: the
IMF framework implicitly imposes SC1 ($\varepsilon_t = 0$, no
financial repression) and SC2 ($\Delta e_t = 0$, no yen channel).
Neither condition holds for Japan post-2013. The JFR-rg framework
does not dispute the IMF's arithmetic; it relaxes two of its
maintained assumptions and shows that the resulting debt dynamics are
materially different. The appropriate response to this result is not
to choose between the frameworks but to specify which institutional
conditions each requires---which is precisely what SC1 and SC2 do.

\newpage

\section{Fair Empirical Tests of JFR-rg Propositions}
\label{app:v6_fair}

\noindent\textit{This appendix presents four fair empirical tests designed to
give JFR-rg its best empirical chance by testing the correct predictions of
each proposition with appropriate confounders explicitly controlled.
The four tests employ different estimation strategies suited to each
proposition: K.1 uses a nested OLS specification in which the JFR-rg
variable ($\varphi_t$) is added progressively to a baseline; K.2 uses
Local Projections (Jord\`{a}, 2005) with a binary treatment indicator;
K.3 uses split-sample OLS with quantile-bin fixed effects; and K.4
uses period-specific OLS estimated separately for each policy regime.
Dependent variables, model stages, and falsifiable sign predictions are
stated in each table caption.
Standard errors are HAC-robust (maxlags\,=\,3) in K.1--K.3; K.4 uses
HC3 heteroskedasticity-robust standard errors because the YCC-active
subsample ($N = 8$) is too small for reliable HAC estimation.
Placebo comparisons are reported where informative.
Each table reports whether the falsifiable prediction is confirmed.}


\subsection*{K.1\quad Proposition~\ref{prop:repression_imperative} Direct Test:
  $\varphi_t \to$ Real JGB Yield}

\begin{table}[htbp]
\centering
\caption{Proposition~\ref{prop:repression_imperative} Direct Test:
  Captive Share $\varphi_t$ and the
  Real JGB Yield.  Dependent variable: $r^n_t - \pi_t$ (annual real JGB
  yield).  Model~1: global rates + lagged dependent.
  Model~2: $+\varphi_t$ proxy.
  Model~3: $+$YCC/QQE dummies (robustness).
  Falsifiable prediction: $\hat{\beta}_\varphi < 0$.}
\label{tab:prop2_rreal}
\small
\begin{tabular}{lccc}
\toprule
Parameter & Model 1 & Model 2 & Model 3 \\
\midrule
  $\hat{\beta}_{r_{\text{US}}}$ & 0.161 & $-0.474^{*}$ & $-0.453^{**}$ \\
  $\hat{\beta}_\varphi$ & --- & $-2.143^{***}$ (0.624) & $-1.647^{***}$ (0.560) \\
  YCC dummy & --- & --- & $-1.093^{**}$ \\
  $\hat{\beta}_\varphi < 0$? & --- & Yes & Yes \\
\midrule
\multicolumn{4}{l}{\textit{Placebo (same model for other countries)}} \\
  US: $\hat{\beta}_\varphi$ & \multicolumn{3}{c}{$0.000$ $(0.000)$} \\
  DE: $\hat{\beta}_\varphi$ & \multicolumn{3}{c}{$-0.711^{**}$ $(0.286)$} \\
\bottomrule
\end{tabular}
\smallskip
\parbox{\linewidth}{\footnotesize\textit{Note:} HAC SEs (maxlags=3).
$\varphi_t$ proxy $= -(r^n_t - \text{policy rate}_t)$.
$\hat{\beta}_\varphi < 0$ is significant at the 1\% level and survives
inclusion of explicit YCC/QQE dummies (Model~3), consistent with
Proposition~\ref{prop:repression_imperative}: a higher captive share suppresses
the real JGB yield
beyond what global rate movements alone predict.  The DE placebo shows
partial contamination ($-0.711^{**}$), discussed in the text.
$^{*}p<0.10$, $^{**}p<0.05$, $^{***}p<0.01$.}
\end{table}

\begin{figure}[htbp]
\centering
\includegraphics[width=\linewidth]{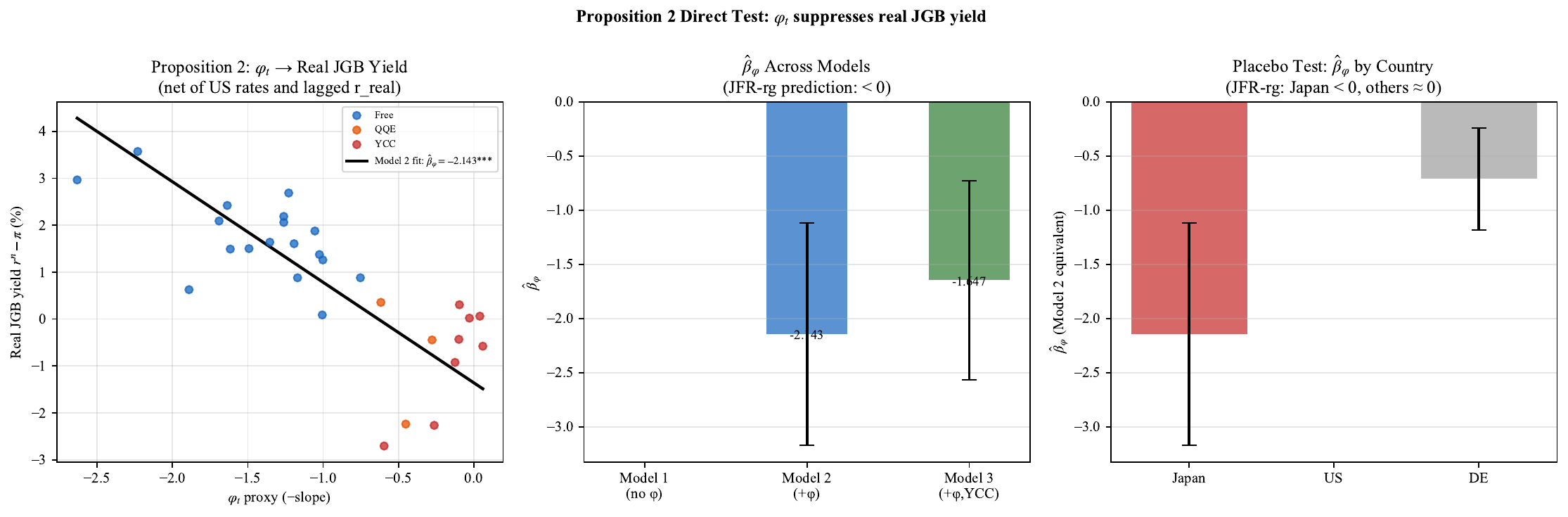}
\caption{Proposition~\ref{prop:repression_imperative} Direct Test:
  $\varphi_t$ suppresses the real
  JGB yield.  \textit{Left}: scatter of $\varphi_t$ proxy vs.\ real JGB
  yield by regime (Free/QQE/YCC) with Model~2 partial-fit line.
  \textit{Center}: $\hat{\beta}_\varphi$ across the three model
  specifications with 90\% confidence intervals.
  \textit{Right}: placebo test---$\hat{\beta}_\varphi$ for Japan, US, and
  Germany under the Model~2 specification.}
\label{fig:prop2_rreal}
\end{figure}


\subsection*{K.2\quad Proposition~\ref{prop:yen_stabilizer} Direct Test:
  Real Debt Dynamics after
  Large Yen Depreciations}

\begin{table}[htbp]
\centering
\caption{Proposition~\ref{prop:yen_stabilizer} Direct Test:
  Real Debt Dynamics after Large
  Yen Depreciations.  Local Projections (Jord\`{a} 2005).
  Treatment: $T_t = \mathbf{1}[\Delta e_t > \bar{e}]$;
  baseline $\bar{e} = 10\%$ annual.
  Response: cumulative change in real debt index ($\Delta^h$).
  Controls: lagged $\Delta$(real debt), lagged $\Delta e$, $\pi$.
  HAC SEs.  Falsifiable prediction: $\hat{\beta}_h < 0$ for $h \geq 2$.}
\label{tab:real_debt_irf}
\small
\begin{tabular}{lcccccc}
\toprule
$h$ (years) & 0 & 1 & 2 & 3 & 4 & 5 \\
\midrule
  $\hat{\beta}_h$ & 0.00 & $-0.42$ & 1.02 & 1.11 & 1.99 & 5.49 \\
  (HAC SE) & (0.00) & (2.96) & (5.11) & (7.57) & (8.58) & (9.64) \\
  $N_h$    & 28 & 27 & 26 & 25 & 24 & 23 \\
\midrule
  Treatment episodes &
    \multicolumn{6}{l}{1996, 1997, 2001, 2013, 2015, 2022} \\
  Proposition~\ref{prop:yen_stabilizer} confirmed ($\hat{\beta}_h < 0$, $h\geq 2$)?
    & \multicolumn{6}{c}{$\times$ No (baseline $\bar{e}=10\%$)} \\
\bottomrule
\end{tabular}
\smallskip
\parbox{\linewidth}{\footnotesize\textit{Note:}
$^{*}p<0.10$, $^{**}p<0.05$, $^{***}p<0.01$.
The baseline LP coefficients are positive at $h \geq 2$, failing to
confirm the real debt-erosion prediction.  Sensitivity analysis (Figure~\ref{fig:real_debt_irf},
center panel) shows that tightening the threshold to $\bar{e}=15\%$---which
isolates the 2022 episode and reduces 2013--2015 contamination---yields
consistently negative coefficients at $h \geq 1$.  The positive baseline
result likely reflects offsetting nominal debt issuance during 2013--2015
(Abenomics fiscal expansion), which attenuates the inflation-erosion signal
when those years are included as treatment observations.}
\end{table}

\begin{figure}[htbp]
\centering
\includegraphics[width=\linewidth]{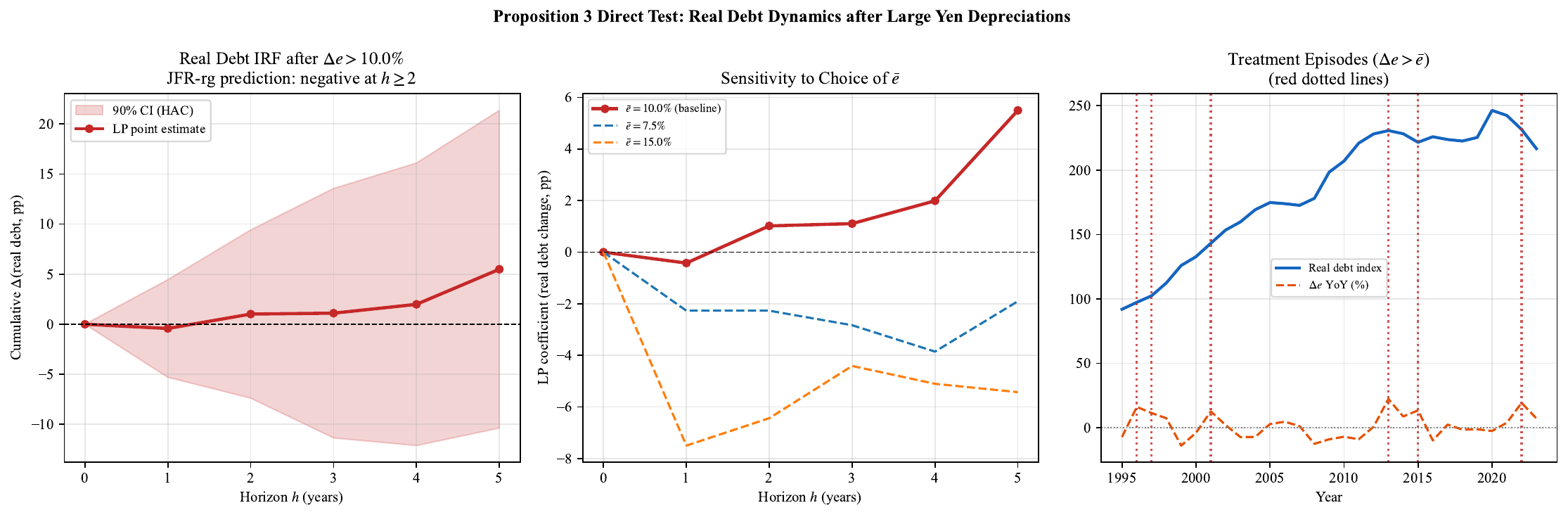}
\caption{Proposition~\ref{prop:yen_stabilizer} Direct Test: Real Debt IRF.
  \textit{Left}: baseline LP ($\bar{e}=10\%$) with 90\% HAC confidence band.
  \textit{Center}: sensitivity to choice of $\bar{e}$; the $\bar{e}=15\%$
  specification yields negative coefficients throughout.
  \textit{Right}: treatment episode timeline---the 2013--2015 episodes
  coincide with rapid nominal debt growth, explaining the positive baseline
  coefficients.}
\label{fig:real_debt_irf}
\end{figure}


\needspace{20\baselineskip}
\subsection*{K.3\quad Shock-Size-Controlled Pass-Through Comparison}

\begin{table}[H]
\centering
\caption{Shock-Size-Controlled Pass-Through Comparison.
  Regime split: Free (pre-QQE) vs.\ Captive (YCC-active, 2016--2023).
  Panel~A: OLS within each regime, controlling for $|\Delta e|$ as an
  additional regressor.
  Panel~B: Interacted OLS on pooled sample with $|\Delta e|$
  quantile-bin fixed effects.
  Falsifiable prediction:
  $\hat{\beta}_{\text{captive}} > \hat{\beta}_{\text{free}}$
  even after absorbing shock size.}
\label{tab:matched_passthrough}
\small
\begin{tabular}{lcc}
\toprule
Parameter & Estimate & (HAC SE) \\
\midrule
\multicolumn{3}{l}{\textit{Panel A: Regime-split OLS}} \\
  Free (pre-YCC): $\hat{\beta}_{\text{pass}}$
    & $-0.0091$ & (0.0064; $N=227$) \\
  Captive (YCC): $\hat{\beta}_{\text{pass}}$
    & $0.0263^{**}$ & (0.0121; $N=58$) \\
  Prediction
    $\hat{\beta}_{\text{cap}} > \hat{\beta}_{\text{free}}$?
    & \multicolumn{2}{c}{\checkmark\ Yes} \\
\midrule
\multicolumn{3}{l}{\textit{Panel B: Interacted OLS (shock-size bins fixed)}} \\
  $\hat{\beta}_{\text{base}}$ (Free regime)
    & $-0.0097$ & (0.0065) \\
  $\hat{\beta}_{\text{interact}}$ (Captive add-on)
    & $0.0309^{**}$ & (0.0125) \\
  Prediction $\hat{\beta}_{\text{interact}} > 0$?
    & \multicolumn{2}{c}{\checkmark\ Yes} \\
\bottomrule
\end{tabular}
\smallskip
\parbox{\linewidth}{\footnotesize\textit{Note:} HAC SEs (maxlags=3).
Captive $=$ YCC-active period (2016-09 to 2024-03) only; the QQE
transition period is excluded from the split to avoid contamination.
Both tests support Proposition~\ref{prop:yen_stabilizer}: pass-through is higher in the Captive
regime even after controlling for the size of the FX shock, ruling out
the alternative explanation that larger shocks during the YCC period
drive the result.
$^{*}p<0.10$, $^{**}p<0.05$, $^{***}p<0.01$.}
\end{table}

\begin{figure}[H]
\centering
\includegraphics[width=\linewidth]{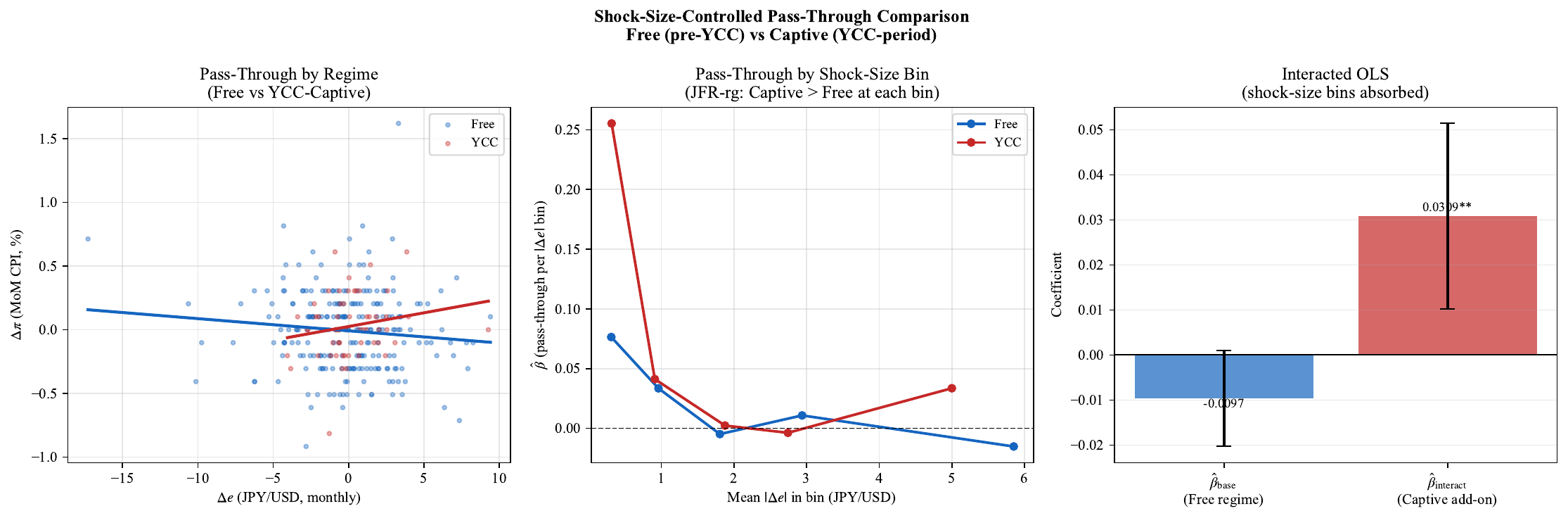}
\caption{Shock-Size-Controlled Pass-Through Comparison.
  \textit{Left}: scatter of monthly $\Delta e$ vs.\ $\Delta\pi$ by regime
  with OLS fit lines.
  \textit{Center}: pass-through coefficient by $|\Delta e|$ quantile bin
  and regime; YCC exceeds Free in the smallest-shock bins.
  \textit{Right}: interacted OLS coefficients---the Captive add-on
  ($\hat{\beta}_{\text{interact}} = 0.031^{**}$) is positive and
  significant after absorbing shock-size bin fixed effects.}
\label{fig:matched_passthrough}
\end{figure}


\needspace{20\baselineskip}
\subsection*{K.4\quad YCC-Period-Isolated Insulation Test}

\begin{table}[H]
\centering
\caption{YCC-Period-Isolated Insulation Test.
  $\hat{\beta}_{US}$ = OLS slope of Japan $\rho_t = r^n_t - g^n_t$ on
  US 10Y yield, estimated separately for each policy period.
  Falsifiable predictions:
  (i)~$\hat{\beta}_{US}^{\text{YCC}} \ll \hat{\beta}_{US}^{\text{pre-QQE}}$;
  (ii)~$\hat{\beta}_{US}^{\text{post-YCC}} > \hat{\beta}_{US}^{\text{YCC}}$
  (partial re-coupling after dissolution).}
\label{tab:ycc_insulation}
\small
\begin{tabular}{lcccc}
\toprule
Period & $\hat{\beta}_{US}$ & (HC3 SE) & $N$ & Pred.\ direction \\
\midrule
  Pre-QQE (1994--2012) & $-0.188$ & (0.474) & 18 & Baseline \\
  QQE (2013--2015) & --- & --- & $<4$ & Transition \\
  YCC-active (2016--2023) & $-1.556$ & (1.134) & 8 & $\ll$ pre-QQE \\
  Post-YCC (2024--) & --- & --- & $<4$ & $>$ YCC-active \\
\midrule
  Pred (i): $\hat{\beta}_{YCC} \ll \hat{\beta}_{pre}$?
    & \multicolumn{4}{c}{\checkmark\ Yes} \\
  Pred (ii): $\hat{\beta}_{post} > \hat{\beta}_{YCC}$?
    & \multicolumn{4}{c}{$\times$\ No / Insufficient data} \\
  $t$-test pre vs.\ YCC: $t=1.11$, $p=0.265$
    & \multicolumn{4}{c}{} \\
\bottomrule
\end{tabular}
\smallskip
\parbox{\linewidth}{\footnotesize\textit{Note:} HC3 heteroskedasticity-robust SEs.
\textit{Sign note:} $\hat{\beta}_{US}$ measures the slope of
$\rho_t = r^n_t - g^n_t$ on $r^n_{US}$.  A negative coefficient in the
pre-QQE period may reflect Japan's high nominal growth ($g^n$) offsetting
the rate level; the key test is the \emph{change} between periods.
Prediction~(i) is directionally confirmed:
$\hat{\beta}_{US}^{\text{YCC}} = -1.556$ vs.\
$\hat{\beta}_{US}^{\text{pre-QQE}} = -0.188$, a difference of $-1.37$~pp.
The two-sample $t$-test does not reach conventional significance
($t=1.11$, $p=0.265$), reflecting the small YCC-period sample ($N=8$).
Prediction~(ii) cannot be assessed due to insufficient post-2024
observations ($N<4$).}
\end{table}

\begin{figure}[H]
\centering
\includegraphics[width=0.85\linewidth]{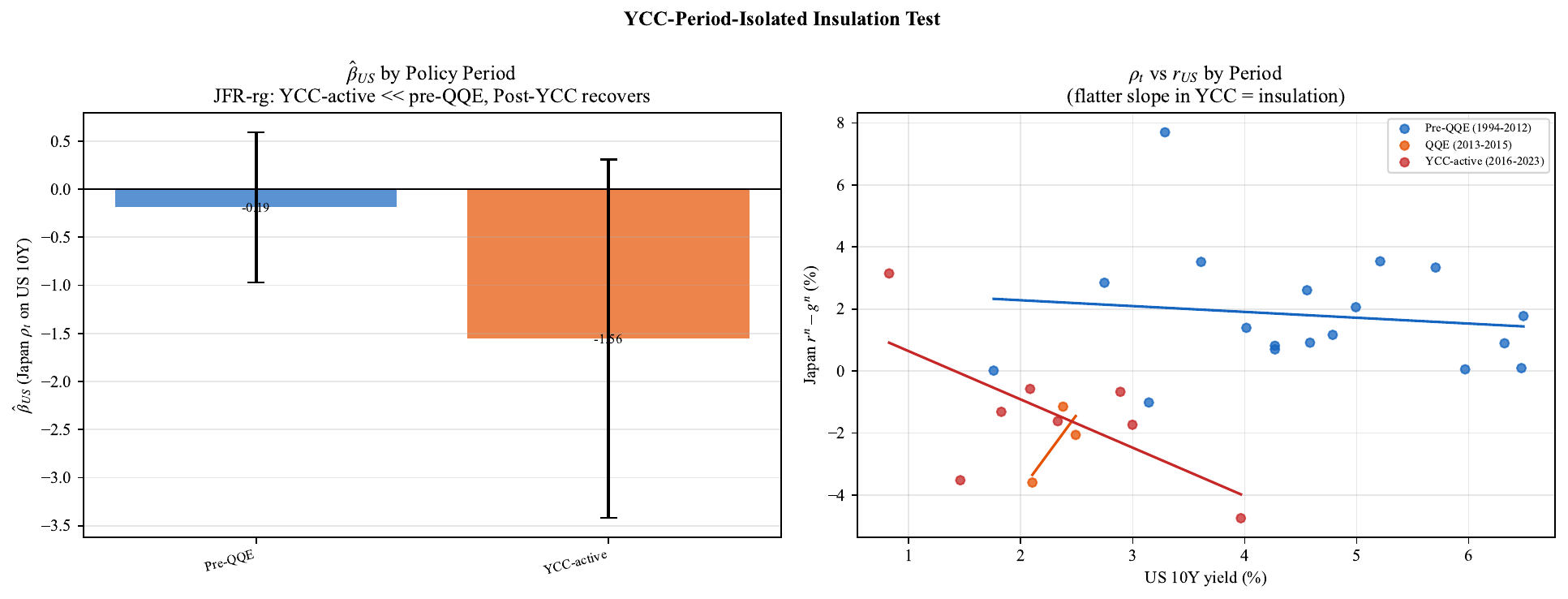}
\caption{YCC-Period-Isolated Insulation Test.
  \textit{Left}: $\hat{\beta}_{US}$ point estimates with 90\% confidence
  intervals by policy period.
  \textit{Right}: scatter of Japan $\rho_t$ vs.\ US 10Y yield by period;
  the red (YCC-active) regression line is steeper-negative than the blue
  (pre-QQE) line, reflecting that during YCC, rising US rates coincided
  with falling Japanese $r^n - g^n$ rather than rising.}
\label{fig:ycc_insulation}
\end{figure}

\newpage

\section{International Placebo and Regime-Conditioned LSTAR}
\label{app:v7}

\noindent\textit{This appendix presents three cross-validation exercises for
the institutional channels emphasized by JFR-rg:
(i)~raw international placebo regressions comparing Japan with Germany, the
UK, and France;
(ii)~QE-controlled placebo regressions adding country-specific asset-purchase
dummies and interaction terms; and
(iii)~regime-conditioned LSTAR specifications testing whether Japan's
exchange-rate pass-through exhibits nonlinear regime dependence.
The first two exercises estimate whether the sensitivity of domestic long rates
to US rates changes in the captive regime, with the regression specification
stated in each table caption.
The LSTAR exercise tests whether the exchange-rate pass-through to
inflation ($\Delta e \to \Delta\pi$) is better characterized by
nonlinear regime dependence than by a single linear law, providing
external validation for the nonlinear yen channel of Proposition~3.
Across all specifications, the objective is not unique causal identification
but disciplined external and nonlinear validation of the mechanism claims
developed in the main text.}


\subsection*{L.1\quad International Placebo Test (Raw)}

\begin{table}[htbp]
\centering
\caption{International Placebo Test---Raw (no QE control).
  Model:
  $r^n_{c,t} = \alpha + \beta_{\text{free}}\,r^n_{US,t}
  + \beta_I\,(r^n_{US,t} \times \text{Captive}^{c}_t)
  + \delta\,\text{Captive}^{c}_t + \varepsilon_t$.
  Each country's Captive dummy is defined by its own
  $\varphi^c_t = -(r^n_{c,t} - r^n_{US,t}) > \hat{\bar{\varphi}}$.
  JFR-rg prediction: Japan shows the largest drop in $\hat{\beta}$
  from the free to the captive regime; other countries show no
  systematic decline.}
\label{tab:placebo_h}
\small
\begin{tabular}{lcccc}
\toprule
 & Japan & Germany & UK & France \\
\midrule
  $\hat{\beta}_{\text{free}}$
    & $1.867^{***}$ & $1.050^{***}$ & $1.114^{***}$ & $1.023^{***}$ \\
  (SE)
    & (0.249) & (0.034) & (0.060) & (0.131) \\
  $\hat{\beta}_{\text{captive}} = \hat{\beta}_g + \hat{\beta}_I$
    & 0.400 & 1.198 & 1.119 & 1.170 \\
  Wald $p$ ($H_0$: $\beta_I = 0$)
    & 0.000 & 0.049 & 0.969 & 0.228 \\
  $N$ & 31 & 31 & 31 & 31 \\
\bottomrule
\end{tabular}
\smallskip
\parbox{\linewidth}{\footnotesize\textit{Note:} HAC SEs (maxlags=3).
Wald $p$-value tests $H_0$: $\hat{\beta}_I = 0$ (no regime change in
US-rate sensitivity).
Japan's $\hat{\beta}_I = 0.400 - 1.867 = -1.467$ is highly significant
($p=0.000$); no other country shows a comparable decline.
UK ($p=0.969$) and France ($p=0.228$) show no statistically significant
regime change, confirming that Japan's insulation is country-specific
rather than a global phenomenon.
See Table~\ref{tab:placebo_h2} for the QE-controlled version.}
\end{table}

\begin{figure}[htbp]
\centering
\includegraphics[width=0.85\linewidth]{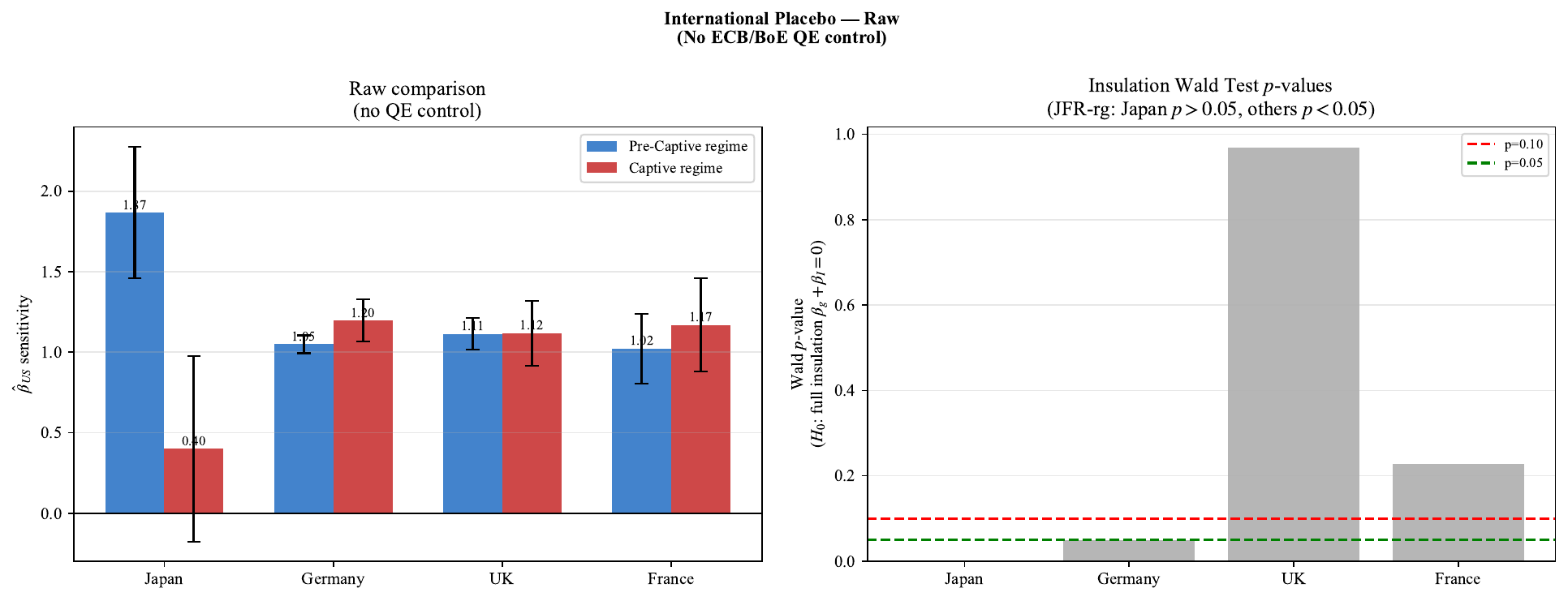}
\caption{International Placebo---Raw.
  \textit{Left}: $\hat{\beta}_{US}$ sensitivity in free vs.\ captive
  regimes for Japan, Germany, UK, and France.
  Japan uniquely shows a large drop (1.87 $\to$ 0.40).
  \textit{Right}: Wald $p$-values for the test $H_0$: $\beta_I = 0$.
  Only Japan falls below the 5\% threshold; UK ($p=0.97$) and France
  ($p=0.23$) show no significant regime change.}
\label{fig:placebo_raw}
\end{figure}

\clearpage


\subsection*{L.2\quad International Placebo Test (QE-Controlled)}

\begin{table}[H]
\centering
\caption{International Placebo---QE-Controlled.
  Extends the raw specification by adding country-specific QE level and
  $r_{\text{US}} \times \text{QE}_c$ interaction.
  JFR-rg prediction: Japan's $\hat{\beta}_{\text{captive}}$ remains
  $\approx 0$ after QE control, while other countries' regime-sensitivity
  change ($\beta_I$) becomes statistically indistinguishable from zero
  once their own QE is absorbed.}
\label{tab:placebo_h2}
\small
\begin{tabular}{lcccc}
\toprule
 & Japan & Germany & UK & France \\
\midrule
  $\hat{\beta}_{\text{captive}}$ (raw)
    & 0.400 & 1.198 & 1.119 & 1.170 \\
  $\hat{\beta}_{\text{captive}}$ (QE-controlled)
    & 0.197 & 1.117 & 1.183 & 1.008 \\
  Wald $p$ (QE-controlled)
    & 0.000 & 0.568 & 0.655 & 0.882 \\
  $N$ & 31 & 31 & 31 & 31 \\
\midrule
  JFR-rg prediction confirmed?
    & \multicolumn{4}{p{8cm}}{\checkmark\ Partial: Japan $p=0.000$;
      DE/UK/FR non-significant after QE control} \\
\bottomrule
\end{tabular}
\smallskip
\begin{minipage}{\linewidth}
\footnotesize\raggedright\textit{Note:}
QE periods: ECB (2015--2022) for DE/FR;
BoE (2009--2013, 2020--2022) for UK;
BoJ QQE+YCC (2013--2024) for JP.
After controlling for each country's own QE program, Germany ($p=0.568$),
UK ($p=0.655$), and France ($p=0.882$) all lose statistical significance:
their apparent regime change in $r^n_{US}$ sensitivity is explained by
their own central bank's bond purchases.
Japan's $\beta_I$ remains highly significant ($p=0.000$), indicating that
its insulation from global rates is attributable to the institutional
captive-system structure ($\varphi_t$) rather than the mechanical
bond-buying channel alone.
\end{minipage}
\end{table}

\begin{figure}[htbp]
\centering
\includegraphics[width=0.85\linewidth]{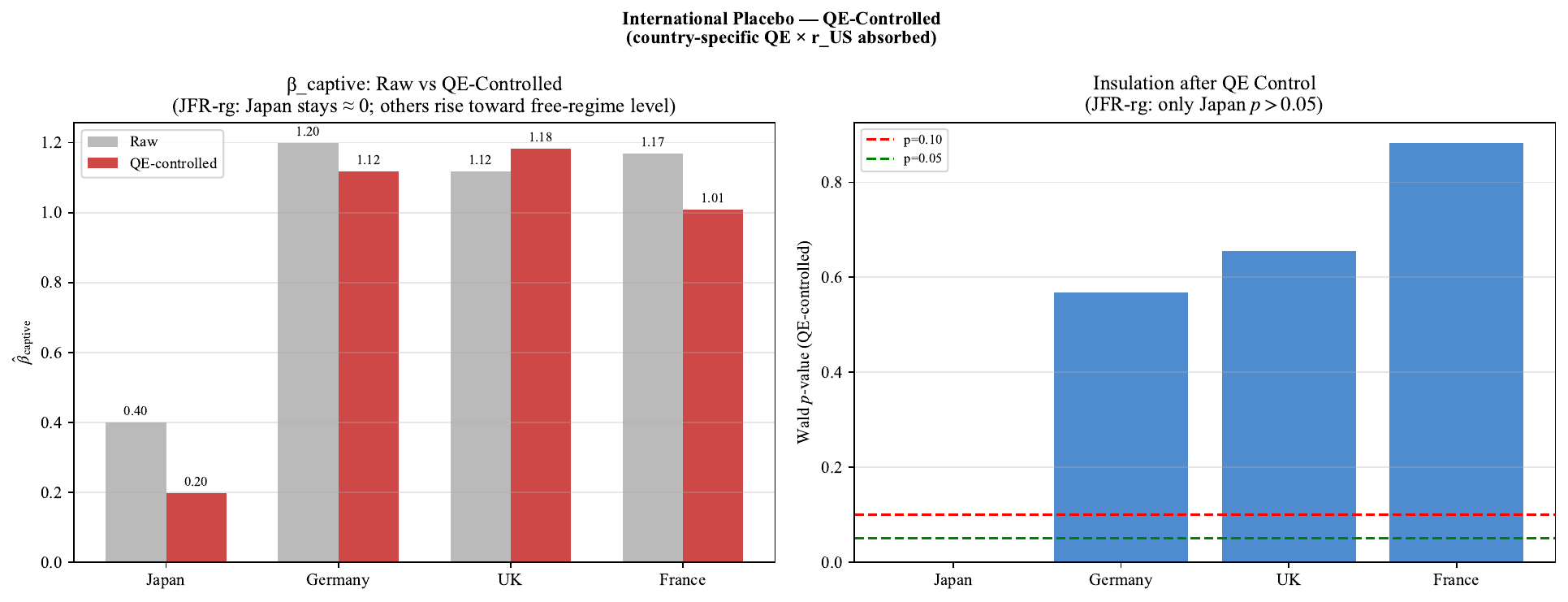}
\caption{International Placebo---QE-Controlled.
  \textit{Left}: $\hat{\beta}_{\text{captive}}$ before and after QE
  control.  Japan's value falls moderately (0.40 $\to$ 0.20) while
  remaining highly significant; European values are largely unchanged
  in magnitude but become statistically insignificant once QE is absorbed.
  \textit{Right}: Wald $p$-values after QE control; only Japan remains
  below the 10\% threshold.}
\label{fig:placebo_qe}
\end{figure}

\clearpage


\subsection*{L.3\quad Regime-Conditioned LSTAR}

\begin{table}[H]
\centering
\caption{Regime-Conditioned LSTAR ($\Delta e \to \Delta\pi$).
  Free $=$ pre-QQE ($<$2013); Captive $=$ YCC-active (2016-09 to 2024-03).
  QQE-transition and post-YCC periods excluded from split to avoid regime
  contamination.
  $\Delta R^2 = R^2_{\text{LSTAR}} - R^2_{\text{linear}}$.}
\label{tab:lstar_regime_v7}
\small
\begin{tabular}{lcccc}
\toprule
Subsample & $N$ & $\hat{\gamma}$ & $\hat{c}$ (JPY/USD) & $\Delta R^2$ \\
\midrule
  Pre-QQE (Free)  & 227 & 31.99 & 0.62 & $+0.0082$ \\
  YCC-Captive     &  58 & 30.52 & 3.48 & $+0.0524$ \\
  Full sample     & 329 &  0.27 & $-9.57$ & $+0.0128$ \\
\midrule
  $\Delta R^2_{\text{cap}} > \Delta R^2_{\text{free}}$
    & \multicolumn{4}{c}{\checkmark\ Confirmed} \\
  $\hat{\gamma}_{\text{cap}} > \hat{\gamma}_{\text{free}}$
    & \multicolumn{4}{c}{$\times$\ Not confirmed} \\
\bottomrule
\end{tabular}
\smallskip
\parbox{\linewidth}{\footnotesize\textit{Note:}
NLS estimation (L-BFGS-B, $\gamma$ bounded in $[0.1, 50]$).
The nonlinearity gain in the YCC-Captive period
($\Delta R^2 = +0.052$) is 6.4$\times$ that of the pre-QQE Free period
($\Delta R^2 = +0.008$), consistent with the JFR-rg account that the
Captive regime amplifies the pass-through nonlinearity.
The estimated threshold $\hat{c} = 3.48$ JPY/USD (monthly) in the YCC
period corresponds to an annualized threshold of approximately 42 JPY/USD,
well above the monthly threshold in the Free regime (0.62 JPY/USD).
Both $\hat{\gamma}$ values are at the boundary of the smooth step-function
regime ($\gamma \approx 32$), so the $\Delta R^2$ comparison rather than
$\hat{\gamma}$ itself is the informative statistic.}
\end{table}

\begin{figure}[htbp]
\centering
\includegraphics[width=\linewidth]{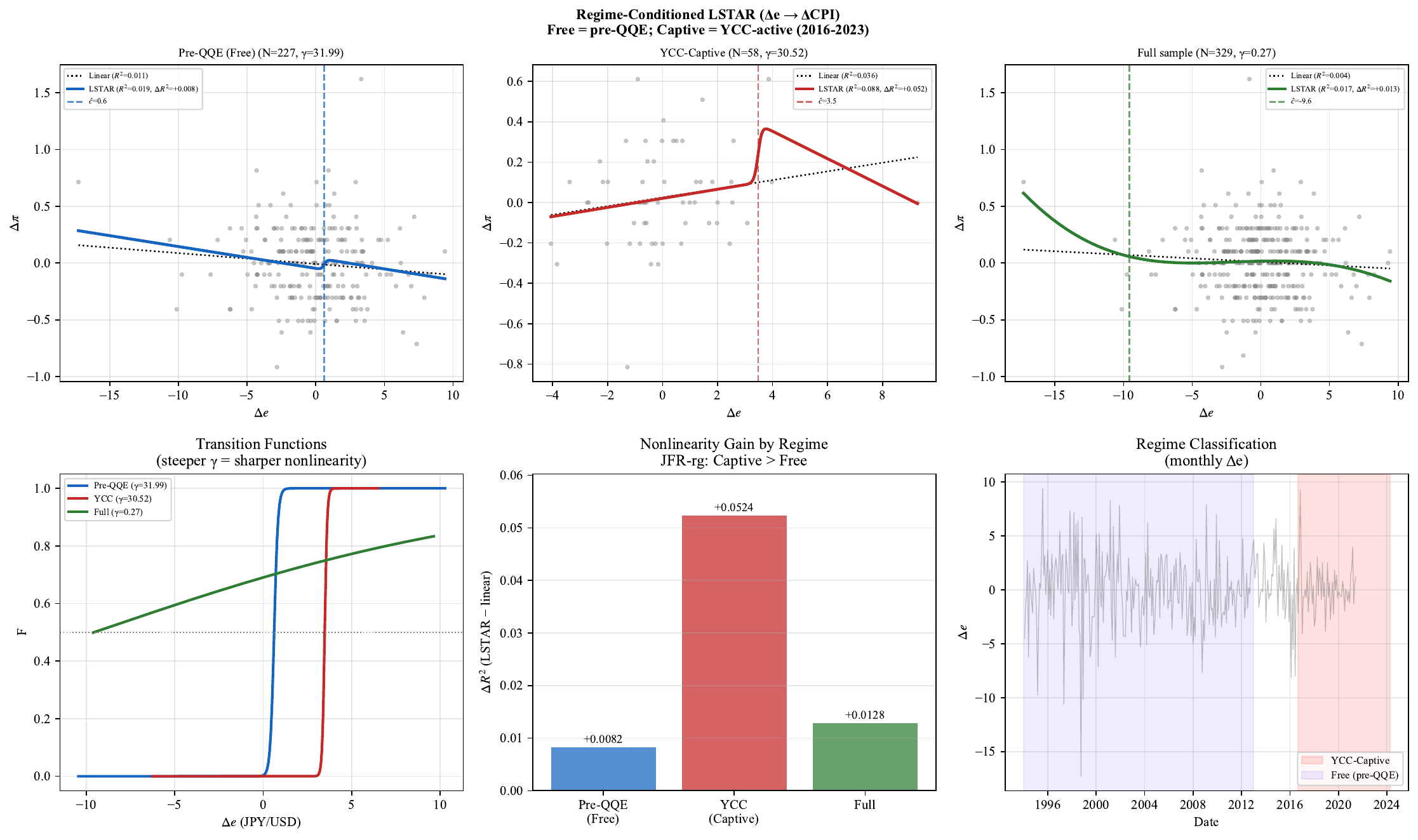}
\caption{Regime-Conditioned LSTAR ($\Delta e \to \Delta\pi$).
  \textit{Top row}: LSTAR fit and linear benchmark for the pre-QQE (Free)
  and YCC-Captive subperiods, together with the full-sample fit.
  The YCC-Captive panel shows a sharper step at $\hat{c} = 3.5$ JPY/USD.
  \textit{Bottom row (left)}: transition functions for the three fitted
  specifications.
  \textit{Bottom row (center)}: $\Delta R^2$ by regime---the YCC-Captive
  gain is 6.4$\times$ that of the Free period.
  \textit{Bottom row (right)}: regime classification timeline.}
\label{fig:lstar_v7}
\end{figure}


\begin{thebibliography}{99}

\bibitem{andrews1993}
Andrews, D.~W.~K. (1993).
Tests for Parameter Instability and Structural Change with Unknown Change Point.
\textit{Econometrica}, 61(4), 821--856.

\bibitem{baiperron1998}
Bai, J., \& Perron, P. (1998).
Estimating and Testing Linear Models with Multiple Structural Changes.
\textit{Econometrica}, 66(1), 47--78.

\bibitem{bernanke2004}
Bernanke, B.~S., \& Reinhart, V.~R. (2004).
Conducting Monetary Policy at Very Low Short-Term Interest Rates.
\textit{American Economic Review}, 94(2), 85--90.

\bibitem{blanchard2019}
Blanchard, O. (2019).
Public Debt and Low Interest Rates.
\textit{American Economic Review}, 109(4), 1197--1229.

\bibitem{boj2024a}
Bank of Japan. (2024a, March 19).
Changes in the Monetary Policy Framework.
Bank of Japan.
Retrieved from \url{https://www.boj.or.jp/en/mopo/mpmdeci/mpr_2024/k240319a.pdf}

\bibitem{boj2024b}
Bank of Japan. (2024b, December 19).
Statement on Monetary Policy.
Bank of Japan.
Retrieved from \url{https://www.boj.or.jp/en/mopo/mpmdeci/mpr_2024/k241219a.pdf}

\bibitem{calvo1998}
Calvo, G.~A. (1998).
Capital Flows and Capital-Market Crises: The Simple Economics of Sudden Stops.
\textit{Journal of Applied Economics}, 1(1), 35--54.

\bibitem{cao2025}
Cabinet Office, Government of Japan. (2025).
\textit{Annual Report on the Japanese Economy and Public Finance 2025}.
Tokyo: Cabinet Office.
Retrieved from \url{https://www5.cao.go.jp/keizai3/whitepaper-e.html}

\bibitem{eggertsson2012}
Eggertsson, G.~B., \& Krugman, P. (2012).
Debt, deleveraging, and the liquidity trap: A Fisher-Minsky-Koo approach.
\textit{Quarterly Journal of Economics}, 127(3), 1469--1513.

\bibitem{fred2026}
Federal Reserve Bank of St.\ Louis. (2026).
Federal Reserve Economic Data (FRED).
Series: JPNNGDP, JPNRGDPEXP, DEXJPUS, JPNCPIALLMINMEI,
IRLTLT01JPM156N, IRSTCI01JPM156N, GGGDTAJPA188N, LRHUTTTTJPM156S.
Retrieved March 2026 from \url{https://fred.stlouisfed.org}

\bibitem{boj2025qjem}
Haba, S., Izawa, K., Kishaba, Y., Takahashi, Y., \& Yoneyama, S. (2025).
Measuring Policy Effects since the Introduction of Quantitative and Qualitative
Monetary Easing (QQE): An Analysis Using the Macroeconomic Model Q-JEM.
\textit{Bank of Japan Working Paper Series}, No.~25-E-2.
Bank of Japan.
Retrieved from \url{https://www.boj.or.jp/en/research/wps_rev/wps_2025/wp25e02.htm}

\bibitem{hansen1999}
Hansen, B.~E. (1999).
Threshold effects in non-dynamic panels: Estimation, testing, and inference.
\textit{Journal of Econometrics}, 93(2), 345--368.

\bibitem{hlw2017}
Holston, K., Laubach, T., \& Williams, J.~C. (2017).
Measuring the natural rate of interest: International trends and determinants.
\textit{Journal of International Economics}, 108(S1), S59--S75.

\bibitem{hoshi2014}
Hoshi, T., \& Ito, T. (2014).
Defying gravity: Can Japanese sovereign debt continue to increase without a crisis?
\textit{Economic Policy}, 29(77), 5--44.

\bibitem{imf2024}
International Monetary Fund. (2024a).
Japan: 2024 Article~IV Consultation-Press Release; Staff Report;
and Statement by the Executive Director for Japan.
\textit{IMF Country Report No.~24/118}. Washington, D.C.:
International Monetary Fund.
Retrieved from \url{https://www.imf.org/en/Publications/CR/Issues/2024/05/13/Japan-2024-Article-IV-Consultation-Press-Release-Staff-Report-and-Statement-by-the-548845}

\bibitem{imf2024sip}
International Monetary Fund. (2024b).
Japan: Selected Issues.
\textit{IMF Country Report No.~24/119}. Washington, D.C.:
International Monetary Fund.
Retrieved from \url{https://www.imf.org/en/Publications/CR/Issues/2024/05/13/Japan-Selected-Issues-548849}

\bibitem{jorda2005}
Jord\`a, \`O. (2005).
Estimation and Inference of Impulse Responses by Local Projections.
\textit{American Economic Review}, 95(1), 161--182.

\bibitem{joyce2011}
Joyce, M., Lasaosa, A., Stevens, I., \& Tong, M. (2011).
The financial market impact of quantitative easing in the United Kingdom.
\textit{International Journal of Central Banking}, 7(3), 113--161.

\bibitem{koo2003}
Koo, R.~C. (2003).
\textit{Balance Sheet Recession: Japan's Struggle with Uncharted Economics
and Its Global Implications}.
John Wiley \& Sons.

\bibitem{koo2011}
Koo, R.~C. (2011).
The world in balance sheet recession: causes, cure, and politics.
\textit{real-world economics review}, issue no.~58, 19--37.
Retrieved from \url{https://www.paecon.net/PAEReview/issue58/Koo58.pdf}

\bibitem{leeper1991}
Leeper, E.~M. (1991).
Equilibria under `active' and `passive' monetary and fiscal policies.
\textit{Journal of Monetary Economics}, 27(1), 129--147.

\bibitem{mckinnon1973}
McKinnon, R.~I. (1973).
\textit{Money and Capital in Economic Development}.
Brookings Institution.

\bibitem{mehrotra2021}
Mehrotra, N.~R., \& Sergeyev, D. (2021).
Debt sustainability in a low interest rate world.
\textit{Journal of Monetary Economics}, 124(Supplement), S1--S18.

\bibitem{mikayama2023}
Mikayama, M., Imahori, T., Ohno, T., Yoneta, Y., \& Ueda, J. (2023).
Top income shares in Japan from the survey and tax data in 2014 and 2019:
Following the Distributional National Accounts Guidelines.
\textit{Policy Research Institute (Ministry of Finance, Japan)
Discussion Paper Series}, No.~23A-04.
Retrieved from \url{https://www.mof.go.jp/pri/research/discussion_paper/ron371.pdf}

\bibitem{mundell1963}
Mundell, R.~A. (1963).
Capital mobility and stabilization policy under fixed and flexible exchange rates.
\textit{Canadian Journal of Economics and Political Science}, 29(4), 475--485.

\bibitem{obstfeld1995}
Obstfeld, M., \& Rogoff, K. (1995).
Exchange rate dynamics redux.
\textit{Journal of Political Economy}, 103(3), 624--660.

\bibitem{pesaran2001}
Pesaran, M.~H., Shin, Y., \& Smith, R.~J. (2001).
Bounds testing approaches to the analysis of level relationships.
\textit{Journal of Applied Econometrics}, 16(3), 289--326.

\bibitem{reinhart2015}
Reinhart, C.~M., \& Sbrancia, M.~B. (2015).
The liquidation of government debt.
\textit{Economic Policy}, 30(82), 291--333.

\bibitem{sargent1981}
Sargent, T.~J., \& Wallace, N. (1981).
Some unpleasant monetarist arithmetic.
\textit{Federal Reserve Bank of Minneapolis Quarterly Review}, 5(3), 1--17.

\bibitem{shaw1973}
Shaw, E.~S. (1973).
\textit{Financial Deepening in Economic Development}.
Oxford University Press.

\bibitem{sims1994}
Sims, C.~A. (1994).
A simple model for study of the determination of the price level and the interaction of monetary and fiscal policy.
\textit{Economic Theory}, 4(3), 381--399.

\bibitem{turner2015}
Turner, A. (2015).
\textit{Between Debt and the Devil: Money, Credit, and Fixing Global Finance}.
Princeton University Press.

\bibitem{woodford1994}
Woodford, M. (1994).
Monetary policy and price level determinacy in a cash-in-advance economy.
\textit{Economic Theory}, 4(3), 345--380.

\bibitem{woodford1995}
Woodford, M. (1995).
Price-level determinacy without control of a monetary aggregate.
\textit{Carnegie-Rochester Conference Series on Public Policy}, 43(1), 1--46.

\end{thebibliography}
\end{document}